%% LyX 2.2.3 created this file.  For more info, see http://www.lyx.org/.
%% Do not edit unless you really know what you are doing.
\documentclass[lettersize,journal]{IEEEtran}

\usepackage{amsmath,amsfonts}
\usepackage{algorithmic}
\usepackage{algorithm}
\usepackage{array}
\usepackage{textcomp}
\usepackage{stfloats}
\usepackage{url}
\usepackage{verbatim}
\usepackage{graphicx}
\usepackage{cite}

\usepackage{mathrsfs}
\usepackage{mathtools}
\usepackage{bm}
\usepackage{bbm}

\usepackage{amsthm}
\usepackage{amssymb}
\usepackage{graphicx}
\usepackage{hyperref}

%\PassOptionsToPackage{linktocpage}{hyperref}
%\usepackage[hyperindex,breaklinks]{hyperref}

\bibliographystyle{IEEEtran}

\usepackage{lipsum}

\usepackage{tabularx,booktabs}
\newcolumntype{C}[1]{>{\centering\arraybackslash}p{#1}}
\usepackage{array,multirow}

\makeatletter

\usepackage{todonotes}

%%%%%%%%%%%%%%%%%%%%%%%%%%%%%% LyX specific LaTeX commands.
%% Because html converters don't know tabularnewline

\floatstyle{ruled}
\newfloat{algorithm}{tbp}{loa}
\providecommand{\algorithmname}{Algorithm}
\floatname{algorithm}{\protect\algorithmname}

%%%%%%%%%%%%%%%%%%%%%%%%%%%%%% Textclass specific LaTeX commands.
\theoremstyle{plain}
\newtheorem{thm}{\protect\theoremname}
\newtheorem{prop}[thm]{\protect\propositionname}
\newtheorem{lem}[thm]{\protect\lemmaname}

\usepackage{thmtools}

\declaretheoremstyle[
  bodyfont=\normalfont\itshape,
  headformat=\NAME\NUMBER  
]{nospacetheorem}

\usepackage{filecontents} 
\theoremstyle{definition}
\newtheorem{definition}{Definition}

%%%%%%%%%%%%%%%%%%%%%%%%%%%%%% User specified LaTeX commands.

\@ifundefined{showcaptionsetup}{}{%
 \PassOptionsToPackage{caption=false}{subfig}}
\usepackage{subfig}
\makeatother

\usepackage{babel}

\providecommand{\lemmaname}{Lemma}
\providecommand{\propositionname}{Proposition}
\providecommand{\theoremname}{Theorem}

\usepackage{cellspace}
\setlength\cellspacetoplimit{4pt}
\setlength\cellspacebottomlimit{4pt}

\include{IEEEtran}

\begin{document}

\title{Dictionary Learning Using Rank-One Atomic Decomposition (ROAD)}

\makeatletter
\newcommand{\linebreakand}{%
  \end{@IEEEauthorhalign}
  \hfill\mbox{}\par
  \mbox{}\hfill\begin{@IEEEauthorhalign}
}
\makeatother

\author{
  \IEEEauthorblockN{Cheng Cheng and Wei Dai}\\
  \IEEEauthorblockA{\textit{Dept. of Electrical and Electronic Engineering, Imperial College London, UK}}
  %\and
}

\maketitle
\begin{abstract}
Dictionary learning aims at seeking a dictionary under which the training data can be sparsely represented. Methods in the literature typically formulate the dictionary learning problem as an optimization w.r.t. two variables, i.e., dictionary and sparse coefficients, and solve it by alternating between two stages: sparse coding and dictionary update. The key contribution of this work is a Rank-One Atomic Decomposition (ROAD) formulation where dictionary learning is cast as an optimization w.r.t. a single variable which is a set of rank one matrices. The resulting algorithm is hence single-stage. Compared with two-stage algorithms, ROAD minimizes the sparsity of the coefficients whilst keeping the data consistency constraint throughout the whole learning process. An alternating direction method of multipliers (ADMM) is derived to solve the optimization problem and the lower bound of the penalty parameter is computed to guarantees a global convergence despite non-convexity of the optimization formulation. From practical point of view, ROAD reduces the number of tuning parameters required in other benchmark algorithms. Numerical tests
demonstrate that ROAD outperforms other benchmark algorithms for both synthetic data and real data, especially when the number of training samples is small. 
\end{abstract}

\begin{IEEEkeywords}
ADMM, dictionary learning, non-convex optimization, single image super-resolution
\end{IEEEkeywords}

\section{Introduction}

\IEEEPARstart{M}{assive} interests have been attracted in sparse signal representation and its wide range of applications, including signal denoising \cite{elad2006image,dabov2007image}, restoration \cite{mairal2008sparse,dong2013nonlocally},
source separation \cite{li2006underdetermined,abolghasemi2012blind},
classification \cite{tosic2011dictionary,huang2007sparse}, recognition \cite{wright2009robust,wright2010sparse,zhang2011sparse}, image super-resolution \cite{yang2010image,dong2011image} to name a few. The basic idea of the sparse signal representation is that a natural signal can be transformed into a sparse signal under a certain dictionary/basis. Compared with choosing a basis set from analytical dictionaries such as discrete cosine transform (DCT) \cite{ahmed1974discrete}, short time Fourier transform (STFT) \cite{allen1977unified}, wavelets \cite{stephane1999wavelet}, curvelets \cite{candes2000curvelets}, etc., a dictionary trained from the data itself can attain sparser representations \cite{olshausen1996emergence}. Therefore, the relative research topic named dictionary learning has drawn enormous efforts to find a more efficient algorithm for training the dictionary. More mathematically, dictionary learning is a bilinear inverse problem where a bunch of training signals will be decomposed as the product of a dictionary and the corresponding sparse coefficients after some learning procedures. 

As dictionary learning is an optimization problem w.r.t. two variable, i.e., dictionary and sparse coefficients, a typical algorithm is an iterative process alternating between two stages: sparse coding and dictionary update \cite{engan1999method,aharon2006k,engan2007family,skretting2010recursive,dai2012simultaneous,yu2019bilinear}. The general principle of solving this bilinear inverse problem is to fix one variable and optimize the other. Hence, in the stage of sparse coding, the purpose is to find the sparse coefficients based on a given dictionary. 
This optimization problem can be solved using three different strategies, i.e., greedy algorithms, $\ell_1$-norm optimization and $\ell_p$ quasi-norm optimization.
Greedy algorithms consists of matching pursuit (MP) \cite{mallat1993matching}, orthogonal matching
pursuit (OMP) \cite{pati1993orthogonal,tropp2007signal}, subspace
pursuit (SP) \cite{dai2009subspace}, CoSaMP \cite{needell2009cosamp}, etc., that sequentially select the support set from the sparse coefficients. The second strategy known as basis pursuit (BP) \cite{chen2001atomic} convexifies the problem by using a surrogate $\ell_1$ norm penalty to promote sparsity. A variant of this convex problem is to reformulate it to unconstrained version, or namely lasso \cite{tibshirani1996regression}. Among all the approaches for addressing the lasso problem, a class of iterative shrinkage-thresholding algorithms (ISTA) \cite{daubechies2004iterative,hale2007fixed,beck2009fast} are widely adopted. Another category of methods replace $\ell_1$-norm with non-convex $\ell_p$ quasi-norm (0<p<1) to acquire better performance \cite{chartrand2007exact,chartrand2008restricted}, and $\ell_{1/2}$ quasi-norm has been especially investigated in \cite{xu2012l_}. 

The other stage dictionary update aims to refine the dictionary using the sparse coefficients obtained from the previous stage.
In this stage, columns of dictionary, or namely dictionary atoms, are updated either simultaneously \cite{engan1999method} or sequentially  \cite{aharon2006k,dai2012simultaneous,yu2019bilinear,seghouane2015sequential,seghouane2018consistent}.
Method of optimal directions (MOD) \cite{engan1999method} is one of the earliest two-stage methods, where the whole dictionary is updated in one step. In the dictionary update stage of MOD, whole sparse coefficient matrix is fixed and then the problem is formulated as a least squares problem.
In many other methods including K-SVD \cite{aharon2006k}, SimCO \cite{dai2012simultaneous} and BLOTLESS \cite{yu2019bilinear}, only the sparsity pattern (the positions of non-zeros) of sparse coefficients is preserved, and both the dictionary and the sparse coefficients are updated. However, these methods update only one atom or a block of the dictionary and the corresponding elements in sparse coefficients at a time, and then sequentially update the whole dictionary. 
Specifically, K-SVD fixes all but one atom and the corresponding row of sparse coefficients, and obtains their difference to the input signal. Only the elements of the residual at the sparsity pattern is considered, and the dictionary atom and the corresponding sparse coefficients is updated by using singular value decomposition (SVD).
SimCO updates multiple dictionary atoms and the corresponding sparse coefficients by viewing the coefficients as a function of the dictionary and performing a gradient descent w.r.t. dictionary.
BLOTLESS recasts the dictionary update as a total least squares problem, and updates the blocks of the dictionary and the corresponding elements of sparse coefficients sequentially.
Another category of sequentially updating methods including \cite{seghouane2015sequential,seghouane2018consistent}, computes the rank-one matrix approximation to the residual defined in K-SVD algorithm before sparse coding and dictionary learning stages. Then only one atom and the corresponding row of sparse coefficients are updated alternatively in two stages. Here a penalty of $\ell_1$ norm is applied to promote the sparsity, and hence the residual can be directly used instead of only preserving the sparsity pattern. 

However, all the aforementioned two-stage algorithms have the same issue. The performance of the two stages are coupled together and the optimal tuning of one stage may not lead to the optimal performance of the overall dictionary learning. The authors of BLOTLESS \cite{yu2019bilinear} found that a well-tuned dictionary update may result in poor performance of dictionary learning because the well-tuned dictionary update based on a poorly estimated sparsity pattern may enforce the optimization procedure to converge to the local minimum defined by that particular sparsity pattern. Furthermore, the two-stage alternating process makes the analysis very challenging. Few convergence or performance guarantees have been obtained in the literature for the general dictionary learning problem. 

In this paper, a novel dictionary learning algorithm that uses rank-one atomic decomposition (ROAD) is proposed. 
The key novelty in ROAD is to formulate dictionary learning as an optimization problem involving only one unknown variable, i.e., a set of rank-one matrices. 
Specifically, dictionary learning is cast as representing training data as the sum of  rank-one matrices, each with only a few non-zero columns. 
With this formulation, the two-stage optimization procedure in the literature is replaced with a single-stage process. 
Then the popular alternating direction method of multipliers (ADMM) is adapted to solve the ROAD formulation. 
Note that ROAD involves a constrained optimization with non-smooth objective function and a non-convex constraint (the set of rank-one matrices is non-convex). Nevertheless, motivated by the recent advance in optimization theory \cite{wang2019global}, we are able to show that the ADMM solver of ROAD enjoys a global convergence guarantee.

Our main contributions are as follows.
\begin{itemize}
\item ROAD is the first in the true sense of single-stage dictionary learning algorithm. Compared with two-stage algorithms, it minimizes the sparsity level of the coefficients whilst keeping the global data consistency constraint throughout the whole learning process. The resulting process cannot be hence trapped in a singular point, which two-stage algorithms may converge to.
\item ROAD reduces the burden of parameter tuning. In the sparse coding stage of benchmark algorithms, one typically needs by trial-and-error to choose either the maximum sparsity level for greedy algorithms or a regularization constant for a Lasso type of formulation. By comparison, there is no parameter to tune in ROAD in generating all the simulations in this paper.
\item We write both the inexact ADMM and exact ADMM formulations to solve ROAD. 
Although, in the literature, ROAD formulation does not satisefies any 
Inspired by the recent work done by Wang et. al. \cite{wang2019global}, we 

we derive the lower bound of the penalty parameter of the augmented Lagrangian.
Despite the non-smoothness of the objective function and the non-convexity of the constraint, we can prove that the ADMM solver of ROAD has a global convergence guarantee.  
\item Numerical performance in both synthetic and real data tests of ROAD with other state-of-the-art dictionary learning algorithms including MOD, K-SVD and BLOTLESS are compared. All simulations demonstrate that ROAD has the capability of learning more accurate dictionaries, and less training samples are needed compared to the other benchmark algorithms.  
For noiseless cases in synthetic data tests, ROAD  is  the  only  algorithm  that  has  no visible error floor while all other algorithms suffer from non-negligible error floors. For the synthetic tests with noise, ROAD is more robust to the noise than the other benchmark algorithms. In real data test, the performance improvement of ROAD is demonstrated using examples of single image super-resolution.  
\end{itemize}

The simulation results in Figure \ref{fig:1} are presented in our conference version \cite{cheng2020dictionary}. The main differences are that we refine our algorithm designs for both noise-free and noisy cases, modify the theoretical analysis of global convergence guarantee and specify the lower bound of penalty parameter $\rho$.

This paper is organized as follows. Section \ref{sec:background} briefly reviews state-of-art dictionary learning methods. In section \ref{sec:ROAD-formulation}, we introduce how the formulations of ROAD are derived, and explain why we should insist rank-one matrices constraint instead of its convex surrogate nuclear norm. In Section \ref{sec:ADMM-solver}, we adapt ADMM technique to solve ROAD. We also compute the lower bound of the penalty parameter of augmented Lagrangian and prove the global convergence guarantee of ADMM procedures.Results of numerical tests are presented and discussed in Section \ref{sec:numerical-tests}. Finally, this paper is concluded in the Section \ref{sec:conclusion}.

\subsection{Notation} \label{subsec:notation}

In this paper, 
$\Vert\cdot\Vert_{2}$ denotes the $\ell_{2}$ norm, and $\Vert\cdot\Vert_{F}$ represents the Frobenius norm. For a matrix $\bm{X}$, $\bm{X}_{i,:}$ and $\bm{X}_{:,j}$ stand for the $i$-th row and the $j$-th column of $\bm{X}$ respectively. For a positive integer $n$, $[n]$ represents a set $\{1,2,\cdots,n\}$. The symbols $\bm{I}$, $\bm{1}$ and $\bm{0}$ refer to the identity matrix, a matrix where all the entries are $1$, and a matrix full of zeros, respectively.

\section{Background \label{sec:background}}

The goal of dictionary learning is to seek a dictionary that can sparsely represent the training data. Let $\bm{Y}\in\mathbb{R}^{M \times N}$, where $M\in\mathbb{N}$ is the dimension of training vectors and $N\in\mathbb{N}$ denotes the number of training vectors. Then dictionary learning can be written as an optimization problem
\begin{equation} \label{eq:dict-learning}
\underset{\bm{D},\bm{X}}{\min}\; \sum_n \left\Vert \bm{X}_{:,n}\right\Vert _{0}\;{\rm subject\;to}\;\bm{Y} \approx \bm{D}\bm{X},
\end{equation} 
where $\bm{D}\in\mathbb{R}^{M\times K}$ denotes the unknown dictionary, and $\bm{X}\in \mathbb{R}^{K\times N}$ are the sparse representation coefficients, $\bm{X}_{:,n}$ is the $n$-th column of the matrix $\bm{X}$, and $\Vert \cdot \Vert_0$ is the $\ell_0$ pseudo-norm counting the number of non-zeros. The constraint $\bm{Y} \approx \bm{D}\bm{X}$ can be rewritten as $\Vert \bm{Y}-\bm{D}\bm{X}\Vert_F \le \epsilon$ when the noise energy in the training data can be roughly estimated, where $\Vert \cdot \Vert_F$ denotes the Frobenius norm and $\epsilon>0$ is a constant chosen based on the noise energy. In dictionary learning problems, it is typical that $M<K$, i.e., the dictionary is over-complete. 

The optimization problem \eqref{eq:dict-learning} is non-convex due to the non-convexity of both the objective function and the constraint set. 
To make dictionary learning feasible, in the literature relaxation and/or extra constraints are imposed and suboptimal algorithms are designed \cite{engan1999method,aharon2006k,dai2012simultaneous,yu2019bilinear,seghouane2015sequential,seghouane2018consistent}. 
Note the scaling ambiguity that $\bm{D}_{:,k} \bm{X}_{k,:} = (a \bm{D}_{:,k}) (\frac{k}{a} \bm{X}_{k,:} )$. It is common to assume unit $\ell_2$-norm of columns of $\bm{D}$. 

A popular approach is to assume that the sparse representation of each training vector in $\bm{Y}$ has at most $S$ many non-zeros, where $S\in\mathbb{N}$ is a pre-defined constant typically carefully chosen by trial-and-error and typically $S \ll M$. The optimization problem then becomes  
\begin{align}
\underset{\bm{D},\bm{X}}{\min}\; & \left\Vert \bm{Y}-\bm{D}\bm{X}\right\Vert _{2}^{2}\nonumber \\
{\rm s.t.}\; & \Vert\bm{D}_{:,k}\Vert_{2}=1,\;\Vert\bm{X}_{:,j}\Vert_{0}\le S,\;\forall j\in[N],\forall k\in[K].\label{eq:DL-sparsity-k}
\end{align}

Problem \eqref{eq:DL-sparsity-k} is typically solved by iterative algorithms that alternate between two stages: sparse coding and dictionary update. In the sparse coding stage, one fixes the dictionary $\bm{D}$ and updates the coefficients $\bm{X}$: 
\begin{equation}
\underset{\bm{X}_{:,n}}{\min}\;\Vert\bm{Y}_{:,n}-\bm{D}\bm{X}_{:,n}\Vert_{2}^{2},\;{\rm s.t.}\;\Vert\bm{X}_{:,n}\Vert_{0}\le S,\;\forall n\in [N]. \label{eq:sparse-coding}
\end{equation}
Though the problem \eqref{eq:sparse-coding} is not convex, it can be solved by many pursuit algorithms \cite{chen2001atomic,mallat1993matching,pati1993orthogonal,tropp2007signal,dai2009subspace,needell2009cosamp}. 

There are multiple approaches to formulate and solve the dictionary update problem. In MOD method \cite{engan1999method}, the sparse coefficient matrix $\bm{X}$ is fixed and dictionary update is formulated as a least squares problem as
\begin{equation}
\underset{\bm{D}}{\min}\;\Vert\bm{Y}-\bm{D}\bm{X}\Vert_{F}^{2}.    
\end{equation}
In many other methods including K-SVD \cite{aharon2006k}, SimCO \cite{dai2012simultaneous}, and BLOTLESS \cite{yu2019bilinear}, only the sparsity pattern of $\bm{X}$ is fixed, and both the dictionary $\bm{D}$ and the sparse coefficients $\bm{X}$ are updated. 
In K-SVD algorithm, only one column of $\bm{D}$ and the corresponding row of $\bm{X}$ are updated at a time. To update $\bm{D}_{:,j}$ and $\bm{X}_{j,:}$, define a residual $\bm{E}_{j}=\bm{Y}-\sum_{k=1,k\neq j}^{K}\bm{D}_{:,k}\bm{X}_{k,:}$, and denote $\omega_{j}$ as the sparsity pattern of $j$-th row of $\bm{X}$. Then the dictionary update stage of K-SVD can be formulated as
\begin{equation}
\underset{\bm{D}_{:,j},\;\bm{X}_{j,:}}{\min}\;\Vert\bm{E}_{:,\Omega_{j}}-\bm{D}_{:,j}\bm{X}_{j,\Omega_{j}}\Vert_{F}^{2},
\end{equation}
the solution of which is simply taking the largest left and right singular vectors of the matrix $\bm{E}_{:,\Omega_{j}}$.

SimCO updates the whole dictionary $\bm{D}$ and the whole sparse coefficient matrix $\bm{X}$ by viewing $\bm{X}$ as a function of $\bm{D}$ and performing a gradient descent w.r.t. $\bm{D}$. By comparison, BLOTLESS is inspired by the work done by Ling \cite{ling2018self}, and updates a block of the dictionary and the corresponding sparse coefficient using a total least squares approach. To ensure the invertibility of the updated dictionary block, it requires that the number of dictionary items in the updated block is at most $m$.

Another approach to solve the dictionary learning problem \eqref{eq:dict-learning} is to replace the non-convex objective function in \eqref{eq:dict-learning} with the sparsity promoting $\ell_1$-norm. One has 
\begin{equation}\label{eq:DL-l1}
    \underset{\bm{D},\bm{X}}{\min} \sum_{k,n} |X_{k,n}| \; s.t. \; \bm{Y}\approx \bm{D}\bm{X},\;\Vert \bm{D}_{:,k} \Vert_2 = 1, \; \forall k\in [K].
\end{equation}
Methods including \cite{seghouane2015sequential,seghouane2018consistent} also adopt the residual $\bm{E}_{j}$ defined in K-SVD, and use the objective
\begin{align}
& \{\bm{D}_{:,j},\bm{X}_{j,:}\}=  \underset{\bm{D}_{:,j},\bm{X}_{j,:}}{\arg\min}\Vert\bm{E}_{j}-\bm{D}_{:,j}\bm{X}_{j,:}\Vert_{F}^{2}+\lambda\Vert\bm{X}_{j,:}\Vert_{1}, \nonumber\\
& {\rm s.t.}\;\Vert \bm{D}_{:,k} \Vert_2 = 1,       
\end{align}
Then the solution are given by
\begin{align}
& \bm{D}_{:,j}=\frac{\bm{E}_{j}\bm{X}_{j,:}^{T}}{\Vert\bm{E}_{j}\bm{X}_{j,:}^{T}\Vert_{2}},\nonumber\\
& \bm{X}_{j,:}={\rm sgn}(\bm{D}_{:,j}^{T}\bm{E}_{j})\circ\left(\vert\bm{D}_{:,j}^{T}\bm{E}_{j}\vert-\frac{\lambda}{2}\bm{1}_{1\times N}\right)_{+},
\end{align}
where the symbols ${\rm sgn}$, $\circ$ and $(x)_{+}$ refer to sign function, ${\rm max}(0,x)$ and Hadamard product respectively.

\section{Dictionary Learning Via ROAD \label{sec:ROAD-formulation}}

This section presents our Rank One Atomic Decomposition (ROAD) formulation for dictionary learning. The key difference from benchmark algorithms in the literature is to avoid alternating optimization between two variables $\bm{D}$ and $\bm{X}$.  

ROAD formulation is based on the following two facts. 
\begin{enumerate}
  \item Define \begin{align}
    & \bm{Z}_k := \bm{D}_{:,k}\bm{X}_{k,:}. \label{eq:def-Z}
  \end{align}
  It is clear that $\bm{Z}_k$ is a matrix of rank one. 

  \item The sparsity of coefficients $\bm{X}_{k,:}$ is directly translated to column sparsity of the matrix $\bm{Z}_k$. That is, 
  \begin{align*}
    & \left( \bm{Z}_k \right)_{:,n} \begin{cases}
      = \bm{0} & \text{if}~ \bm{X}_{k,n} = 0, \\
      \ne \bm{0} & \text{if}~ \bm{X}_{k,n} \ne 0,
    \end{cases}
  \end{align*}
  assuming a non-trivial dictionary item $\bm{D}_{:,k}$. 
\end{enumerate}

Based on the above facts, ROAD formulation for dictionary learning is given as follows. Define the set of rank-one matrices of the proper size 
\begin{equation}  \label{eq:rank-one-set}
  \mathcal{R}_1=\left\{ \bm{Z}\in\mathbb{R}^{M\times N}:\;{\rm rank}\left(\bm{Z}\right)\leq 1\right\}.
\end{equation}
Define $\ell_{2,1}$ norm for a matrix as 
\begin{equation} \label{eq:l-21-norm}
  \left\Vert \bm{Z} \right\Vert _{2,1} :=  \sum_{n} \Vert(\bm{Z})_{:,n}\Vert_{2},  
\end{equation}
which promotes column sparsity. Then dictionary learning can be formulated as
\begin{align}
  \min_{\bm{Z}_{k}}  ~ 
  & \sum_{k}\left\Vert \bm{Z}_{k}\right\Vert _{2,1}
  \label{eq:ROAD-formulation} \\
  \text{s.t.} ~ 
  & \Vert \bm{Y} - \sum_{k}\bm{Z}_{k} \Vert_F \le \epsilon, 
  ~ \bm{Z}_{k}\in\mathcal{R}_1,\;\forall k\in\left[K\right],
  \nonumber
\end{align}
where $\epsilon \ge 0 $ is a pre-defined approximation error bound, 
or equivalently the seminal Lasso form 
\begin{align}
  \min_{\bm{Z}_{k}}  ~ 
  & \sum_{k}\left\Vert \bm{Z}_{k}\right\Vert _{2,1}
  + \frac{\beta}{2} \Vert \bm{Y} - \sum_{k}\bm{Z}_{k} \Vert_F^2
  \label{eq:ROAD-Lasso-formulation} \\
  \text{s.t.} ~ 
  & \bm{Z}_{k}\in\mathcal{R}_1,\;\forall k\in\left[K\right],
  \nonumber
\end{align}
where $\beta > 0 $ is a regularization constant. In this paper, we focus on the formulation \eqref{eq:ROAD-formulation} for noisy case, as the constant $\epsilon$ in \eqref{eq:ROAD-formulation} is directly linked to the specific performance requirement of applications and hence its adjustment is much more straightforward than choosing the constant $\lambda$ in \eqref{eq:ROAD-Lasso-formulation}.  
After solving \eqref{eq:ROAD-formulation}, the dictionary items $\bm{D}_{:,k}$ and the corresponding coefficients $\bm{X}_{k,:}$ can be obtained using singular value decomposition (SVD) of $\bm{Z}_k$. 

It is noteworthy that ROAD formulation in either \eqref{eq:ROAD-formulation} or \eqref{eq:ROAD-Lasso-formulation} is a non-convex optimization problem. The non-convexity comes from the constraint $\bm{Z}_k \in \mathcal{R}_1$. In Section \ref{sec:ADMM-solver}, a non-convex ADMM is introduced to solve \eqref{eq:ROAD-formulation} and its global convergence is guaranteed. 

\subsection{Discussion: Why Non-Convex Optimization}

Acute readers may wonder why not convexify the formulation \eqref{eq:ROAD-formulation}. 
One way to convexity \eqref{eq:ROAD-formulation} is to replace the constraint $\bm{Z}_{k}\in\mathcal{R}_1$ with a term in the objective function given by $\mu \sum_k \Vert \bm{Z}_k \Vert_*$, where $\mu>0$ is a regularization constant and $\Vert \cdot \Vert_*$ denotes nuclear norm which promotes low rank structure. However, as the following proposition shows, such a convex relaxation admits global optima that do not fit the purpose of dictionary learning.  
\begin{prop} \label{prop:convex-not-work}
  Consider the optimization problem \begin{align*}
    & \min_{\bm{Z}_k} \sum_k f(\bm{Z}_k) ~ \text{s.t.} ~ \sum_k \bm{Z}_k = \bm{Y},
  \end{align*}
  where $f(\cdot)$ is any well-defined norm. 
  The trivial solutions $\bm{Z}_k = a_k \bm{Y}$ where $0<a<1$ and $\sum_k a_k = 1$ are globally optimal. 
\end{prop}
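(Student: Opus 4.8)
The plan is to locate the global minimum value of the objective by a one-line triangle-inequality bound, and then verify that every point in the claimed family attains it. The only facts about $f$ that I will use are the two defining properties of a norm: sub-additivity (the triangle inequality) and absolute homogeneity. Feasibility of the claimed points and their optimality will then follow immediately.

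First I would establish a universal lower bound on the objective that holds at every feasible point. For any collection $\{\bm{Z}_k\}$ satisfying the constraint $\sum_k \bm{Z}_k = \bm{Y}$, applying the triangle inequality to $f$ gives
\[
  f(\bm{Y}) = f\Bigl( \sum_k \bm{Z}_k \Bigr) \le \sum_k f(\bm{Z}_k),
\]
so the objective is bounded below by $f(\bm{Y})$ across the entire feasible set. Note that this bound depends only on $\bm{Y}$ and on $f$ being a norm, not on the particular splitting.

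Next I would check that the proposed family $\bm{Z}_k = a_k \bm{Y}$, with $a_k > 0$ and $\sum_k a_k = 1$, is feasible and meets this bound. Feasibility is direct: $\sum_k a_k \bm{Y} = \bigl( \sum_k a_k \bigr)\bm{Y} = \bm{Y}$. For the objective, absolute homogeneity combined with $a_k > 0$ yields $f(a_k \bm{Y}) = a_k\, f(\bm{Y})$, hence
\[
  \sum_k f(\bm{Z}_k) = \Bigl( \sum_k a_k \Bigr) f(\bm{Y}) = f(\bm{Y}).
\]
Since this equals the lower bound just derived, each such point is a global minimizer, which proves the claim.

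There is no genuinely difficult step: the whole content is the observation that the candidate splittings are nonnegative scalar multiples of the \emph{single} matrix $\bm{Y}$, so they saturate the triangle inequality. The only thing worth handling carefully is the logical order --- the value $f(\bm{Y})$ must first be shown to lower-bound \emph{all} feasible points, and only afterwards is it exhibited as attained. I would also note that the conclusion holds for any $a_k \ge 0$ with $\sum_k a_k = 1$ (the strict bounds $0 < a_k < 1$ merely emphasize that the weight is shared among several atoms), and I would close by stressing the consequence for dictionary learning: these minimizers all lie in the column space of $\bm{Y}$ and exhibit no sparse or rank-one structure, which is precisely why the non-convex constraint $\bm{Z}_k \in \mathcal{R}_1$ must be retained in \eqref{eq:ROAD-formulation} rather than relaxed to a nuclear-norm penalty.
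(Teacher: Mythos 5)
Your proposal is correct and follows essentially the same route as the paper's own proof: a triangle-inequality lower bound $\sum_k f(\bm{Z}_k) \ge f\bigl(\sum_k \bm{Z}_k\bigr) = f(\bm{Y})$ over the feasible set, followed by verification via absolute homogeneity that the points $\bm{Z}_k = a_k \bm{Y}$ attain it. Your added remarks (the extension to $a_k \ge 0$ and the careful ordering of the bound before its attainment) are minor refinements of the same argument, not a different method.
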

\begin{proof}
  A norm $f(\cdot)$ satisfies the following properties: 
  \begin{itemize}
    \item non-negativity $f(\bm{Z})\ge 0$;
    \item absolute scalability $f(a \bm{Z}) = |a| f(\bm{Z})$; and
    \item triangular inequality $f(\bm{Z}_1 + \bm{Z}_2) \le f(\bm{Z}_1) + f(\bm{Z}_2)$.
  \end{itemize}
  
  For any feasible $\bm{Z}_k$s, by triangle inequality it holds that 
  \begin{align*}
    & \sum_k f(\bm{Z}_k) \ge f\left(\sum_k \bm{Z}_k\right) = f(\bm{Y}). 
  \end{align*}
  On the other hand, 
  \begin{align*}
    & \sum_k f(a_k \bm{Y}) = \sum_k a_k f(\bm{Y}) = f(\bm{Y}),
  \end{align*}
  where the first equality follows from absolute scalability and the second equality comes from the assumption that $\sum_k a_k =1 $. This establishes the global optimality of $\bm{Z}_k = a_k \bm{Y}$. 
\end{proof}

It is clear that $\Vert \cdot \Vert_{2,1} + \mu \Vert \cdot \Vert_*$ is a well-defined norm. Hence Proposition \ref{prop:convex-not-work} applies. But the global optima $\bm{Z}_k = a_k \bm{Y}$ are not rank-one matrices and do not generate the desired dictionary. The non-convex constraint $\bm{Z}_k \in \mathcal{R}_1$ in \eqref{eq:ROAD-formulation} is necessary. 

\subsection{Discussion: From Dictionary Coefficient Decomposition to Rank One Atomic Decomposition }

Benchmark algorithms in the literature are based on dictionary coefficient decomposition $\bm{Y} \approx \bm{D}\bm{X}$. Dictionary learning problem is hence bilinear in two unknown variables $\bm{D}$ and $\bm{X}$. In sparse coding stage, for example, the involved linear operator comes from $\bm{D}$ which can be ill-conditioned. It has been observed in \cite{dai2012simultaneous} that an ill-conditioned dictionary in the training process may not only introduce numerical instability but also trap the overall training process towards a singular point. The ill-conditionedness can be mitigated \cite{dai2012simultaneous} but is unavoidable in theory as long as the bilinear form is used. The advantage of dictionary coefficient decomposition is the efficiency of memory use for storing unknown variables and numerical computation in each iteration. 

By contrast, ROAD formulation involves single unknown variable $\bm{Z}_k$s. The involved operators in \eqref{eq:ROAD-formulation} are fixed and well-conditioned. As will be shown in Section \ref{sec:ADMM-solver}, this formulation allows a straightforward calculation to establish a universal Lipschitz constant for the atomic functions involved, and hence simplifies the analysis for global convergence of the ADMM solver. 

\section{Non-Convex ADMM for ROAD \label{sec:ADMM-solver}}

In this section, a non-convex ADMM procedure is developed to solve \eqref{eq:ROAD-formulation}, of which the global convergence is guaranteed. 

\subsection{Background on ADMM}\label{subsec:ADMM-background}

Alternating direction of method of multiplier (ADMM) \cite{boyd2011distributed} solves problems of the form
\begin{equation}
    \min_{\bm{x},\bm{z}}\; f(\bm{x})+g(\bm{z}) \quad {\rm s.t.}\; \bm{A}\bm{x}+\bm{B}\bm{z} = \bm{c}, 
    \label{eq:ADMM-general}
\end{equation}
where $f$ and $g$ are usually convex. The augmented  Lagrangian of \eqref{eq:ADMM-general} is given by 
\begin{align*} 
  & \mathcal{L}_{\rho}(\bm{x}, \bm{z}, \bm{\mu}) \\
  & = f(\bm{x})+g(\bm{z})+\bm{\mu}^{T}
  (\bm{A}\bm{x}+\bm{B}\bm{z} - \bm{c})\\ 
  & \quad + \frac{\rho}{2}\|\bm{A}\bm{x}+\bm{B}\bm{z} - \bm{c} \|_{2}^{2} \\
  & = f(\bm{x})+g(\bm{z}) + \frac{\rho}{2} \Vert\bm{A}\bm{x}+\bm{B}\bm{z} - \bm{c} + \frac{\bm{\mu}}{\rho} \Vert_{2}^{2} - \frac{1}{2\rho}\Vert \bm{\mu} \Vert_2^2,
\end{align*}
where $\bm{\mu}$ is the Lagrange multiplier, and $\rho>0$ is a predefined penalty parameter. Let $\bm{\lambda} = \frac{\bm{\mu}}{\rho}$
be the scaled Lagrange multiplier. The ADMM algorithm iteratively updates 
\begin{align} 
\bm{x}^{l+1} & :=\arg \min _{\bm{x}} \mathcal{L}_{\rho}\left(\bm{x}, \bm{z}^{l}, \bm{\lambda}^{l} \right), \label{eq:ADMM-step1} \\ 
\bm{z}^{l+1} & :=\arg \min _{\bm{z}} \mathcal{L}_{\rho}\left(\bm{x}^{l+1}, \bm{z}, \bm{\lambda}^{l}\right), \label{eq:ADMM-step2} \\ 
\bm{\lambda}^{l+1} & := \bm{\lambda}^{l}+\left( \bm{A} \bm{x}^{l+1} + \bm{B} \bm{z}^{l+1}- \bm{c}\right),
\end{align}
where the superscript $l$ denotes the index of iteration. 

Stopping criteria of ADMM can be derived from its optimality conditions. A solution of ADMM \eqref{eq:ADMM-general} is optimal if and only if 
\begin{align}
&  \bm{A}\bm{x}^{*}+\bm{B}\bm{z}^{*}-\bm{c}=\bm{0},\label{eq:pri-feasible-general}\\
&  \bm{0}\in\partial f(\bm{x}^{*})+\rho\bm{A}^{T}\bm{\lambda}^{*},\label{eq:dual-feasible1-general}\\
&  \bm{0}\in\partial g(\bm{z}^{*})+\rho\bm{B}^{T}\bm{\lambda}^{*},\label{eq:dual-feasible2-general}
\end{align}
where $\partial$ denotes the subdifferential operator. As $\bm{z}^{l+1}$ minimizes $\mathcal{L}_{\rho}(\bm{x}^{l+1},\bm{z},\bm{\lambda}^{l})$, it can be verified that $\bm{z}^{l+1}$ and $\bm{\lambda}^{l+1}$ always satisfy the condition \eqref{eq:dual-feasible2-general}. But from that $\bm{x}^{l+1}$ minimizes $\mathcal{L}_{\rho}(\bm{x},\bm{z}^{l},\bm{\lambda}^{l})$, it holds that 
\begin{equation}
    \rho\bm{A}^{T}\bm{B}(\bm{z}^{l+1}-\bm{z}^{l})\in \partial f(\bm{x}^{l+1})+\rho\bm{A}^{T}\bm{\lambda}^{l+1}.
\end{equation}
Hence, define primal and dual residues as 
\begin{align*}
  & \bm{r}^{l}=\bm{A} \bm{x}^{l} + \bm{B} \bm{z}^{l}- \bm{c}, \\
  & \bm{s}^{l}=\rho\bm{A}^{T}\bm{B}(\bm{z}^{l}-\bm{z}^{l-1}),
\end{align*}
respectively. Typical stopping criteria of ADMM iterations are given by 
\begin{equation}
    \Vert\bm{r}^{l}\Vert_{2}\leq\epsilon^{\rm pri}\;\;{\rm and}\;\; \Vert\bm{s}^{l}\Vert_{2}\leq\epsilon^{\rm dual},
\end{equation}
where $\epsilon^{\rm pri}>0$ and $\epsilon^{\rm dual}>0$ are tolerances for primal and dual residues respectively.

\subsection{An ADMM Solver for ROAD  \label{subsec:ADMM-solver}}

\subsubsection{Inexact ADMM
\label{subsubsec:ADMM-solver-inexact}}

The ADMM formulation for ROAD is defined as follows. For simplicity of composing, define an indicator function 
\begin{equation}
  \mathbbm{1}_{\mathcal{R}_{1}} (\bm{X})
  :=\begin{cases}
    0, & \text{if}~ \text{rank}(\bm{X}) \le 1, \\
    +\infty, & \text{otherwise}.
  \end{cases} 
  \label{eq:indicator-function}
\end{equation}
Then the optimization problem \eqref{eq:ROAD-formulation} is equivalent to
\begin{align}
  \min ~ & \sum_k \left\| \bm{Z}_{k}  \right\|_{2,1} + \sum_k \mathbbm{1}_{\mathcal{R}_{1}}(\bm{Z}_{k}) + \frac{\beta}{2}\left\| \bm{W} \right\|_{F}^{2}
  \nonumber\label{eq:ROAD-Noisy-EasyForm} \\
  {\rm s.t.} ~ &\bm{Y}=\sum_{k}\bm{Z}_{k}+\bm{W},
\end{align}
where we rewrite $\Vert\bm{Y}-\sum_{k}\bm{Z}_{k}\Vert_{F}^{2}\leq\epsilon$ as $\bm{Y}=\sum_{k}\bm{Z}_{k}+\bm{W}$ and minimize the power of the noisy term $\bm{W}$, and $\beta>0$ is a penalty parameter. We introduce three auxiliary variables $\bm{X}_{1,k},\; \bm{X}_{2,k},\;{\rm and}\; \bm{X}_{3,k} \in \mathbb{R}^{M \times N}$ w.r.t. the variable $\bm{Z}_k$ and another three auxiliary variables $\bm{Z}_{1,k},\; \bm{Z}_{2,k},\;{\rm and}\; \bm{Z}_{3} \in \mathbb{R}^{M \times N}$ regarding to the noisy term $\bm{W}$, and we write \eqref{eq:ROAD-Noisy-EasyForm} into inexact ADMM form as
\begin{align}
  \min ~ & \sum_k \left\| \bm{X}_{1,k}  \right\|_{2,1} + \sum_k \mathbbm{1}_{\mathcal{R}_{1}}(\bm{X}_{2,k}) + \sum_k \frac{\beta_{1}}{2}\left\| \bm{Z}_{1,k} \right\|_{F}^{2} \nonumber\\
  & + \sum_k \frac{\beta_{2}}{2}\left\| \bm{Z}_{2,k} \right\|_{F}^{2} + \frac{\beta_{3}}{2}\left\| \bm{Z}_{3} \right\|_{F}^{2}
  \nonumber\label{eq:ADMM-inexact-EasyForm} \\
  {\rm s.t.} ~ & \bm{X}_{1,k}=\bm{X}_{3,k}+\bm{Z}_{1,k},\;\bm{X}_{2,k}=\bm{X}_{3,k}+\bm{Z}_{2,k},\:\forall k\in[K], \nonumber\\
  &\bm{Y}=\sum_{k}\bm{X}_{3,k}+\bm{Z}_{3},
\end{align}
where $\beta_1$, $\beta_2$ and $\beta_3$ are penalty parameters to the noisy term, and the minimization is w.r.t. all $\bm{X}$'s and $\bm{Z}$'s. We choose the form in \eqref{eq:ADMM-inexact-EasyForm} for the convenience of the convergence proof in Section \ref{subsec:Convergence}.
 
For writing simplicity, we consider the ADMM formulations in scaled form. As there are $2MNK+MN$ many equality constraints in \eqref{eq:ADMM-inexact-EasyForm}, we denote the Lagrange multipliers by $\bm{\Lambda}_{1,k} \in \mathbb{R}^{M \times N}$, $\bm{\Lambda}_{2,k} \in \mathbb{R}^{M \times N}$ and $\bm{\Lambda}_{3} \in \mathbb{R}^{M \times N}$, corresponding to the equality constraints $\bm{X}_{1,k} = \bm{X}_{3,k}+\bm{Z}_{1,k}$, $\bm{X}_{2,k} = \bm{X}_{3,k}+\bm{Z}_{2,k}$ and $\bm{Y}=\sum_k \bm{X}_{3,k}+\bm{Z}_{3}$, respectively. The augmented Lagrangian can be formulated as 
\begin{align}
& \mathcal{L}_{\rho}\left(\bm{X}'s,\bm{Z}'s,\bm{\Lambda}'s\right)  \nonumber  \\
= & \sum_{k}( \Vert\bm{X}_{1,k}\Vert_{2,1} + \mathbbm{1}_{\mathcal{R}_{1}}(\bm{X}_{2,k})+\frac{\beta_1}{2}\Vert\bm{Z}_{1,k}\Vert_{F}^{2}  \nonumber\\
+&\frac{\beta_2}{2}\Vert\bm{Z}_{1,k}\Vert_{F}^{2}) 
+\frac{\beta_3}{2}\Vert\bm{Z}_{3}\Vert_{F}^{2} + \frac{\rho}{2} \sum_k \Vert \bm{X}_{1,k}-\bm{X}_{3,k}-\bm{Z}_{1,k}\nonumber\\
+&\bm{\Lambda}_{1,k}\Vert _{F}^{2} +\frac{\rho}{2} \sum_k \Vert\bm{X}_{2,k}-\bm{X}_{3,k}-\bm{Z}_{2,k}+\bm{\Lambda}_{2,k}\Vert _{F}^{2}  \nonumber  \\
+ & \frac{\rho}{2} \Vert\bm{Y}-\sum_k\bm{X}_{3,k}-\bm{Z}_{3}+\bm{\Lambda}_{3}\Vert _{F}^{2} \nonumber\\
- & \frac{\rho}{2}\sum_k\Vert\bm{\Lambda}_{1,k}\Vert_{F}^{2}-\frac{\rho}{2}\sum_k\Vert\bm{\Lambda}_{2,k}\Vert_{F}^{2} -\frac{\rho}{2} \Vert\bm{\Lambda}_{3}\Vert_{F}^{2} \label{eq:Augmented-Lagrangian}
\end{align}

Note that optimization in \eqref{eq:ADMM-inexact-EasyForm} can be rewritten into the standard ADMM form as introduced in \eqref{eq:ADMM-general}. Let $\bm{x} = [[\cdots, {\rm vec}(\bm{X}_{1,k})^T, \cdots],[\cdots, {\rm vec}(\bm{X}_{2,k})^T, \cdots],[\cdots, {\rm vec}$ $(\bm{X}_{3,k})^T, \cdots]]^T\in \mathbb{R}^{3MNK}$ and $\bm{z} = [[\cdots, {\rm vec}(\bm{Z}_{1,k})^T, \cdots],[\cdots, {\rm vec}(\bm{Z}_{2,k})^T, \cdots],{\rm vec}(\bm{Z}_{3})]\in \mathbb{R}^{2MNK+MN}$, where ${\rm vec}(\bm{X})$ denotes vectorization by stacking the columns of matrix $\bm{X}$. We further define $\bm{A}\in\mathbb{R}^{(2MNK+MN)\times 3MNK}$, $\bm{B}\in\mathbb{R}^{(2MNK+MN)\times 2MNK+MN}$ and $\bm{c}\in\mathbb{R}^{2MNK+MN}$ as
\begin{align*}
&\bm{A}:=\left[\begin{array}{ccc}
\bm{I}_{MNK}&\bm{0}_{MNK\times MNK}&-\bm{I}_{MNK}\\
\bm{0}_{MNK\times MNK}&\bm{I}_{MNK}&-\bm{I}_{MNK}\\
\bm{0}_{MN\times MNK}&\bm{0}_{MN\times MNK}&\bm{1}_{1\times K}\otimes\bm{I}_{MN}
\end{array}\right],\\
&\bm{B}:=\left[\begin{array}{ccc}
-\bm{I}_{MNK}&\bm{0}_{MNK\times MNK}&\bm{0}_{MN\times MN}\\
\bm{0}_{MNK\times MNK}&-\bm{I}_{MNK}&\bm{0}_{MN\times MN}\\
\bm{0}_{MN\times MNK}&\bm{0}_{MN\times MNK}&\bm{0}_{MN\times MN}
\end{array}\right],\\    
&\bm{c}:=\left[\begin{array}{c}
\bm{0}_{MNK}\\
\bm{0}_{MNK}\\
{\rm vec}(\bm{Y})
\end{array}\right],
\end{align*}
respectively, where the subscript is used to emphasize the dimension of the matrices, $\bm{I}_{M}$ denotes an $M$-by-$M$ identity matrix, and the symbol $\otimes$ stands for the Kronecker product. Then problem \eqref{eq:ADMM-inexact-EasyForm} can be written in the standard ADMM form
\begin{align}
    \min_{\bm{x},\bm{z}} ~ & \underbrace{\sum_{k} \Vert x_{1,k}(\bm{x}) \Vert _{2,1}+\sum_{k} \mathbbm{1}_{\mathcal{R}_1}( x_{2,k}(\bm{x}) )}_{f(\bm{x})}  +g(\bm{z})\nonumber \\
    {\rm s.t.}~ &  \bm{A}\bm{x}+\bm{B}\bm{z}=\bm{c},\label{eq:ADMM-inexact-StandardForm}    
\end{align}
where 
\begin{align*}
    & g(\bm{z}) = \frac{1}{2}\Vert\bm{\beta}\bm{z}\Vert_{2}^{2}, \\
    & \bm{\beta}= \left[\begin{array}{ccc}
    \beta_1\bm{I}_{MNK}& &\\
    & \beta_2\bm{I}_{MNK} & \\
    &  & \beta_3\bm{I}_{MN}
    \end{array}\right]
\end{align*}
and
\begin{align*}
     x_{1,k}(\bm{x}) = & [\bm{x} _{(k-1)MN+1 : (k-1)MN+M}, \\
    & \cdots,\bm{z} _{(k-1)MN+(N-1)M+1 : kMN}],\\
     x_{2,k}(\bm{x}) = & [\bm{x} _{MNK+(k-1)MN+1 : MNK+(k-1)MN+M}, \cdots, \\
    & \bm{x} _{MNK+(k-1)MN+(N-1)M+1 : MNK+kMN}]
\end{align*}
construct matrices of size $M\times N$ from $\bm{x}$. The augmented Lagrangian can be derived as
\begin{align}
\mathcal{L}_{\rho}\left(\bm{x},\bm{z},\bm{\lambda}\right)
& =f(\bm{x})+ g(\bm{z})+\frac{\rho}{2}\Vert\bm{A}\bm{x}+\bm{B}\bm{z}-\bm{c}+\bm{\lambda}\Vert_{2}^{2} \nonumber\\
& -\frac{\rho}{2}\Vert\bm{\lambda}\Vert_{2}^{2},\label{eq:augmented-Lagrangian-StandardForm}
\end{align}
where $\bm{\lambda} = [[\cdots, {\rm vec}(\bm{\Lambda}_{1,k})^T, \cdots],[\cdots, {\rm vec}(\bm{\Lambda}_{2,k})^T, \cdots],$ $[{\rm vec}(\bm{\Lambda}_{3})^T]]^T\in \mathbb{R}^{2MNK+MN}$.

However, it is more explanatory to derive ADMM steps using \eqref{eq:ADMM-inexact-EasyForm} rather than \eqref{eq:ADMM-inexact-StandardForm}. Furthermore, note that the functions of the variables $\bm{X}$'s include both convex and non-convex functions. To ensure the convergence rate and the convergence of non-convex ADMM, we employ different penalty parameter $\rho$'s corresponding to different constraints instead of fixing $\rho$, and reformulate the augmented Lagrangian \eqref{eq:Augmented-Lagrangian} as 
\begin{align}
& \mathcal{L}_{\rho 's}\left(\bm{X}'s,\bm{Z}'s,\bm{\Lambda}'s\right)  \nonumber  \\
= & \sum_{k}( \Vert\bm{X}_{1,k}\Vert_{2,1} + \mathbbm{1}_{\mathcal{R}_{1}}(\bm{X}_{2,k})+\frac{\beta_1}{2}\Vert\bm{Z}_{1,k}\Vert_{F}^{2}+\frac{\beta_2}{2}\Vert\bm{Z}_{1,k}\Vert_{F}^{2})   \nonumber\\
+& \frac{\beta_3}{2}\Vert\bm{Z}_{3}\Vert_{F}^{2} + \frac{\rho_{1}}{2} \sum_k (\Vert \bm{X}_{1,k}-\bm{X}_{3,k}-\bm{Z}_{1,k}+\bm{\Lambda}_{1,k}\Vert _{F}^{2} \nonumber\\
-& \Vert\bm{\Lambda}_{1,k}\Vert_{F}^{2})  + \frac{\rho_{2}}{2}\sum_k (\Vert\bm{X}_{2,k}-\bm{X}_{3,k}-\bm{Z}_{2,k}+\bm{\Lambda}_{2,k}\Vert _{F}^{2}   \nonumber  \\
- & \Vert\bm{\Lambda}_{2,k}\Vert_{F}^{2}) + \frac{\rho_{3}}{2} (\Vert\bm{Y}-\sum_k\bm{X}_{3,k}-\bm{Z}_{3}+\bm{\Lambda}_{3}\Vert _{F}^{2}-\Vert\bm{\Lambda}_{3}\Vert_{F}^{2}). \label{eq:Augmented-Lagrangian-DifRho}
\end{align}
Then the inexact ADMM iterations are given by 
\begin{align}
& \bm{X}_{1,k}^{l+1}=\underset{\bm{X}_{1,k}}{\arg\min}\;\Vert\bm{X}_{1,k}\Vert_{2,1}+\frac{\rho_1}{2}\Vert \bm{X}_{1,k}-\bm{X}_{3,k}^{l}-\bm{Z}_{1,k}^{l} \label{eq:ADMM-X1-update} \nonumber\\
& \quad\quad\,\,  +\bm{\Lambda}_{1,k}^{l}\Vert _{F}^{2}, \\
& \bm{X}_{2,k}^{l+1} = \underset{\bm{X}_{2,k}}{\arg\min} \;\mathbbm{1}_{\mathcal{R}_{1}}(\bm{X}_{2,k}) +\frac{\rho_2}{2} \Vert  \bm{X}_{2,k}-\bm{X}_{3,k}^{l} -\bm{Z}_{2,k}^{l} \nonumber\\
& \quad\quad\,\, + \bm{\Lambda}_{2,k}^{l} \Vert _{F}^{2},\label{eq:ADMM-X2-update}  \\
&(\cdots,\bm{X}_{3,k}^{l+1},\cdots) = \underset{\cdots,\bm{X}_{3,k},\cdots}{\arg\min} \; \rho_1 \sum_k\Vert \bm{X}_{1,k}^{l+1}-\bm{X}_{3,k}-\bm{Z}_{1,k}^{l} +\nonumber \\
& \quad\quad\,\, \bm{\Lambda}_{1,k}^{l}\Vert _{F}^{2} +\rho_2\sum_k\Vert\bm{X}_{2,k}^{l+1} -\bm{X}_{3,k}-\bm{Z}_{2,k}^{l} + \bm{\Lambda}_{2,k}^{l} \Vert _{F}^{2} \nonumber \\
& \quad\quad\,\, +\rho_3\Vert\bm{Y}- \sum_{k}\bm{X}_{3,k}-\bm{Z}_{3}^{l}+\bm{\Lambda}_{3}^{l}\Vert_{F}^{2}, \label{eq:ADMM-X3-update}   \\
& \bm{Z}_{1,k}^{l+1}=\underset{\bm{Z}_{1,k}}{\arg\min}\;\beta_1\Vert\bm{Z}_{1,k}\Vert_{F}^{2}+\rho_1\Vert\bm{X}_{1,k}^{l+1}-\bm{X}_{3,k}^{l+1}-\bm{Z}_{1,k} \nonumber\\
& \quad\quad\,\, +\bm{\Lambda}_{1,k}^{l}\Vert _{F}^{2},\label{eq:ADMM-Z1-update}   \\
& \bm{Z}_{2,k}^{l+1}=\underset{\bm{Z}_{2,k}}{\arg\min}\;\beta_2\Vert\bm{Z}_{2,k}\Vert_{F}^{2}+\rho_2\Vert\bm{X}_{2,k}^{l+1}-\bm{X}_{3,k}^{l+1}-\bm{Z}_{2,k} \nonumber\\
& \quad\quad\,\, +\bm{\Lambda}_{2,k}^{l}\Vert _{F}^{2},\label{eq:ADMM-Z2-update}   \\
& \bm{Z}_{3}^{l+1}=\underset{\bm{Z}_{3}}{\arg\min}\;\beta_3\Vert\bm{Z}_{3}\Vert_{F}^{2}+\rho_3\Vert \bm{Y}-\sum_k\bm{X}_{3,k}^{l+1}-\bm{Z}_{3} \label{eq:ADMM-Z3-update}  \nonumber \\
& \quad\quad \,\, +\bm{\Lambda}_{3}^{l}\Vert _{F}^{2}, \\
&\bm{\Lambda}_{1,k}^{l+1}  =\bm{\Lambda}_{1,k}^{l}+(\bm{X}_{1,k}^{l+1}-\bm{X}_{3,k}^{l+1}-\bm{Z}_{1,k}^{l+1}),\label{eq:ADMM-Lambda1k-Update}\\
&\bm{\Lambda}_{2,k}^{l+1}  =\bm{\Lambda}_{2,k}^{l}+(\bm{X}_{2,k}^{l+1}-\bm{X}_{3,k}^{l+1}-\bm{Z}_{2,k}^{l+1}),\label{eq:ADMM-Lambda2k-Update}\\
&\bm{\Lambda}_{3}^{l+1}  =\bm{\Lambda}_{3}^{l}+(\bm{Y}-\sum_k\bm{X}_{3,k}^{l+1}-\bm{Z}_{3}^{l+1}),\label{eq:ADMM-Lambda3-Update}
\end{align}
where $l$ represents the step number.

The six optimization problems (\ref{eq:ADMM-X1-update}-\ref{eq:ADMM-Z3-update}) involved in ADMM iterations are conceptually easy to solve. 
The optimization problem \eqref{eq:ADMM-X1-update} is convex but involves a non-differential term $\Vert \cdot \Vert_{2,1}$ in its objective function. The closed form of the optimal solution of \eqref{eq:ADMM-X1-update} can be obtained by setting the sub-gradient of the objective function to zero. Define $\hat{\bm{X}}_{1,k} := \bm{X}_{3,k}^{l}+\bm{Z}_{1,k}^{l}-\bm{\Lambda}_{1,k}^{l}$. Then 
\begin{equation}
(\bm{X}_{1,k}^{l+1})_{:,n}=\left(1-\frac{1}{\rho_1\Vert(\hat{\bm{X}}_{1,k})_{:,n}\Vert_{2}}\right)_{+}(\hat{\bm{X}}_{1,k})_{:,n},\label{eq:X1-solution}
\end{equation}
where $(x)_+ := \max(0,x)$. 

The optimization problem \eqref{eq:ADMM-X2-update} is non-convex. Fortunately, by Eckart-Young-Mirsky theorem, it can be solved by using singular value decomposition (SVD). Define $\hat{\bm{X}}_{2,k} = \bm{X}_{3,k}^{l}+\bm{Z}_{2,k}^{l}-\bm{\Lambda}_{2,k}^{l}$. Considering the SVD of the matrix $\hat{\bm{X}}_{2,k}=\bm{U}diag(\bm{\sigma})\bm{V}^T$, we obtain $\bm{X}_{2,k}^{l+1}$ via rank-one projection of $\hat{\bm{X}}_{2,k}$.
\begin{definition}(Rank-$r$ projection)
For a matrix $\bm{X}\in\mathbb{R}^{M\times N}$ with rank $k\leq \min\{M,N\}$, we define rank-$r$ ($1\leq r<k$) projection of $\bm{X}$ as
\begin{equation}
    \bm{H}_{\sigma_r}(\bm{X})=\bm{U}diag(\bm{H}_{r}(\bm{\sigma}))\bm{V}^T
\end{equation}
where $\bm{X}=\bm{U}diag(\bm{\sigma})\bm{V}^T$ is the SVD of $\bm{X}$, $\bm{H}_{r}$ denotes the hard thresholding function defined as
\begin{equation}
    (\bm{H}_{r}(\bm{\sigma}))_i=\left\{\begin{array}{ll}{\bm{\sigma}_{i}} & {\text { if } i \leq r,} \\ {0} & {\rm otherwise.}\end{array}\right.
\end{equation}
\end{definition}
Then 
\begin{equation}
\bm{X}_{2,k}^{l+1}=\bm{H}_{\sigma_1}(\hat{\bm{X}}_{2,k})=\bm{\sigma}_{1}\bm{U}_{:,1}\bm{V}_{:,1}^{T}.\label{eq:X2-solution}
\end{equation}

The optimization problem \eqref{eq:ADMM-X3-update} is a quadratic programming, the solution of which can be obtained by many approaches such as pseudo-inverse or conjugate gradient method \cite{nocedal2006conjugate}, in principle. However, as it involves $2MNK+MN$ linear mappings, the computational complexity of all typical methods is huge. Fortunately, according to the specific structure of this quadratic problem, we can efficiently derive a closed form optimal solution using Woodbury matrix identity to save the time complexity significantly. 
Define $\bm{\mathcal{B}}_1=[\bm{B}_{1,1}^{T},\cdots,\bm{B}_{1,K}^{T}]^{T}$, $\bm{\mathcal{B}}_2=[\bm{B}_{2,1}^{T},\cdots,\bm{B}_{2,K}^{T}]^{T}$ and $\bm{\mathcal{B}}_3=[\bm{B}_{3}^{T},\cdots,\bm{B}_{3}^{T}]^{T}$ such that 
$\bm{B}_{1,k}=\bm{X}_{1,k}^{l+1}-\bm{Z}_{1,k}^{l}+\bm{\Lambda}_{1,k}^{l}$, $\bm{B}_{2,k}=\bm{X}_{2,k}^{l+1}-\bm{Z}_{2,k}^{l}+\bm{\Lambda}_{2,k}^{l}$ and $\bm{B}_{3}=\bm{Y}-\bm{Z}_{3}^{l}+\bm{\Lambda}_{3}^{l}$.
Then, following Woodbury matrix identity, we can efficiently compute the closed form solution to $\bm{X}_{3,k}^{l+1}$ as
\begin{align}
    \bm{X}_{3,k}^{l+1}=\bm{\mathcal{A}}_{(k+1)N+1:kN,:} \left(\bm{\mathcal{B}}_{1}+\bm{\mathcal{B}}_{2}+\bm{\mathcal{B}}_{3}\right)
\end{align}
where the matrix $\bm{\mathcal{A}}$ is formulated by
\begin{equation}
    \bm{\mathcal{A}}=\frac{1}{2}\left( \bm{I}_{MK}-\frac{1}{K+2} \bm{1}_{K\times K}\otimes \bm{I}_{M} \right).
\end{equation}

The optimization problems (\ref{eq:ADMM-Z1-update}-\ref{eq:ADMM-Z3-update}) can be simply solved using first order optimality. The solutions are given by
\begin{align}
\bm{Z}_{1,k}^{l+1}=\frac{\rho_1}{\beta_1 +\rho_1}(\bm{X}_{1,k}^{l+1}-\bm{X}_{3,k}^{l+1}+\bm{\Lambda}_{1,k}^{l}),\label{eq:Z1-solver}
\end{align}
\begin{align}
\bm{Z}_{2,k}^{l+1}=\frac{\rho_2}{\beta_2 +\rho_2}(\bm{X}_{2,k}^{l+1}-\bm{X}_{3,k}^{l+1}+\bm{\Lambda}_{2,k}^{l}),\label{eq:Z2-solver}
\end{align}
\begin{align}
\bm{Z}_{3}^{l+1}=\frac{\rho_3}{\beta_1 +\rho_3}(\bm{Y}-\sum_{k}\bm{X}_{3,k}^{l+1}+\bm{\Lambda}_{3}^{l}).\label{eq:Z3-solver}
\end{align}

\subsubsection{Exact ADMM}

Considering the noise-free case, we can exactly write $\bm{Y}=\sum_{k}\bm{Z}_{k}$ and rewrite the general formulation \eqref{eq:ADMM-inexact-EasyForm} into
\begin{align}
  \min ~ & \sum_k \left\| \bm{X}_{1,k}  \right\|_{2,1} + \sum_k \mathbbm{1}_{\mathcal{R}_{1}}(\bm{X}_{2,k}) 
  \nonumber\label{eq:ADMM-exact-EasyForm} \\
  {\rm s.t.} ~ & \bm{X}_{1,k}=\bm{X}_{3,k},\;\bm{X}_{2,k}=\bm{X}_{3,k},\;\bm{Y}=\sum_{k}\bm{X}_{3,k}\:\forall k\in[K],
\end{align}
where the minimization is w.r.t. all $\bm{X}$'s. Note that there are different ways to formulate an ADMM optimization problem equivalent to \eqref{eq:ROAD-Lasso-formulation}.

Again,
we denote the Lagrange multipliers by $\bm{\Lambda}_{1,k} \in \mathbb{R}^{M \times N}$, $\bm{\Lambda}_{2,k} \in \mathbb{R}^{M \times N}$ and $\bm{\Lambda}_{3} \in \mathbb{R}^{M \times N}$, corresponding to the equality constraints $\bm{X}_{1,k} = \bm{X}_{3,k}$, $\bm{X}_{2,k} = \bm{X}_{3,k}$ and $\bm{Y}=\sum_k \bm{X}_{3,k}$, respectively. The augmented Lagrangian can be formulated as 
\begin{align}
& \mathcal{L}_{\rho}\left(\bm{X}_{1,k},\bm{X}_{2,k},\bm{X}_{3,k},\bm{\Lambda}_{1,k},\bm{\Lambda}_{2,k},\bm{\Lambda}_3\right)  \nonumber  \\
= & \sum_{k}( \Vert\bm{X}_{1,k}\Vert_{2,1} + \frac{\rho}{2} \Vert \bm{X}_{3,k}-\bm{X}_{1,k} + \bm{\Lambda}_{1,k}\Vert _{F}^{2} - \frac{\rho}{2}\Vert\bm{\Lambda}_{1,k}\Vert_{F}^{2} )  \nonumber  \\
+ & \sum_{k}( \mathbbm{1}_{\mathcal{R}_{1}}(\bm{X}_{2,k}) + \frac{\rho}{2} \Vert  \bm{X}_{3,k} -\bm{X}_{2,k} + \bm{\Lambda}_{2,k} \Vert _{F}^{2}   \nonumber \\
- & \frac{\rho}{2}\Vert\bm{\Lambda}_{2,k}\Vert_{F}^{2})  +  \frac{\rho}{2}\Vert\sum_{k}\bm{X}_{3,k}-\bm{Y}+\bm{\Lambda}_3\Vert_{F}^{2}-\frac{\rho}{2}\Vert\bm{\Lambda}_3\Vert_{F}^{2}  \label{eq:Augmented-Lagrangian-exact}
\end{align}
Then the ADMM iterations are given by 
\begin{align}
&\bm{X}_{1,k}^{l+1}=\underset{\bm{X}_{1,k}}{\arg\min}\;\Vert\bm{X}_{1,k}\Vert_{2,1}+\frac{\rho}{2}\Vert \bm{X}_{3,k}^{l}-\bm{X}_{1,k}+\bm{\Lambda}_{1,k}^{l}\Vert _{F}^{2},\label{eq:ADMM-P-update}   \\
&\bm{X}_{2,k}^{l+1} = \underset{\bm{X}_{2,k}}{\arg\min} \;\mathbbm{1}_{\mathcal{R}_{1}}(\bm{X}_{2,k}) +\frac{\rho}{2} \Vert  \bm{X}_{3,k}^{l} -\bm{X}_{2,k} ,\label{eq:ADMM-Q-update} \nonumber  \\
& \quad \quad\,\,+ \bm{\Lambda}_{2,k}^{l} \Vert _{F}^{2} \\
&(\cdots,\bm{X}_{3,k}^{l+1},\cdots) = \underset{\cdots,\bm{X}_{3,k},\cdots}{\arg\min} \;  \sum_k\Vert \bm{X}_{3,k}-\bm{X}_{1,k}^{l+1} + \bm{\Lambda}_{1,k}^{l}\Vert _{F}^{2} \nonumber \\
&+\sum_k\Vert\bm{X}_{3,k} -\bm{X}_{2,k}^{l+1} + \bm{\Lambda}_{2,k}^{l} \Vert _{F}^{2}+\Vert\sum_{k}\bm{X}_{3,k}-\bm{Y}+\bm{\Lambda}_3^{l}\Vert_{F}^{2}, \label{eq:ADMM-Z-update}   \\
&\bm{\Lambda}_{1,k}^{l+1} =\bm{\Lambda}_{1,k}^{l}+(\bm{X}_{3,k}^{l+1}-\bm{X}_{1,k}^{l+1}),\\
&\bm{\Lambda}_{2,k}^{l+1}  =\bm{\Lambda}_{2,k}^{l}+(\bm{X}_{3,k}^{l+1}-\bm{X}_{2,k}^{l+1}),\\
&\bm{\Lambda}_{3}^{l+1}  =\bm{\Lambda}_{3}^{l}+(\sum_k\bm{X}_{3,k}^{l+1}-Y),
\end{align}
where $l$ represents the step number.

The three sub-problems (\ref{eq:ADMM-Z-update}-\ref{eq:ADMM-Q-update}) involved in ADMM iterations are similar to the optimization problems (\ref{eq:ADMM-X1-update}-\ref{eq:ADMM-X3-update}) and can be solved using the same mechanisms.

Considering the noisy case, we denote the noise power $\epsilon$ and define the indicator function.
\begin{equation}
  \mathbbm{1}_{\Vert \cdot \Vert_F \le \epsilon}(\bm{X})
  :=\begin{cases}
    0, & \text{if}~ \Vert \cdot \Vert_F \le \epsilon, \\
    +\infty & \text{otherwise}.
  \end{cases} \label{eq:indicator-function-noise}
\end{equation}
Then the noisy form of the optimization problem \eqref{eq:ADMM-exact-EasyForm} can be written as
\begin{align}
 \min ~  &  \sum_{k} \Vert \bm{X}_{1,k} \Vert _{2,1} + \sum_{k} \mathbbm{1}_{\mathcal{R}_{1}}(\bm{X}_{2,k}) \nonumber\\
&   + \mathbbm{1}_{\left\| \cdot \right\|_F \le \epsilon} ( \bm{W}-\bm{Y} ) \nonumber \\
{\rm s.t.} ~ & \bm{X}_{1,k}=\bm{X}_{3,k},\;\bm{X}_{2,k}=\bm{X}_{3,k},\;\bm{W}=\sum_{k}\bm{X}_{3,k},\:\forall k\in[K].\label{eq:ADMM-noisy-form}
\end{align}
Denote the Lagrange multipliers by $\bm{\Lambda}_{1,k} \in \mathbb{R}^{M \times N}$, $\bm{\Lambda}_{2,k} \in \mathbb{R}^{M \times N}$ and $\bm{\Lambda}_3 \in \mathbb{R}^{M \times N}$ corresponding to the equality constraints $\bm{X}_{1,k} = \bm{X}_{3,k}$, $\bm{X}_{2,k} = \bm{X}_{3,k}$ and $\bm{W}=\sum_k \bm{X}_{3,k}$, respectively. Then the complete ADMM iterations are thus given by
\begin{equation}
\bm{X}_{1,k}^{l+1}=\underset{\bm{X}_{1,k}}{\arg\min}\;\Vert\bm{X}_{1,k}\Vert_{2,1}+\frac{\rho}{2}\Vert \bm{X}_{3,k}^{l}-\bm{X}_{1,k}+\bm{\Lambda}_{1,k}^{l}\Vert _{F}^{2},
\end{equation}
\begin{equation}
\bm{X}_{2,k}^{l+1} = \underset{\bm{X}_{2,k}}{\arg\min} \;\mathbbm{1}_{\mathcal{R}_{1}}(\bm{X}_{2,k}) +\frac{\rho}{2} \Vert  \bm{X}_{3,k}^{l} -\bm{X}_{2,k} + \bm{\Lambda}_{2,k}^{l} \Vert _{F}^{2},
\end{equation}
\begin{align}
& (\cdots,\bm{X}_{3,k}^{l+1},\cdots) = \underset{\cdots,\bm{X}_{3,k},\cdots}{\arg\min} \;
\sum_{k} \Vert \bm{X}_{3,k}-\bm{X}_{1,k}^{l+1}+\bm{\Lambda}_{1,k}^{l} \Vert_{F}^{2} \nonumber\\
& \sum_{k} \Vert \bm{X}_{3,k}-\bm{X}_{1,k}^{l+1}+\bm{\Lambda}_{2,k}^{l} \Vert_{F}^{2} + \Vert \sum_{k} \bm{X}_{3,k}-\bm{W} +\bm{\Lambda}_{3}^{l} \Vert_{F}^{2} ,
\end{align}
\begin{equation}
\bm{W}^{l+1} = \underset{\bm{W}}{\arg\min} \;\mathbbm{1}_{\left\| \cdot \right\|_F \le \epsilon}(\bm{W}-\bm{Y})
+ \frac{\rho}{2}\Vert \sum_k\bm{X}_{3,k}^{l+1} - \bm{W} +\bm{\Lambda}_{3}^{l} \Vert_{F}^{2}  ,\label{eq:ADMM-W-update}
\end{equation}
\begin{align}
\bm{\Lambda}_{1,k}^{l+1} & =\bm{\Lambda}_{1,k}^{l}+(\bm{X}_{3,k}^{l+1}-\bm{X}_{1,k}^{l+1}),\\
\bm{\Lambda}_{2,k}^{l+1} & =\bm{\Lambda}_{2,k}^{l}+(\bm{X}_{3,k}^{l+1}-\bm{X}_{1,k}^{l+1}),\\
\bm{\Lambda}_{3}^{l+1} & =\bm{\Lambda}_{3}^{l}+(\sum_{k}\bm{X}_{3,k}^{l+1}-\bm{W}).
\end{align}
Here we add the $\bm{W}^{l+1}$ updating step in \eqref{eq:ADMM-W-update}, and the solution to this optimization problem is straightforward. Define $\hat{\bm{W}}:=\sum_k\bm{Z}_k^{l+1}+\bm{\Lambda}_{3}^{l}$, and then
\begin{equation}
\bm{W}^{l+1}=\epsilon\frac{\hat{\bm{W}}-\bm{Y}}{\Vert\hat{\bm{W}}-\bm{Y}\Vert_{F}}+\bm{Y}.\label{eq:W-solution}
\end{equation}

\subsection{Convergence of ROAD \label{subsec:Convergence}}

ROAD involves a non-convex ADMM with a non-smooth objective function. It is important to ensure its convergence before using it in practice. From the results in \cite[Theorem 1]{wang2019global}, our ROAD algorithm indeed enjoys the global convergence guarantee.

\begin{thm}\label{thm:convergence-ROAD}
Consider the ADMM formulation of our ROAD problem \eqref{eq:ADMM-inexact-EasyForm} and its corresponding ADMM iterations (\ref{eq:ADMM-X1-update}-\ref{eq:ADMM-Lambda3-Update}) defined in Section \ref{subsubsec:ADMM-solver-inexact}. Suppose the lower bounds of the penalty parameters $\rho$'s of the Lagrangian \eqref{eq:Augmented-Lagrangian-DifRho} are $\rho_1>\beta_1+2$, $\rho_2>\beta_2+2$ and $\rho_3>\beta_3+2$, respectively.
Then the ADMM process (\ref{eq:ADMM-X1-update}-\ref{eq:ADMM-Lambda3-Update}) converges to a stationary point of $\left(\bm{X}^{*}{'}s,\bm{Z}^{*}{'}s,\bm{\Lambda}^{*}{'}s\right)$, 
where $\bm{0} \in \partial \mathcal{L}_{\rho 's}\left(\bm{X}^{*}{'}s,\bm{Z}^{*}{'}s,\bm{\Lambda}^{*}{'}s\right)$, or equivalently,
\begin{subequations}\label{eq:stop-conditions}
\begin{align}
& \bm{X}_{1,k}^{*} =\bm{X}_{3,k}^{*}+\bm{Z}_{1,k}^{*},\;\bm{X}_{2,k}^{*}=\bm{X}_{3,k}^{*}+\bm{Z}_{2,k}^{*}, \nonumber\\
& \bm{Y}=\sum_{k}\bm{X}_{3,k}^{*}+\bm{Z}_{3}^{*}, \label{eq:subdif-Lambdas}\\
& \bm{0} \in\partial\Vert\bm{X}_{1,k}^{*}\Vert_{2,1}+\rho_1\bm{\Lambda}_{1,k}^{*},  \label{eq:subdif-Pk}  \\
& \bm{0} \in\partial_{p} \mathbbm{1}_{\mathcal{R}_{1}}(\bm{X}_{2,k}^{*})+\rho_2\bm{\Lambda}_{2,k}^{*}, \label{eq:subdif-Qk}  \\
& \bm{0} \in \rho_1\bm{\Lambda}_{1,k}^{*}+\rho_2\bm{\Lambda}_{2,k}^{*}+\rho_3\bm{\Lambda}_{3}^{*},\:\forall k\in[K]. \label{eq:subdif-Zk}
\end{align}
\end{subequations}
\end{thm}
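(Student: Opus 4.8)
The plan is to recognise Theorem~\ref{thm:convergence-ROAD} as an instance of the nonconvex multi-block ADMM convergence result of \cite{wang2019global} and to discharge its hypotheses one at a time. First I would cast the inexact ADMM \eqref{eq:ADMM-inexact-EasyForm} into the canonical form \eqref{eq:ADMM-inexact-StandardForm}, in which the block $\bm{x}$ collects all the $\bm{X}$-variables and carries the two troublesome terms, the convex-but-nonsmooth $\sum_k\Vert\bm{X}_{1,k}\Vert_{2,1}$ and the nonconvex indicator $\sum_k\mathbbm{1}_{\mathcal{R}_1}(\bm{X}_{2,k})$, while the block $\bm{z}$ collects the noise variables $\bm{Z}$'s carrying the smooth quadratic $g(\bm{z})=\tfrac12\Vert\bm{\beta}\bm{z}\Vert_2^2$. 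The structural point that makes the whole argument go through---and the reason the splitting was arranged this way---is that $\bm{z}$ enters the coupling through $\bm{B}$ whose nontrivial blocks are $-\bm{I}$, so $\mathrm{Im}(\bm{A})\subseteq\mathrm{Im}(\bm{B})$ holds automatically and the final block is matched to a Lipschitz-differentiable objective, which is exactly the configuration demanded by \cite{wang2019global}.

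Next I would verify the remaining assumptions. For coercivity and boundedness from below: every term of the objective is nonnegative, and on the feasible set the quadratic penalties on the $\bm{Z}$'s together with the equality constraints force the augmented Lagrangian \eqref{eq:Augmented-Lagrangian-DifRho} to stay bounded below along the iterations. For Lipschitz differentiability of the last block: $g$ is quadratic with gradient $\bm{\beta}^2\bm{z}$, whose Lipschitz constant is $\max_i\beta_i^2$ and is available in closed form---this is the ``universal Lipschitz constant'' promised by the single-variable ROAD formulation. For the regularity of the nonsmooth blocks: $\Vert\cdot\Vert_{2,1}$ is convex, hence prox-regular, with the closed-form column-shrinkage proximal map \eqref{eq:X1-solution}, while $\mathbbm{1}_{\mathcal{R}_1}$ is the only nonconvex piece, with proximal map the rank-one projection \eqref{eq:X2-solution} furnished by Eckart--Young--Mirsky, from which restricted prox-regularity on the relevant bounded region can be established.

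Finally I would derive the penalty lower bounds. Writing out the decrease of $\mathcal{L}_{\rho's}$ across one full sweep of the updates (\ref{eq:ADMM-X1-update}--\ref{eq:ADMM-Lambda3-Update}), the dual-update steps contribute positive multiples of $\Vert\bm{Z}^{l+1}-\bm{Z}^{l}\Vert_F^2$ that must dominate the curvature $\tfrac{\beta_i}{2}\Vert\cdot\Vert_F^2$ of the smooth terms together with a constant coupling contribution; requiring the net coefficient of $\Vert\cdot\Vert_F^2$ in the one-sweep decrease to be strictly positive yields precisely $\rho_i>\beta_i+2$. Once all hypotheses are in place, \cite[Theorem~1]{wang2019global} guarantees that $\mathcal{L}_{\rho's}$ is monotonically decreasing and bounded below, that successive iterate differences vanish, and that every limit point $\left(\bm{X}^{*}{'}s,\bm{Z}^{*}{'}s,\bm{\Lambda}^{*}{'}s\right)$ is stationary, i.e.\ $\bm{0}\in\partial\mathcal{L}_{\rho's}$; translating this inclusion block by block recovers primal feasibility \eqref{eq:subdif-Lambdas} together with \eqref{eq:subdif-Pk}--\eqref{eq:subdif-Zk}, which is exactly the stopping condition \eqref{eq:stop-conditions}.

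I expect the main obstacle to be the verification of restricted prox-regularity for $\mathbbm{1}_{\mathcal{R}_1}$ and the careful bookkeeping of constants in the sufficient-decrease step. The rank-one feasible set is nonconvex, and prox-regularity can fail at points where the top two singular values coincide, so I would need to argue that the iterates remain in a region where the rank-one projection is single-valued and locally Lipschitz, and then track every constant through the descent inequality so that the stated bounds $\rho_i>\beta_i+2$ emerge as the exact threshold rather than a merely sufficient one.
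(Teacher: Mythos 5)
Your plan sits in the right family --- a Lyapunov-descent analysis of the augmented Lagrangian in the spirit of \cite{wang2019global} --- but it diverges from the paper's actual proof in a way worth understanding. The paper does \emph{not} verify the hypotheses of \cite[Theorem 1]{wang2019global} and invoke it as a black box; it reproves the three ingredients directly: (i) per-block sufficient descent via the cosine rule and the first-order optimality of each subproblem (\ref{eq:LagDif-X1-Update})--(\ref{eq:LagDif-Lambda-Update}); (ii) lower boundedness of $\mathcal{L}_{\rho 's}$ by constructing feasible comparison points $\bm{Z}'$ and convexity of the quadratic terms (Lemma \ref{lem:Lagranian-converges}); (iii) vanishing successive differences, hence vanishing primal residuals, hence $\bm{0}\in\partial\mathcal{L}_{\rho 's}$ at the limit, which is exactly \eqref{eq:stop-conditions}. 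The structural engine is not the image condition $\mathrm{Im}(\bm{A})\subseteq\mathrm{Im}(\bm{B})$ you emphasize (indeed, with the paper's printed $\bm{B}$, whose last diagonal block is $\bm{0}$, that inclusion does not even hold as written), but the closed-form identities $\bm{\Lambda}_{1,k}^{l+1}=\tfrac{\beta_1}{\rho_1}\bm{Z}_{1,k}^{l+1}$, $\bm{\Lambda}_{2,k}^{l+1}=\tfrac{\beta_2}{\rho_2}\bm{Z}_{2,k}^{l+1}$, $\bm{\Lambda}_{3}^{l+1}=\tfrac{\beta_3}{\rho_3}\bm{Z}_{3}^{l+1}$ read off from the $\bm{Z}$-subproblem optimalities (\ref{eq:1stOdrOpt-Z1-update})--(\ref{eq:1stOdrOpt-Z3-update}). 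These convert the dual-update loss $-\rho_i\Vert\bm{\Lambda}^{l}-\bm{\Lambda}^{l+1}\Vert_F^2$ into $\bm{Z}$-differences absorbed by the $\bm{Z}$-step gain, which is precisely why each constraint was given its own slack variable with a quadratic penalty. This also dissolves what you flag as your main obstacle: in the descent computation the rank-one block contributes only through minimality plus lower semicontinuity of $\mathbbm{1}_{\mathcal{R}_1}$ (the step \eqref{eq:LagDif-X2-Update} is simply $\geq 0$), so no restricted-prox-regularity constant and no argument about iterates avoiding coincident top singular values ever enters the descent inequality. Prox-regularity (Propositions \ref{prop:prox-regularity-rank-one-set} and \ref{prop:prox-regularity-indicator-function}, via Eckart--Young--Mirsky) is used only to give meaning to $\partial_{p}\mathbbm{1}_{\mathcal{R}_1}$ in the stationarity condition \eqref{eq:subdif-Qk}.

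Two concrete inaccuracies in your constant bookkeeping. First, requiring the net descent coefficients to be \emph{strictly positive}, as you propose, yields only $\rho_i>\beta_i$ (and $\rho_i>0$ from the $\bm{X}$-steps); the stated bounds $\rho_i>\beta_i+2$ come from the paper's stronger convention that each coefficient exceed $1$ (e.g.\ $\tfrac{\rho_1}{2}>1$ and $\tfrac{\rho_i-\beta_i}{2}>1$). So $\rho_i>\beta_i+2$ is a sufficient condition, and your stated goal of showing it is the \emph{exact} threshold is neither claimed nor attempted in the paper --- as written, your derivation would not reproduce the theorem's hypotheses. Second, the Lipschitz property the paper actually records (Proposition \ref{prop:subprob-Lipschitz}) concerns the \emph{solution maps} of the subproblems --- the column-shrinkage \eqref{eq:X1-solution} and the rank-one projection \eqref{eq:X2-solution}, both $1$-Lipschitz per \cite{andersson2016operator} --- not the gradient Lipschitz constant $\max_i\beta_i^2$ of $g$ that your Wang-et-al.\ reduction would require. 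Both points are repairable, but they mean your proposal, executed as stated, proves a weaker statement (convergence for $\rho_i>\beta_i$ under assumptions you have not fully discharged) rather than the theorem as formulated.
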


Before proving Theorem \ref{thm:convergence-ROAD}, we first consider the properties of the objective function of ADMM formulation \eqref{eq:ADMM-inexact-EasyForm} and the ADMM process (\ref{eq:ADMM-X1-update}-\ref{eq:ADMM-Lambda3-Update}).

\begin{definition}
(Lower semi-continuity) We say that $f$ is lower semi-continuous
at $x_{0}$ if
\[
\lim_{x\rightarrow x_{0}}\inf f(x)\geq f(x_{0}).
\]
\end{definition}

\begin{prop}
The indicator function \eqref{eq:indicator-function} is lower semi-continuous.
\end{prop}
\begin{proof}
Indicator function of a closed set is lower semi-continuous. It is clear that the set of rank-one matrices is closed. Hence the indicator function \eqref{eq:indicator-function} is lower semi-continuous.
\end{proof}

\begin{prop} \label{prop:subprob-Lipschitz}
The subproblems (\ref{eq:ADMM-X1-update}-\ref{eq:ADMM-X3-update}) are Lipschitz continuous.
\end{prop}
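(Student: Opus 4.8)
The plan is to establish Lipschitz continuity one subproblem at a time, exploiting the closed-form minimizers already derived: \eqref{eq:X1-solution} for \eqref{eq:ADMM-X1-update}, \eqref{eq:X2-solution} for \eqref{eq:ADMM-X2-update}, and the affine formula involving $\bm{\mathcal{A}}$ for \eqref{eq:ADMM-X3-update}. In each case the computed $\bm{X}$-iterate is a function of the variables held fixed during that step (packaged in $\hat{\bm{X}}_{1,k}$, $\hat{\bm{X}}_{2,k}$, and $\bm{\mathcal{B}}_1+\bm{\mathcal{B}}_2+\bm{\mathcal{B}}_3$), so the goal is to bound the Frobenius-norm deviation of each minimizer by a fixed constant times the deviation of its input, which in turn depends affinely on the fixed iterates.

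For \eqref{eq:ADMM-X1-update} I would observe that the objective is the convex, proper, closed function $\Vert\cdot\Vert_{2,1}$ plus a quadratic, so its minimizer is precisely the proximal operator of $\rho_1^{-1}\Vert\cdot\Vert_{2,1}$ evaluated at $\hat{\bm{X}}_{1,k}$. The closed form \eqref{eq:X1-solution} exhibits this as column-wise group soft-thresholding $\bm{v}\mapsto(1-(\rho_1\Vert\bm{v}\Vert_2)^{-1})_+\bm{v}$, which, being the prox of a convex function, is firmly non-expansive and hence $1$-Lipschitz; composing with the affine dependence of $\hat{\bm{X}}_{1,k}$ on $(\bm{X}_{3,k},\bm{Z}_{1,k},\bm{\Lambda}_{1,k})$ gives the claim. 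For \eqref{eq:ADMM-X3-update} the minimizer is the image of $\bm{\mathcal{B}}_1+\bm{\mathcal{B}}_2+\bm{\mathcal{B}}_3$ under the fixed linear operator $\bm{\mathcal{A}}$, so the map is affine and therefore Lipschitz with constant equal to the operator norm of $\bm{\mathcal{A}}$, which is bounded directly from its closed form. Both of these are routine.

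The hard part is \eqref{eq:ADMM-X2-update}, whose minimizer is the rank-one projection $\bm{H}_{\sigma_1}(\hat{\bm{X}}_{2,k})=\sigma_1\bm{U}_{:,1}\bm{V}_{:,1}^{T}$. This map is not globally Lipschitz, nor even continuous, whenever the leading singular value fails to be simple: when $\sigma_1=\sigma_2$ the projection onto $\mathcal{R}_1$ is set-valued (for instance near $\hat{\bm{X}}_{2,k}=\bm{I}$), and small perturbations can rotate the leading singular subspace arbitrarily. My plan is to invoke singular-subspace perturbation theory, i.e. Wedin's $\sin\Theta$ theorem, to show that on any region where the spectral gap satisfies $\sigma_1-\sigma_2\ge\delta>0$ the leading singular triple, and hence $\bm{H}_{\sigma_1}(\cdot)$, is Lipschitz with a constant controlled by $\delta^{-1}$ and the input norm.

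The step I expect to require the most care is reconciling this local, gap-dependent estimate with the global statement of the proposition. I would either argue that the iterates produced by \eqref{eq:ADMM-X2-update} stay within a region of uniformly positive spectral gap, or, if no clean a priori gap bound is available, fall back on the observation that the global-convergence framework of \cite{wang2019global} asks only for restricted prox-regularity of the nonconvex atom $\mathbbm{1}_{\mathcal{R}_1}$ rather than Lipschitz continuity of its projection, so that the Lipschitz property is genuinely needed only for the convex updates \eqref{eq:ADMM-X1-update} and \eqref{eq:ADMM-X3-update}. Pinning down which of these two routes the convergence proof in Section \ref{subsec:Convergence} actually relies on is, I expect, the crux of this proposition.
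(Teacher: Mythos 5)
Your handling of \eqref{eq:ADMM-X1-update} coincides with the paper's: its proof likewise reads the solution \eqref{eq:X1-solution} as column-wise group soft-thresholding and assigns it Lipschitz constant $L_p=1$, which your firm-nonexpansiveness-of-prox argument justifies more explicitly. Your affine argument for \eqref{eq:ADMM-X3-update} fills a step the paper skips entirely — its two-sentence proof never mentions the third subproblem, even though the proposition covers it. The genuine divergence is on \eqref{eq:ADMM-X2-update}: the paper simply cites \cite{andersson2016operator} and asserts that the rank-one projection \eqref{eq:X2-solution} is Lipschitz with constant $L_q=1$, whereas you observe — correctly — that the single-valued best rank-one approximation map cannot be globally Lipschitz, or even continuous, where $\sigma_1=\sigma_2$: compare $\mathrm{diag}(1+\epsilon,1)$ with $\mathrm{diag}(1,1+\epsilon)$, whose inputs differ by $\sqrt{2}\,\epsilon$ while their rank-one projections remain $\sqrt{2}(1+\epsilon)$ apart. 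The operator-Lipschitz estimates of \cite{andersson2016operator} concern the singular value functional calculus of \emph{Lipschitz} scalar functions with $f(0)=0$; hard thresholding at a rank-dependent cut is not such a function, so the citation does not deliver the unqualified global claim, and your spectral-gap caveat is a real correction rather than pedantry. Of your two proposed repairs, the fallback is the one the paper's analysis effectively takes: the descent and convergence derivations actually written out in Section \ref{subsec:Convergence} use only the fact that $\bm{X}_{2,k}^{l+1}$ minimizes its subproblem, lower semicontinuity of $\mathbbm{1}_{\mathcal{R}_1}$, and the prox-regularity of $\mathcal{R}_1$ at points of rank exactly one (Proposition \ref{prop:prox-regularity-rank-one-set}, whose own proof assumes a strict gap $\sigma_1>\sigma_2$); the constants $L_p,L_q$ are never consumed there. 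Your Wedin $\sin\Theta$ route would additionally buy a quantitative, gap-dependent local Lipschitz constant that the written proof does not need. Either way, the proposition as stated should be weakened for the $\bm{X}_{2,k}$ update to local Lipschitz continuity on a region with a uniform gap $\sigma_1-\sigma_2\ge\delta>0$, and on this point your proposal is sounder than the paper's own proof.
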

\begin{proof}
The subproblem \eqref{eq:ADMM-X1-update} includes $\ell_{2,1}$ norm, which can be solved by \eqref{eq:X1-solution}. It is hence Lipschitz continuous with Lipschitz constant $L_{p}=1$. According to the results in \cite{andersson2016operator}, the rank-one projection function \eqref{eq:X2-solution}, which is the solution of the subproblem \eqref{eq:ADMM-X2-update}, is Lipschitz continuous with Lipschitz constant $L_{q}=1$
\end{proof}

\subsection{Proof}

\subsubsection{Sufficient descent of $\mathcal{L}_{\rho 's}$}

We first prove the following lemma
\begin{lem}\label{lem:Lagranian-descends}
If the penalty parameters $\rho_1>\beta_1+2$, $\rho_2>\beta_2+2$ and $\rho_3>\beta_3+2$, respectively, the augmented Lagrangian $\mathcal{L}_{\rho 's}$ descends for all sufficient large $l$, that is 
\begin{align}\label{eq:lower-bound-difLagrangian}
&\mathcal{L}_{\rho 's}\left(\bm{X}^{l}{\rm 's},\bm{Z}^{l}{'s},\bm{\Lambda}^{l}{'s}\right) -\mathcal{L}_{\rho 's}\left(\bm{X}^{l+1}{\rm 's},\bm{Z}^{l+1}{'s},\bm{\Lambda}^{l+1}{'s}\right) \nonumber\\
\geq & \frac{\rho_1}{2}\sum_{k}\Vert\bm{X}_{1,k}^{l}-\bm{X}_{1,k}^{l+1}\Vert_{F}^{2} + \frac{\rho_1+\rho_2}{2}\sum_k\Vert\bm{X}_{3,k}^{l}-\bm{X}_{3,k}^{l+1}\Vert_{F}^{2} \nonumber\\
+&\frac{\rho_3}{2}\Vert\sum_k(\bm{X}_{3,k}^{l}-\bm{X}_{3,k}^{l+1})\Vert_{F}^{2} +\frac{\rho_1-\beta_1}{2}\sum_k\Vert\bm{Z}_{1,k}^{l}-\bm{Z}_{1,k}^{l+1}\Vert_{F}^{2} \nonumber\\
+& \frac{\rho_2-\beta_2 }{2}\sum_k\Vert\bm{Z}_{2,k}^{l}-\bm{Z}_{2,k}^{l+1}\Vert_{F}^{2} + \frac{\rho_3-\beta_3 }{2}\Vert\bm{Z}_{3}^{l}-\bm{Z}_{3}^{l+1}\Vert_{F}^{2} \nonumber\\
>& 0.
\end{align}
\end{lem}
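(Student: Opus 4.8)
The plan is to track the one-step change $\mathcal{L}_{\rho's}^{l}-\mathcal{L}_{\rho's}^{l+1}$ by telescoping across the seven update stages of \eqref{eq:ADMM-X1-update}--\eqref{eq:ADMM-Lambda3-Update}, inserting one intermediate value of the Lagrangian \eqref{eq:Augmented-Lagrangian-DifRho} per updated block. Since each primal step modifies a single block while freezing the rest, the change attributable to that step equals the change of the corresponding single-block subproblem objective; I would then lower-bound the primal descents and upper-bound the (unavoidable) dual ascents separately, writing $\Delta(\cdot)$ for the difference of consecutive iterates. For the strongly convex subproblems I would invoke the elementary fact that an $m$-strongly convex $h$ with minimizer $x^{*}$ satisfies $h(x)-h(x^{*})\ge\frac{m}{2}\Vert x-x^{*}\Vert^{2}$. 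The $\bm{X}_{1,k}$ step \eqref{eq:ADMM-X1-update} is $\rho_1$-strongly convex (the $\ell_{2,1}$ term is convex, the penalty supplies curvature $\rho_1$), giving the term $\frac{\rho_1}{2}\sum_k\Vert\Delta\bm{X}_{1,k}\Vert_F^2$; the joint $\bm{X}_{3,k}$ step \eqref{eq:ADMM-X3-update} is quadratic with curvature $(\rho_1+\rho_2)$ in each block plus a $\rho_3$-coupling of $\sum_k\bm{X}_{3,k}$, yielding exactly $\frac{\rho_1+\rho_2}{2}\sum_k\Vert\Delta\bm{X}_{3,k}\Vert_F^2+\frac{\rho_3}{2}\Vert\sum_k\Delta\bm{X}_{3,k}\Vert_F^2$; and the $\bm{Z}$ steps \eqref{eq:ADMM-Z1-update}--\eqref{eq:ADMM-Z3-update} are $(\beta_i+\rho_i)$-strongly convex, contributing $\frac{\beta_i+\rho_i}{2}\Vert\Delta\bm{Z}_i\Vert_F^2$.

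The non-convex $\bm{X}_{2,k}$ step \eqref{eq:ADMM-X2-update} must be handled with care, and this is one delicate point. Because $\bm{X}_{2,k}^{l}$ is already feasible (rank one) and $\bm{X}_{2,k}^{l+1}$ is the rank-one projection \eqref{eq:X2-solution} of the same target, both indicator values vanish and the decrease reduces to $\frac{\rho_2}{2}\big(\Vert\bm{X}_{2,k}^{l}-c\Vert_F^2-\Vert\bm{X}_{2,k}^{l+1}-c\Vert_F^2\big)\ge 0$, where the inequality follows since the projection minimizes distance to $c$ over the rank-one set. I would simply discard this nonnegative quantity; the impossibility of extracting a quadratic bound from it is exactly why no $\bm{X}_{2,k}$ term appears on the right-hand side of \eqref{eq:lower-bound-difLagrangian}.

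The main obstacle is controlling the dual ascent, which \emph{raises} the Lagrangian. A direct computation shows the updates \eqref{eq:ADMM-Lambda1k-Update}--\eqref{eq:ADMM-Lambda3-Update} increase $\mathcal{L}_{\rho's}$ by precisely $\rho_1\Vert\Delta\bm{\Lambda}_{1,k}\Vert_F^2$, $\rho_2\Vert\Delta\bm{\Lambda}_{2,k}\Vert_F^2$ and $\rho_3\Vert\Delta\bm{\Lambda}_3\Vert_F^2$ (each $\bm{\Lambda}$-term in \eqref{eq:Augmented-Lagrangian-DifRho} is affine in $\bm{\Lambda}$, and the increment equals the residual). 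To absorb these I would exploit the first-order optimality of the last-updated block in each constraint, namely the $\bm{Z}$ blocks. The optimality condition of \eqref{eq:ADMM-Z1-update}, combined with the multiplier recursion \eqref{eq:ADMM-Lambda1k-Update}, yields the clean identity $\rho_1\bm{\Lambda}_{1,k}^{l+1}=\beta_1\bm{Z}_{1,k}^{l+1}$ (this is just the closed form \eqref{eq:Z1-solver} rearranged), and analogously for the other two multipliers. Since this identity persists across iterations, for $l\ge 1$ one gets $\rho_i\Delta\bm{\Lambda}_i=\beta_i\Delta\bm{Z}_i$, so each dual ascent equals $\frac{\beta_i^2}{\rho_i}\Vert\Delta\bm{Z}_i\Vert_F^2$ and can be charged against the corresponding $\bm{Z}$ descent.

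Collecting terms, the net coefficient on $\Vert\Delta\bm{Z}_i\Vert_F^2$ becomes $\frac{\beta_i+\rho_i}{2}-\frac{\beta_i^2}{\rho_i}$, which exceeds the target $\frac{\rho_i-\beta_i}{2}$ in \eqref{eq:lower-bound-difLagrangian} exactly when $\rho_i\ge\beta_i$, since their difference is $\beta_i(\rho_i-\beta_i)/\rho_i\ge 0$. This is guaranteed \emph{a fortiori} by the stated bounds $\rho_i>\beta_i+2$, where the margin $2=L_p+L_q$ is the sum of the unit Lipschitz constants of the two non-smooth subproblems established in Proposition \ref{prop:subprob-Lipschitz} (the extra slack is what the global-convergence argument of \cite{wang2019global} ultimately requires). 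All remaining terms are nonnegative, so the lower bound \eqref{eq:lower-bound-difLagrangian} follows, with strict positivity unless every pair of consecutive iterates coincides. I expect the dual-ascent identity and the separate treatment of the non-convex $\bm{X}_{2,k}$ step to be the delicate parts; the strong-convexity estimates for the remaining blocks are routine.
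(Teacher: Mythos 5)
Your proposal is correct and follows essentially the same route as the paper: per-block telescoping with strong-convexity (cosine-rule) descent estimates for the $\bm{X}_{1,k}$, $\bm{X}_{3,k}$ and $\bm{Z}$ steps, a discarded nonnegative descent for the non-convex $\bm{X}_{2,k}$ step, an exact dual ascent of $\rho_i\Vert\bm{\Lambda}_i^{l+1}-\bm{\Lambda}_i^{l}\Vert_F^2$, and the identity $\rho_i\bm{\Lambda}_i^{l+1}=\beta_i\bm{Z}_i^{l+1}$ from the $\bm{Z}$-step optimality used to charge that ascent against the $\bm{Z}$ descent. Your net coefficient $\frac{\beta_i+\rho_i}{2}-\frac{\beta_i^2}{\rho_i}$ is in fact the sharper (and correct) bookkeeping: the paper writes this combination as exactly $\frac{\rho_i-\beta_i}{2}$, which holds as an equality only when $\rho_i=\beta_i$, whereas your observation that it dominates $\frac{\rho_i-\beta_i}{2}$ whenever $\rho_i\ge\beta_i$ is precisely what makes the claimed lower bound valid under $\rho_i>\beta_i+2$.
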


Before we go through the proof of Lemma \ref{lem:Lagranian-descends}, we first consider the subdifferential of $\ell_{2,1}$ norm and define the proximal subdifferential of indicator function $\mathbbm{1}_{\mathcal{R}_{1}}$.
The subdifferential of $\Vert\bm{Z}\Vert_{2,1}$ to a matrix $\bm{Z}$ is simply given by
\begin{equation}    \label{eq:subdifferential-L21}
(\partial\Vert\bm{Z}\Vert_{2,1})_{i,j}=\begin{cases}
\begin{array}{ll}
\frac{(\bm{Z})_{i,j}}{\Vert(\bm{Z})_{:,j}\Vert_{2}},\\
\{\bm{W}_{i,j}:\Vert\bm{W}_{:,j}\Vert_{2}\leq 1\},
\end{array} & \begin{array}{c}
(\bm{Z})_{:,j}\neq\bm{0},\\
\mathrm{otherwise}.
\end{array}\end{cases}
\end{equation}
To derive the proximal subdifferential of $\mathbbm{1}_{\mathcal{R}_{1}}$, we first define the concept of prox-regularity.
\begin{definition}\label{def:prox-regularity}\cite[Definition 1.1.]{poliquin2000local}
(Prox-regularity) A closed set $\mathcal{C}$ is prox-regular at $\bar{x}$ for $\bar{v}$, where $\bar{x}\in\mathcal{C}$, $\bar{v}\in\mathcal{N}_{\mathcal{C}}(\bar{x})$ and $\mathcal{N}_{\mathcal{C}}(\bar{x})$ denotes the general cone of normals to $\mathcal{C}$ at $\bar{x}$, if there exist $\epsilon>0$ and $\rho>0$ such that whenever $x\in\mathcal{C}$ and $v\in\mathcal{N}_{\mathcal{C}}(x)$ with $|x-\bar{x}|<\epsilon$ and $|v-\bar{v}|<\epsilon$, then x is the unique nearest point of $\{x^{'}\in\mathcal{C}\big\vert\vert x^{'}-\bar{x}\vert<\epsilon\}$ to $x+\gamma^{-1}v$.
\end{definition}
\begin{prop}\label{prop:prox-regular-inequation}\cite[Proposition 1.2.]{poliquin2000local}
A closed set $\mathcal{C}$ is prox-regular at $\bar{x}$ for $\bar{v}$, or equivalently, there exists an $\epsilon>0$ and $\gamma>0$ such that whenever $x\in\mathcal{C}$ and $v\in\mathcal{N}_{\mathcal{C}}(x)$ with $|x-\bar{x}|<\epsilon$ and $|v|<\epsilon$, one has
\begin{equation}
    0\geq\langle v,x^{'}-x\rangle-\frac{\gamma}{2}|x^{'}-x|^{2}\quad \forall x^{'}\in\mathcal{C}\;with\;|x^{'}-\bar{x}|<\epsilon.
\end{equation}
\end{prop}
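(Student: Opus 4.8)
The plan is to establish the stated equivalence by unfolding the ``unique nearest point'' requirement of Definition~\ref{def:prox-regularity} into the displayed quadratic inequality, since the two are tied together by a short algebraic expansion. First I would fix $x\in\mathcal{C}$ and $v\in\mathcal{N}_{\mathcal{C}}(x)$ close to $(\bar{x},\bar{v})$ in the sense of the localization conditions, and introduce the reference point $p:=x+\gamma^{-1}v$ appearing in the definition. Saying that $x$ is the nearest point of $\{x'\in\mathcal{C}:|x'-\bar{x}|<\epsilon\}$ to $p$ means precisely that $|x-p|^{2}\le|x'-p|^{2}$ for every competing $x'\in\mathcal{C}$ with $|x'-\bar{x}|<\epsilon$. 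Substituting $x-p=-\gamma^{-1}v$ and $x'-p=(x'-x)-\gamma^{-1}v$ and expanding both squared norms, the common term $\gamma^{-2}|v|^{2}$ cancels and the inequality collapses to $0\le|x'-x|^{2}-2\gamma^{-1}\langle v,x'-x\rangle$. Multiplying through by $\gamma/2>0$ and rearranging gives exactly $0\ge\langle v,x'-x\rangle-\frac{\gamma}{2}|x'-x|^{2}$, which is the asserted inequality.

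This computation already delivers the forward direction: the nearest-point property of prox-regularity forces the quadratic bound with the same constants $\epsilon$ and $\gamma$. For the converse I would run the expansion in reverse. The displayed inequality is equivalent to $|x-p|^{2}\le|x'-p|^{2}$ for all admissible $x'$, so $x$ realizes the minimum distance from $p$ to $\mathcal{C}\cap\{|\cdot-\bar{x}|<\epsilon\}$ and is therefore \emph{a} nearest point. To promote this to the \emph{unique} nearest point demanded by the definition, I would shrink $\epsilon$ slightly and work with a strict version of the inequality using a marginally enlarged constant, so that any distinct competitor $x'\ne x$ incurs strictly larger distance to $p$; this excludes ties and recovers uniqueness.

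The point needing the most care is reconciling the localization conditions in the two formulations, namely $|v-\bar{v}|<\epsilon$ in Definition~\ref{def:prox-regularity} versus $|v|<\epsilon$ in the statement. I would handle this by recalling that both assertions are made ``at $\bar{x}$ for $\bar{v}$'', and that the bounded shift between $v$ and $v-\bar{v}$ can be absorbed into the constants $\epsilon$ and $\gamma$ without affecting the purely local nature of prox-regularity; closedness of $\mathcal{C}$ guarantees the nearest point exists and $\mathcal{N}_{\mathcal{C}}$ being the limiting normal cone requires no extra argument. The genuinely delicate step, and the one I would spend the most effort on, is the uniqueness claim in the converse, since the bare inequality only certifies that $x$ minimizes the distance, not that the minimizer is unique.
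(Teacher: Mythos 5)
The paper never proves this statement: Proposition~\ref{prop:prox-regular-inequation} is imported verbatim, with citation, from Poliquin--Rockafellar--Thibault, so there is no in-paper proof to compare you against. Your reconstruction is in substance the standard argument behind the cited result, and its core is correct. The forward direction is exactly the right computation: with $p=x+\gamma^{-1}v$, expanding $|x'-p|^{2}$ and cancelling the common $\gamma^{-2}|v|^{2}$ turns the nearest-point property of Definition~\ref{def:prox-regularity} into the displayed quadratic inequality with the same constants. For the converse, your enlarged-constant idea is also the right one, and in fact cleaner than you suggest: taking $\gamma'>\gamma$ and $p'=x+\gamma'^{-1}v$ gives $|x'-p'|^{2}-|x-p'|^{2}\geq\bigl(1-\gamma/\gamma'\bigr)|x'-x|^{2}>0$ for every admissible $x'\neq x$, so $x$ is automatically the \emph{unique} nearest point; no shrinking of $\epsilon$ is needed, and the step you flag as ``genuinely delicate'' is a one-line estimate. (Note $\gamma'^{-1}v\in\mathcal{N}_{\mathcal{C}}(x)$ since the normal cone is a cone, so the rescaled reference point is legitimate.)

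The one remark that is wrong as stated is your handling of the localization mismatch between $|v-\bar{v}|<\epsilon$ in Definition~\ref{def:prox-regularity} and $|v|<\epsilon$ in the proposition. A ``bounded shift absorbed into the constants'' cannot work: if $\bar{v}\neq\bm{0}$ and $\epsilon<|\bar{v}|$, the ball $\{v:|v|<\epsilon\}$ does not even contain $\bar{v}$, so no choice of $\epsilon$ and $\gamma$ converts one neighbourhood into the other by inflation. Two honest repairs exist. First, the cited source states the inequality under $|v-\bar{v}|<\epsilon$, matching the definition, so the $|v|<\epsilon$ appearing here is best read as a transcription slip, and your equivalence should simply be proved with matching localizations. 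Second, if one does want to pass between the two forms, the correct mechanism is positive homogeneity rather than translation: since $\mathcal{N}_{\mathcal{C}}(x)$ is a cone, any $v$ with $|v-\bar{v}|<\epsilon$ can be rescaled by a fixed $t\in(0,1)$ so that $|tv|<\epsilon$, and the inequality for $tv$ with constant $\gamma$ yields the inequality for $v$ with constant $\gamma/t$; this gives the implication from the small-normal version to the near-$\bar{v}$ version (the reverse implication genuinely fails in general, e.g.\ for small normals orthogonal to $\bar{v}$, though this is immaterial for the paper's application to $\mathcal{R}_1$, where prox-regularity is established for arbitrary normal directions). Your final paragraph should be rewritten along one of these lines; everything else stands.
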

Once we have the definition of prox-regularity, we have the following proposition.
\begin{prop} \label{prop:prox-regularity-rank-one-set}
The rank-one set $\mathcal{R}_{1}$ in \eqref{eq:rank-one-set} is prox-regular at all points with rank exactly equal to one.
\end{prop}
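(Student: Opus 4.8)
The plan is to exploit the fact that, away from the origin, the rank-one variety $\mathcal{R}_1$ is a smooth embedded submanifold, and then to verify the prox-regularity inequality of Proposition~\ref{prop:prox-regular-inequation} directly from this smooth structure. Concretely, fix a point $\bar{\bm{Z}}\in\mathcal{R}_1$ with $\mathrm{rank}(\bar{\bm{Z}})=1$, so that $\bar{\bm{Z}}=\sigma\bm{u}\bm{v}^{T}$ with $\Vert\bm{u}\Vert_{2}=\Vert\bm{v}\Vert_{2}=1$ and $\sigma>0$. First I would note that the only rank-zero matrix is $\bm{0}$, which lies at distance $\sigma>0$ from $\bar{\bm{Z}}$; hence on a sufficiently small ball around $\bar{\bm{Z}}$ the constraint $\mathrm{rank}(\bm{Z})\le 1$ is equivalent to $\mathrm{rank}(\bm{Z})=1$, and $\mathcal{R}_1$ coincides locally with the set of rank-exactly-one matrices.

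Next I would establish that the rank-exactly-one matrices form a $C^{\infty}$ embedded submanifold of $\mathbb{R}^{M\times N}$ near $\bar{\bm{Z}}$. This can be done through the polynomial parametrization $\phi(\bm{a},\bm{b})=\bm{a}\bm{b}^{T}$, whose differential $D\phi(\bm{a},\bm{b})[\delta\bm{a},\delta\bm{b}]=\delta\bm{a}\,\bm{b}^{T}+\bm{a}\,\delta\bm{b}^{T}$ at $(\sqrt{\sigma}\bm{u},\sqrt{\sigma}\bm{v})$ has constant rank $M+N-1$, its one-dimensional kernel being spanned by the scaling direction $(\bm{a},-\bm{b})$. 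Equivalently one may use the vanishing of all $2\times2$ minors together with the constant-rank theorem, invoking the classical fact that the smooth locus of the determinantal variety is exactly its maximal-rank stratum. Either route yields a local $C^{2}$ description with a tangent space $T_{\bm{Z}}\mathcal{R}_1$ and a normal space $\mathcal{N}_{\mathcal{R}_1}(\bm{Z})$ that vary smoothly with $\bm{Z}$.

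Finally I would verify the inequality in Proposition~\ref{prop:prox-regular-inequation}. For $\bm{Z},\bm{Z}'\in\mathcal{R}_1$ near $\bar{\bm{Z}}$ and $\bm{V}\in\mathcal{N}_{\mathcal{R}_1}(\bm{Z})$, a second-order Taylor expansion of the local parametrization gives the decomposition $\bm{Z}'-\bm{Z}=\bm{t}+\bm{r}$, where $\bm{t}\in T_{\bm{Z}}\mathcal{R}_1$ and the remainder obeys $\Vert\bm{r}\Vert_{F}\le C\Vert\bm{Z}'-\bm{Z}\Vert_{F}^{2}$ for a curvature constant $C$ that stays bounded on a neighborhood of $\bar{\bm{Z}}$ by continuity of the second fundamental form. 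Since $\bm{V}$ is orthogonal to $T_{\bm{Z}}\mathcal{R}_1$, we obtain $\langle\bm{V},\bm{Z}'-\bm{Z}\rangle=\langle\bm{V},\bm{r}\rangle\le\Vert\bm{V}\Vert_{F}\,C\,\Vert\bm{Z}'-\bm{Z}\Vert_{F}^{2}$, which is dominated by $\frac{\gamma}{2}\Vert\bm{Z}'-\bm{Z}\Vert_{F}^{2}$ as soon as $\Vert\bm{V}\Vert_{F}<\epsilon$ and $\gamma\ge 2C\epsilon$. This is precisely the prox-regularity inequality at $\bar{\bm{Z}}$.

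I expect the main obstacle to be the uniformity step: the constants $\epsilon$ and $\gamma$ in the definition must be chosen so that the curvature estimate $C$ holds uniformly over all base points $\bm{Z}$ in a fixed neighborhood and all normal vectors $\bm{V}$ of small norm, rather than at $\bar{\bm{Z}}$ alone. Controlling how the tangent--normal splitting and the second-order remainder depend on the moving base point $\bm{Z}$ is where the explicit rank-one structure, together with the separation $\sigma>0$ from the singular origin, is essential. Once this local uniform curvature bound is secured, the remainder of the argument is routine.
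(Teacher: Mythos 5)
Your proof is correct, but it follows a genuinely different route from the paper's. The paper works directly with the \emph{unique nearest point} characterization of prox-regularity (Definition~\ref{def:prox-regularity}): it fixes $\bm{X}_1=\sigma_1\bm{U}_{:,1}\bm{V}_{:,1}^{T}$, takes the normal direction $\bm{X}_{\perp}=\sum_{j\ge 2}\sigma_j\bm{U}_{:,j}\bm{V}_{:,j}^{T}$ from the SVD, and invokes the Eckart--Young--Mirsky theorem to conclude that $\bm{X}_1$ is the unique rank-one matrix nearest to $\bm{X}_1+\gamma^{-1}\bm{X}_{\perp}$ for every $\gamma\ge 1$, which is exactly the projection property in the definition. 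You instead verify the variational inequality of Proposition~\ref{prop:prox-regular-inequation}: you observe that near a rank-one point the set $\mathcal{R}_1$ coincides with the rank-exactly-one stratum (the origin being at distance $\sigma>0$), establish via the parametrization $\phi(\bm{a},\bm{b})=\bm{a}\bm{b}^{T}$ (whose differential has constant rank $M+N-1$ with kernel spanned by the scaling direction) that this stratum is a smooth embedded submanifold, and then bound $\langle\bm{V},\bm{Z}'-\bm{Z}\rangle$ by a second-order remainder estimate, since a normal vector annihilates the tangential part of $\bm{Z}'-\bm{Z}$. Both arguments are sound; what each buys is different. The paper's proof is shorter and exploits the explicit SVD structure so that the unique projection falls out in one line, but as written it quietly relies on a strict singular-value gap $\sigma_1>\sigma_2$ and only exhibits the projection property for normals packaged through the SVD at the chosen base point. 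Your manifold argument is the standard proof that $C^2$ submanifolds are prox-regular: it is more machinery-heavy and, as you rightly flag, requires uniformity of the curvature constant over a neighborhood of base points and small normals (which follows by continuity of the second fundamental form on the stratum, using the separation $\sigma>0$ from the singular origin), but in exchange it needs no spectral-gap assumption, treats all normal vectors at once, and generalizes verbatim to the fixed-rank-$r$ strata — a strictly stronger conclusion than the paper's computation delivers.
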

\begin{proof}
Consider a full-rank matrix $\bm{X}=\bm{X}_{1}+\bm{X}_{\perp}\in\mathbb{R}^{M\times N}$ $(M\leq N)$ and its SVD $\bm{X}=\sum_{j=1}^{M}\sigma_{j}\bm{U}_{:,j}\bm{V}_{:,j}^{T}$, where $\bm{X}_{1}=\sigma_{1}\bm{U}_{:,1}\bm{V}_{:,1}^{T}$ and $\bm{X}_{\perp}=\sum_{j=2}^{M}\sigma_{j}\bm{U}_{:,j}\bm{V}_{:,j}^{T}$. It is straightforward to have $\bm{X}_{\perp}\in\mathcal{N}_{\mathcal{R}_1}(\bm{X}_1)$.
According to Eckart-Young-Mirsky theorem, we have
\begin{equation*}
    \bm{X}_1=\underset{\bm{X}^{'}\in\mathcal{R}_1}{\arg\min}\;\Vert\bm{X}^{'}-(\bm{X}_{1}+\bm{X}_{\perp})\Vert_{F}^{2}.
\end{equation*}
As $\sigma_{1}>\sigma_{2}>\cdots>\sigma_{M}$, for any $\gamma\geq 1$, we always have 
\begin{equation*}
    \bm{X}_1=\underset{\bm{X}^{'}\in\mathcal{R}_1}{\arg\min}\;\Vert\bm{X}^{'}-(\bm{X}_{1}+\gamma^{-1}\bm{X}_{\perp})\Vert_{F}^{2},
\end{equation*}
that is $\bm{X}_1$ is the unique nearest point in $\mathcal{R}_{1}$ to $\bm{X}_1+\gamma^{-1}\bm{X}_{\perp}$. Hence, $\mathcal{R}_1$ is prox-regular at all points with rank exactly equal to one.
\end{proof}
To further obtain the prox-regularity of indicator function $\mathbbm{1}_{\mathcal{R}_{1}}$ and its proximal subdifferential $\partial_{p}\mathbbm{1}_{\mathcal{R}_{1}}$, we invoke the proposition in \cite[Proposition 2.11.]{poliquin1996prox}.
\begin{prop} \cite[Proposition 2.11.]{poliquin1996prox} \label{prop:prox-regularity-indicator-function}
A set $\mathcal{C}$ is prox-regular at a point $\bar{x}\in\mathcal{C}$ for $\bar{v}$ if and only if its indicator function $\mathbbm{1}_{\mathcal{C}}$ is prox-regular at $\bar{x}$ for $\bar{v}$, and $\bar{v}\in\mathcal{N}_{\mathcal{C}}(\bar{x})=\partial_{p}\mathbbm{1}_{\mathcal{C}}$.
\end{prop}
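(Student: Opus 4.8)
The plan is to read Proposition~\ref{prop:prox-regularity-indicator-function} as a definitional bridge: it asserts that the set-level notion of prox-regularity introduced in Definition~\ref{def:prox-regularity} is literally the functional prox-regularity of the indicator $\mathbbm{1}_{\mathcal{C}}$, and that the two associated first-order objects (the normal cone and the proximal subdifferential) coincide. The only substantive ingredient is the identification of the proximal subgradients of $\mathbbm{1}_{\mathcal{C}}$ at $x\in\mathcal{C}$ with the proximal normals of $\mathcal{C}$ at $x$: a vector $v$ is a proximal subgradient of $\mathbbm{1}_{\mathcal{C}}$ precisely when $\langle v, x'-x\rangle \le \frac{\gamma}{2}|x'-x|^2$ for all nearby $x'\in\mathcal{C}$, which is exactly the defining inequality of a proximal normal. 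Passing to limits of such vectors then matches the general normal cone $\mathcal{N}_{\mathcal{C}}$ with the limiting subdifferential, giving the equality $\mathcal{N}_{\mathcal{C}}(\bar x)=\partial_p\mathbbm{1}_{\mathcal{C}}$ stated in the proposition.

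First I would write out the functional definition of prox-regularity applied to $f=\mathbbm{1}_{\mathcal{C}}$, namely the existence of $\epsilon>0$ and $\gamma>0$ with
\[
  f(x') \ge f(x) + \langle v, x'-x\rangle - \tfrac{\gamma}{2}|x'-x|^2
\]
for all $x,x'$ near $\bar x$ and all $v\in\partial_p f(x)$ near $\bar v$, subject to the $f$-attentive localization $|f(x)-f(\bar x)|<\epsilon$. Because $\mathbbm{1}_{\mathcal{C}}$ equals $0$ on $\mathcal{C}$ and $+\infty$ off $\mathcal{C}$, the $f$-attentive condition is automatic for $x\in\mathcal{C}$ and the inequality is vacuous unless $x'\in\mathcal{C}$ as well. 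Restricting to $x,x'\in\mathcal{C}$ collapses both indicator values to zero and reduces the displayed inequality to $0\ge\langle v, x'-x\rangle-\frac{\gamma}{2}|x'-x|^2$, which is precisely the set-prox-regularity inequality of Proposition~\ref{prop:prox-regular-inequation}. Running this reduction in both directions, and using the subgradient--normal identification above to pass between $v\in\partial_p\mathbbm{1}_{\mathcal{C}}(x)$ and the proximal normals in $\mathcal{N}_{\mathcal{C}}(x)$, yields the claimed equivalence.

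The point I would be most careful about is the bookkeeping among the several flavours of normal cone and subdifferential (proximal versus general/limiting) and the role of the $f$-attentive localization $|f(x)-f(\bar x)|<\epsilon$: it is exactly this localization that makes the functional theory delicate in general, yet for an indicator it trivialises, which is why the equivalence holds so cleanly. Since Poliquin and Rockafellar~\cite{poliquin1996prox} establish the general statement, the economical route in the paper is simply to invoke their Proposition~2.11; the sketch above is the self-contained check that, for $f=\mathbbm{1}_{\mathcal{C}}$, the two descent inequalities carry identical content. Combined with Proposition~\ref{prop:prox-regularity-rank-one-set}, this then delivers the prox-regularity of $\mathbbm{1}_{\mathcal{R}_1}$ and a well-defined proximal subdifferential $\partial_p\mathbbm{1}_{\mathcal{R}_1}$, which is precisely what the subsequent convergence analysis relies on.
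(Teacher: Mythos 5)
Your proposal is correct, but note that the paper offers no proof of this proposition at all: it is imported verbatim as Proposition 2.11 of Poliquin and Rockafellar \cite{poliquin1996prox}, and the surrounding text immediately applies it (together with Proposition \ref{prop:prox-regularity-rank-one-set}) to conclude that $\mathbbm{1}_{\mathcal{R}_1}$ is prox-regular with $\partial_{p}\mathbbm{1}_{\mathcal{R}_1}(\bm{X})=\mathcal{N}_{\mathcal{R}_1}(\bm{X})$. Your sketch therefore supplies content the paper delegates entirely to the citation, and it is the right argument: for $f=\mathbbm{1}_{\mathcal{C}}$ the $f$-attentive localization $|f(x)-f(\bar{x})|<\epsilon$ is equivalent to $x\in\mathcal{C}$, the subgradient inequality is vacuous when $x'\notin\mathcal{C}$, and on pairs $x,x'\in\mathcal{C}$ the functional descent inequality collapses to exactly the inequality of Proposition \ref{prop:prox-regular-inequation}, while the proximal-subgradient inequality for an indicator is verbatim the proximal-normal inequality, so that $\partial_{p}\mathbbm{1}_{\mathcal{C}}(x)$ consists precisely of the proximal normals to $\mathcal{C}$ at $x$. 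The one step you gloss is the passage from proximal normals to the general normal cone appearing in the statement: the equality $\mathcal{N}_{\mathcal{C}}(\bar{x})=\partial_{p}\mathbbm{1}_{\mathcal{C}}(\bar{x})$ uses the fact that, under prox-regularity, limiting normals at points near $\bar{x}$ with $v$ near $\bar{v}$ are themselves proximal normals — this is a substantive point that Poliquin and Rockafellar prove rather than merely rebrand, so a fully written-out version of your argument would need a sentence (or a further citation) covering it. As it stands, your sketch is a faithful and more informative substitute for the paper's bare citation, and it correctly identifies why the equivalence trivializes for indicator functions even though it is delicate for general $f$.
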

Hence, the indicator function $\mathbbm{1}_{\mathcal{R}_{1}}(\bm{X})$ is prox-regular, and its prox-subdifferential is formulated as
\begin{equation}
    \partial_{p}\mathbbm{1}_{\mathcal{R}_{1}}(\bm{X})=\mathcal{N}_{\mathcal{R}_{1}}(\bm{X}).
\end{equation}

Consider the ADMM iterations (\ref{eq:ADMM-X1-update}-\ref{eq:ADMM-Z3-update}) of updating the variables $\bm{X}_{1,k}$, $\bm{X}_{2,k}$, $\bm{X}_{3,k}$, $\bm{Z}_{1,k}$, $\bm{Z}_{2,k}$ and $\bm{Z}_{3}$, we have the first order optimalities as
\begin{subequations}
\begin{align}
\bm{0} & \in\partial\Vert\bm{X}_{1,k}^{l+1}\Vert_{2,1}+\rho_1(\bm{X}_{1,k}^{l+1}-\bm{X}_{3,k}^{l}-\bm{Z}_{1,k}^{l}+\bm{\Lambda}_{1,k}^{l}),\label{eq:1stOdrOpt-X1-update} \\
\bm{0} & \in\partial_{p}\mathbbm{1}_{\mathcal{R}_{1}}(\bm{X}_{2,k}^{l+1})+\rho_2(\bm{X}_{2,k}^{l+1}-\bm{X}_{3,k}^{l}-\bm{Z}_{2,k}^{l}+\bm{\Lambda}_{2,k}^{l})\label{eq:1stOdrOpt-X2-update} \\
\bm{0} & \in\rho_1(\bm{X}_{k}^{l+1}-\bm{X}_{3,k}^{l+1}-\bm{Z}_{1,k}^{l}+\bm{\Lambda}_{1,k}^{l})+\rho_2(\bm{X}_{2,k}^{l+1}-\bm{X}_{3,k}^{l+1} \nonumber\\
& -\bm{Z}_{2,k}^{l}+\bm{\Lambda}_{2,k}^{l})+\rho_3(\bm{Y}-\sum_{k}\bm{X}_{3,k}^{l+1}-\bm{Z}_{3}^{l}+\bm{\Lambda}_{3}^{l}), \label{eq:1stOdrOpt-X3-update} \\
\bm{0} & \in \beta_1\bm{Z}_{1,k}^{l+1}-\rho_1(\bm{X}_{1,k}^{l+1}-\bm{X}_{3,k}^{l+1}-\bm{Z}_{1,k}^{l+1}+\bm{\Lambda}_{1,k}^{l})  \nonumber\\
& = \beta_1\bm{Z}_{1,k}^{l+1}-\rho_1\bm{\Lambda}_{1,k}^{l+1},\label{eq:1stOdrOpt-Z1-update}  \\
\bm{0} & \in \beta_2\bm{Z}_{2,k}^{l+1}-\rho_2(\bm{X}_{2,k}^{l+1}-\bm{X}_{3,k}^{l+1}-\bm{Z}_{2,k}^{l+1}+\bm{\Lambda}_{2,k}^{l})  \nonumber\\
& = \beta_2\bm{Z}_{2,k}^{l+1}-\rho_2\bm{\Lambda}_{2,k}^{l+1},\label{eq:1stOdrOpt-Z2-update}  \\
\bm{0} & \in \beta_3\bm{Z}_{3}^{l+1}-\rho_3(\bm{Y}-\sum_k\bm{X}_{3,k}^{l+1}-\bm{Z}_{3}^{l+1}+\bm{\Lambda}_{3}^{l})  \nonumber\\
& = \beta_3\bm{Z}_{3}^{l+1}-\rho_3\bm{\Lambda}_{3}^{l+1},\label{eq:1stOdrOpt-Z3-update}
\end{align}
\end{subequations}
for all $k\in[K]$. 

Recall the augmented Lagrangian $\mathcal{L}_{\rho 's}$ of inexact ADMM form of ROAD with different penalty parameters $\rho 's$ in \eqref{eq:Augmented-Lagrangian}. To prove sufficient descent of $\mathcal{L}_{\rho 's}$, we look into the difference in $\mathcal{L}_{\rho 's}$ after updating each variable.
We firstly focus on the steps of updating variables $\bm{X}^l$'s. As $\bm{X}_{1,k}^{l+1}$ is the optimal solution to the subproblem \eqref{eq:ADMM-X1-update}, we have
\begin{align}   \label{eq:LagDif-X1-Update}
&\mathcal{L}_{\rho 's}\left(\bm{X}_{1,k}^{l},\bm{X}_{2,k}^{l},\bm{X}_{3,k}^{l},\bm{Z}^{l}{'}s,\bm{\Lambda}^{l}{'}s\right) \nonumber\\
&-\mathcal{L}_{\rho 's}\left(\bm{X}_{1,k}^{l+1},\bm{X}_{2,k}^{l},\bm{X}_{3,k}^{l},\bm{Z}^{l}{'}s,\bm{\Lambda}^{l}{'}s\right)\nonumber\\
=&\sum_{k}(\Vert\bm{X}_{1,k}^{l}\Vert_{2,1}-\Vert\bm{X}_{1,k}^{l+1}\Vert_{2,1}+\frac{\rho_1}{2}\Vert\bm{X}_{1,k}^{l}-\bm{X}_{1,k}^{l+1}\Vert_{F}^{2}  \nonumber\\
+&\rho_1{\rm tr}\langle\bm{X}_{1,k}^{l+1}-\bm{X}_{3,k}^{l}-\bm{Z}_{1,k}^{l}+\bm{\Lambda}_{1,k}^{l},\bm{X}_{1,k}^{l}-\bm{X}_{1,k}^{l+1}\rangle \nonumber\\
=& \sum_{k}(\Vert\bm{X}_{1,k}^{l}\Vert_{2,1}-\Vert\bm{X}_{1,k}^{l+1}\Vert_{2,1}+\frac{\rho_1}{2}\Vert\bm{X}_{1,k}^{l}-\bm{X}_{1,k}^{l+1}\Vert_{F}^{2}  \nonumber\\
-&{\rm tr}\langle\partial\Vert\bm{X}_{1,k}^{l+1}\Vert_{2,1},\bm{X}_{1,k}^{l}-\bm{X}_{1,k}^{l+1}\rangle \nonumber\\
\geq&\frac{\rho_1}{2}\sum_{k}\Vert\bm{X}_{1,k}^{l}-\bm{X}_{1,k}^{l+1}\Vert_{F}^{2}, 
\end{align}
where the first equality follows the cosine rule: $\Vert \bm{B}+\bm{C}\Vert_{F}^{2}-\Vert \bm{A}+\bm{C}\Vert_{F}^{2}=\Vert \bm{B}-\bm{A}\Vert_{F}^{2}+2{\rm tr}\langle\bm{A}+\bm{C},\bm{B}-\bm{A}\rangle$, the second equality holds due to the first order optimalities \eqref{eq:1stOdrOpt-X1-update}, and the last inequality holds for the convexity  of $\Vert\cdot\Vert_{2,1}$.

As the indicator function of rank-one set $\mathbbm{1}_{\mathcal{R}_{1}}(\cdot)$ is lower-semicontinuous, and $\bm{X}_{2,k}^{l+1}$ minimizes the subproblem \eqref{eq:ADMM-X2-update}, it is straightforward to obtain that
\begin{align}\label{eq:LagDif-X2-Update}
&\mathcal{L}_{\rho 's}\left(\bm{X}_{1,k}^{l},\bm{X}_{2,k}^{l},\bm{X}_{3,k}^{l},\bm{Z}^{l}{'}s,\bm{\Lambda}^{l}{'}s\right) \nonumber\\
&-\mathcal{L}_{\rho 's}\left(\bm{X}_{1,k}^{l+1},\bm{X}_{2,k}^{l+1},\bm{X}_{3,k}^{l},\bm{Z}^{l}{'}s,\bm{\Lambda}^{l}{'}s\right)\geq 0 
\end{align}

Similarly, for $\bm{X}_{3}^{l+1}$ is the minimum of the subproblem \eqref{eq:ADMM-X3-update}, we have
\begin{align}  \label{eq:LagDif-X3-Update}
&\mathcal{L}_{\rho 's}\left(\bm{X}_{1,k}^{l+1},\bm{X}_{2,k}^{l+1},\bm{X}_{3,k}^{l},\bm{Z}^{l}{'}s,\bm{\Lambda}^{l}{'}s\right) \nonumber\\
&-\mathcal{L}_{\rho 's}\left(\bm{X}_{1,k}^{l+1},\bm{X}_{2,k}^{l+1},\bm{X}_{3,k}^{l+1},\bm{Z}^{l}{'}s,\bm{\Lambda}^{l}{'}s\right)\nonumber\\
=& \frac{\rho_1+\rho_2}{2}\sum_k\Vert\bm{X}_{3,k}^{l}-\bm{X}_{3,k}^{l+1}\Vert_{F}^{2}+2\sum_k{\rm tr}\langle\rho_1(\bm{X}_{1,k}^{l+1}-\bm{X}_{3,k}^{l+1}\nonumber\\
-& \bm{Z}_{1,k}^{l}+ \bm{\Lambda}_{1,k}^{l})+\rho_2(\bm{X}_{2,k}^{l+1}-\bm{X}_{3,k}^{l+1}-\bm{Z}_{2,k}^{l}+\bm{\Lambda}_{2,k}^{l}), \nonumber\\
& \bm{X}_{3,k}^{l}-\bm{X}_{3,k}^{l+1}\rangle) + \frac{\rho_3}{2}\Vert\sum_k\bm{X}_{3,k}^{l}-\sum_k\bm{X}_{3,k}^{l+1}\Vert_{F}^{2} \nonumber\\
+&\rho_3{\rm tr}\langle\bm{Y}-\sum_k\bm{X}_{3,k}^{l+1}-\bm{Z}_{3}^{l}+\bm{\Lambda}_{3}^{l},\sum_k\bm{X}_{3,k}^{l}-\sum_k\bm{X}_{3,k}^{l+1}\rangle \nonumber\\
=& \frac{\rho_1+\rho_2}{2}\sum_k\Vert\bm{X}_{3,k}^{l}-\bm{X}_{3,k}^{l+1}\Vert_{F}^{2}+\frac{\rho_3}{2}\Vert\sum_k(\bm{X}_{3,k}^{l}-\bm{X}_{3,k}^{l+1})\Vert_{F}^{2},
\end{align}
where the last equality follows the first order optimality \eqref{eq:1stOdrOpt-X3-update}. To ensure the sufficient decent of Augmented Lagrangian after each step of updating the variables $\bm{X}^{l}$'s, it requires that $\frac{\rho_1}{2}>1$, $\frac{\rho_2}{2}>1$ and $\frac{\rho_3}{2}>1$, that is $\rho_1>2$, $\rho_2>2$ and $\rho_3>2$.

Considering the ADMM steps of updating variables $\bm{Z}$'s in (\ref{eq:ADMM-Z1-update}-\ref{eq:ADMM-Z3-update}) and computing the difference of Lagrangian after these steps, we have
\begin{align}  \label{eq:LagDif-Z-Update}
&\mathcal{L}_{\rho 's}\left(\bm{X}^{l+1}{'}s,\bm{Z}_{1,k}^{l},\bm{Z}_{2,k}^{l},\bm{Z}_{3}^{l},\bm{\Lambda}^{l}{'}s\right) \nonumber\\
&-\mathcal{L}_{\rho 's}\left(\bm{X}^{l+1}{'}s,\bm{Z}_{1,k}^{l+1},\bm{Z}_{2,k}^{l+1},\bm{Z}_{3}^{l+1},\bm{\Lambda}^{l}{'}s\right)\nonumber\\
=&\sum_k(\frac{\beta_1}{2}\Vert\bm{Z}_{1,k}^{l}\Vert_{F}^{2}-\frac{\beta_1}{2}\Vert\bm{Z}_{1,k}^{l+1}\Vert_{F}^{2}+\frac{\rho_1}{2}\Vert\bm{Z}_{1,k}^{l}-\bm{Z}_{1,k}^{l+1}\Vert_{F}^{2} \nonumber\\
-&\rho_1{\rm tr}\langle \bm{X}_{1,k}^{l+1}-\bm{X}_{3,k}^{l+1}-\bm{Z}_{1,k}^{l+1}+\bm{\Lambda}_{1,k}^{l},\bm{Z}_{1,k}^{l}-\bm{Z}_{1,k}^{l+1} \rangle  \nonumber\\
+& \frac{\beta_2}{2}\Vert\bm{Z}_{2,k}^{l}\Vert_{F}^{2}-\frac{\beta_2}{2}\Vert\bm{Z}_{2,k}^{l+1}\Vert_{F}^{2}+\frac{\rho_2}{2}\Vert\bm{Z}_{2,k}^{l}-\bm{Z}_{2,k}^{l+1}\Vert_{F}^{2} \nonumber\\
-&\rho_2{\rm tr}\langle \bm{X}_{2,k}^{l+1}-\bm{X}_{3,k}^{l+1}-\bm{Z}_{2,k}^{l+1}+\bm{\Lambda}_{2,k}^{l},\bm{Z}_{2,k}^{l}-\bm{Z}_{2,k}^{l+1} \rangle ) \nonumber\\
+& \frac{\beta_3}{2}\Vert\bm{Z}_{3}^{l}\Vert_{F}^{2}-\frac{\beta_3}{2}\Vert\bm{Z}_{3}^{l+1}\Vert_{F}^{2}+\frac{\rho_3}{2}\Vert\bm{Z}_{3}^{l}-\bm{Z}_{3}^{l+1}\Vert_{F}^{2} \nonumber\\
-&\rho_3{\rm tr}\langle \bm{Y}-\sum_k\bm{X}_{3,k}^{l+1}-\bm{Z}_{3}^{l+1}+\bm{\Lambda}_{1,k}^{l},\bm{Z}_{3}^{l}-\bm{Z}_{3}^{l+1} \rangle  \nonumber\\
=& \sum_k(\frac{\beta_1}{2}\Vert\bm{Z}_{1,k}^{l}\Vert_{F}^{2}-\frac{\beta_1}{2}\Vert\bm{Z}_{1,k}^{l+1}\Vert_{F}^{2}+\frac{\rho_1}{2}\Vert\bm{Z}_{1,k}^{l}-\bm{Z}_{1,k}^{l+1}\Vert_{F}^{2} \nonumber\\
-&\rho_1{\rm tr}\langle\bm{\Lambda}_{1,k}^{l+1},\bm{Z}_{1,k}^{l}-\bm{Z}_{1,k}^{l+1} \rangle +\frac{\beta_2}{2}\Vert\bm{Z}_{2,k}^{l}\Vert_{F}^{2}-\frac{\beta_2}{2}\Vert\bm{Z}_{2,k}^{l+1}\Vert_{F}^{2} \nonumber\\
+&\frac{\rho_2}{2}\Vert\bm{Z}_{2,k}^{l}-\bm{Z}_{2,k}^{l+1}\Vert_{F}^{2}-\rho_2{\rm tr}\langle\bm{\Lambda}_{2,k}^{l+1},\bm{Z}_{2,k}^{l}-\bm{Z}_{2,k}^{l+1} \rangle) +\frac{\beta_3}{2}\Vert\bm{Z}_{3}^{l}\Vert_{F}^{2} \nonumber\\
-&\frac{\beta_3}{2}\Vert\bm{Z}_{3}^{l+1}\Vert_{F}^{2}+\frac{\rho_3}{2}\Vert\bm{Z}_{3}^{l}-\bm{Z}_{3}^{l+1}\Vert_{F}^{2} -\rho_3{\rm tr}\langle\bm{\Lambda}_{3}^{l+1},\bm{Z}_{3}^{l}-\bm{Z}_{3}^{l+1} \rangle) \nonumber\\
=&\frac{\beta_1 +\rho_1}{2}\sum_k\Vert\bm{Z}_{1,k}^{l}-\bm{Z}_{1,k}^{l+1}\Vert_{F}^{2}+\frac{\beta_2 +\rho_2}{2}\sum_k\Vert\bm{Z}_{2,k}^{l}-\bm{Z}_{2,k}^{l+1}\Vert_{F}^{2} \nonumber\\
+&\frac{\beta_3 +\rho_3}{2}\Vert\bm{Z}_{3}^{l}-\bm{Z}_{3}^{l+1}\Vert_{F}^{2},
\end{align}
where the second last equality holds for the updating rules of $\bm{\Lambda}'s$ and the last equality results from the first order optimalities (\ref{eq:1stOdrOpt-Z1-update}-\ref{eq:1stOdrOpt-Z3-update}).

Now we look at the difference in augmented Lagrangian after $\bm{\Lambda}' s$ updating steps. We have
\begin{align}  \label{eq:LagDif-Lambda-Update}
&\mathcal{L}_{\rho 's}\left(\bm{X}^{l+1}{'}s,\bm{Z}^{l+1}{'}s,\bm{\Lambda}_{1,k}^{l},\bm{\Lambda}_{2,k}^{l},\bm{\Lambda}_{3}^{l}\right) \nonumber\\
&-\mathcal{L}_{\rho 's}\left(\bm{X}^{l+1}{'}s,\bm{Z}^{l+1}{'}s,\bm{\Lambda}_{1,k}^{l+1},\bm{\Lambda}_{2,k}^{l+1},\bm{\Lambda}_{3}^{l+1}\right)\nonumber\\
=& \frac{\rho_1}{2}\sum_k\Vert\bm{\Lambda}_{1,k}^{l}-\bm{\Lambda}_{1,k}^{l+1}\Vert_{F}^{2} +\rho_1\sum_k{\rm tr}\langle \bm{X}_{1,k}^{l+1}-\bm{X}_{3,k}^{l+1}-\bm{Z}_{1,k}^{l+1} \nonumber\\
+& \bm{\Lambda}_{1,k}^{l+1},\bm{\Lambda}_{1,k}^{l}-\bm{\Lambda}_{1,k}^{l+1} \rangle + \frac{\rho_2}{2}\sum_k\Vert\bm{\Lambda}_{2,k}^{l}-\bm{\Lambda}_{2,k}^{l+1}\Vert_{F}^{2} \nonumber\\
+& \rho_2\sum_k{\rm tr}\langle \bm{X}_{2,k}^{l+1}-\bm{X}_{3,k}^{l+1}-\bm{Z}_{2,k}^{l+1}
+ \bm{\Lambda}_{2,k}^{l+1},\bm{\Lambda}_{2,k}^{l}-\bm{\Lambda}_{2,k}^{l+1} \rangle \nonumber\\
+& \frac{\rho_3}{2}\Vert\bm{\Lambda}_{3}^{l}-\bm{\Lambda}_{3}^{l+1}\Vert_{F}^{2} +\rho_3\langle \bm{Y}-\sum_k\bm{X}_{3,k}^{l+1}-\bm{Z}_{3}^{l+1} \nonumber\\
+& \bm{\Lambda}_{3}^{l+1},\bm{\Lambda}_{3}^{l}-\bm{\Lambda}_{3}^{l+1} \rangle  \nonumber\\
=& \frac{\rho_1}{2}\sum_k(\Vert\bm{\Lambda}_{1,k}^{l}-\bm{\Lambda}_{1,k}^{l+1}\Vert_{F}^{2} +\rho_1{\rm tr}\langle 2\bm{\Lambda}_{1,k}^{l+1}-\bm{\Lambda}_{1,k}^{l},\bm{\Lambda}_{1,k}^{l}-\bm{\Lambda}_{1,k}^{l+1} \rangle \nonumber\\
+& \frac{\rho_2}{2}\Vert\bm{\Lambda}_{2,k}^{l}-\bm{\Lambda}_{2,k}^{l+1}\Vert_{F}^{2}+ \rho_2{\rm tr}\langle 2\bm{\Lambda}_{2,k}^{l+1}-\bm{\Lambda}_{2,k}^{l},\bm{\Lambda}_{2,k}^{l}-\bm{\Lambda}_{2,k}^{l+1} \rangle ) \nonumber\\
+& \frac{\rho_3}{2}\Vert\bm{\Lambda}_{3}^{l}-\bm{\Lambda}_{3}^{l+1}\Vert_{F}^{2} +\rho_3{\rm tr}\langle 2\bm{\Lambda}_{3}^{l+1}-\bm{\Lambda}_{3}^{l},\bm{\Lambda}_{3}^{l}-\bm{\Lambda}_{3}^{l+1} \rangle  \nonumber\\
=& -\rho_1\sum_k\Vert\bm{\Lambda}_{1,k}^{l}-\bm{\Lambda}_{1,k}^{l+1}\Vert_{F}^{2}-\rho_2\sum_k\Vert\bm{\Lambda}_{2,k}^{l}-\bm{\Lambda}_{2,k}^{l+1}\Vert_{F}^{2} \nonumber\\
-& \rho_3\Vert\bm{\Lambda}_{3}^{l}-\bm{\Lambda}_{3}^{l+1}\Vert_{F}^{2}
\end{align}
According to the first order optimalities (\ref{eq:1stOdrOpt-Z1-update}-\ref{eq:1stOdrOpt-Z3-update}) w.r.t. $\bm{Z}^{l+1}{'}s$, it can be directly derived that $\bm{\Lambda}_{1,k}^{l+1}=\frac{\beta_1}{\rho}\bm{Z}_{1,k}^{l+1}$, $\bm{\Lambda}_{2,k}^{l+1}=\frac{\beta_2}{\rho}\bm{Z}_{2,k}^{l+1}$ and $\bm{\Lambda}_{3}^{l+1}=\frac{\beta_3}{\rho}\bm{Z}_{3}^{l+1}$ for all $k\in [K]$ and $l$. Adding \eqref{eq:LagDif-Z-Update}-\eqref{eq:LagDif-Lambda-Update} together, we have
\begin{align}
&\mathcal{L}_{\rho 's}\left(\bm{X}^{l+1}{'}s,\bm{Z}_{1,k}^{l},\bm{Z}_{2,k}^{l},\bm{Z}_{3}^{l},\bm{\Lambda}_{1,k}^{l},\bm{\Lambda}_{2,k}^{l},\bm{\Lambda}_{3}^{l}\right) \nonumber\\
&-\mathcal{L}_{\rho 's}\left(\bm{X}^{l+1}{'}s,\bm{Z}_{1,k}^{l+1},\bm{Z}_{2,k}^{l+1},\bm{Z}_{3}^{l+1},\bm{\Lambda}_{1,k}^{l+1},\bm{\Lambda}_{2,k}^{l+1},\bm{\Lambda}_{3}^{l+1}\right)\nonumber\\  
=&\frac{\rho_1-\beta_1}{2}\sum_k\Vert\bm{Z}_{1,k}^{l}-\bm{Z}_{1,k}^{l+1}\Vert_{F}^{2}+\frac{\rho_2-\beta_2 }{2}\sum_k\Vert\bm{Z}_{2,k}^{l}-\bm{Z}_{2,k}^{l+1}\Vert_{F}^{2} \nonumber\\
+&\frac{\rho_3-\beta_3 }{2}\Vert\bm{Z}_{3}^{l}-\bm{Z}_{3}^{l+1}\Vert_{F}^{2}.
\end{align}
Hence, sufficient descent of augmented Lagrangian requires $\frac{\rho_1-\beta_1}{2}>1$, $\frac{\rho_2-\beta_2}{2}>1$ and $\frac{\rho_3-\beta_3}{2}>1$, or equivalently, $\rho_1>\beta_1+2$, $\rho_2>\beta_2+2$ and $\rho_3>\beta_3+2$, respectively.

In conclusion, when the lower bounds of the penalty parameters $\rho' s$ satisfy  $\rho_1>\beta_1+2$, $\rho_2>\beta_2+2$ and $\rho_3>\beta_3+2$ respectively, the augmented Lagrangian $\mathcal{L}_{\rho' s}$ descents in each iteration of $l$.

\subsubsection{Convergence of $\mathcal{L}_{\rho 's}$}  

After we prove that $\mathcal{L}_{\rho 's}$ descends every iteration, we can now show the convergence of $\mathcal{L}_{\rho 's}$. 

\begin{lem}\label{lem:Lagranian-converges}
$\mathcal{L}_{\rho}\left(\bm{X}^{l}{\rm 's},\bm{Z}^{l}{\rm 's},\bm{\Lambda}^{l}{\rm 's}\right)$ is lower bounded for all $k\in\mathbb{N}$ and converges as $l\rightarrow\infty$.
\end{lem}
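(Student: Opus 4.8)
The plan is to couple the monotone descent already established in Lemma~\ref{lem:Lagranian-descends} with a uniform lower bound on $\mathcal{L}_{\rho 's}$ along the iterates, and then invoke the monotone convergence theorem for real sequences: a sequence that is non-increasing and bounded below necessarily converges to a finite limit. Since the descent has been proven, the only substantive task is the lower bound.

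The obstacle is that the augmented Lagrangian \eqref{eq:Augmented-Lagrangian-DifRho} carries the indefinite terms $-\frac{\rho_1}{2}\Vert\bm{\Lambda}_{1,k}\Vert_{F}^{2}$, $-\frac{\rho_2}{2}\Vert\bm{\Lambda}_{2,k}\Vert_{F}^{2}$ and $-\frac{\rho_3}{2}\Vert\bm{\Lambda}_{3}\Vert_{F}^{2}$, so non-negativity of $\mathcal{L}_{\rho 's}$ is not immediate. I would eliminate these using the dual relations already recorded after \eqref{eq:LagDif-Lambda-Update}, namely $\bm{\Lambda}_{1,k}^{l+1}=\frac{\beta_1}{\rho_1}\bm{Z}_{1,k}^{l+1}$, $\bm{\Lambda}_{2,k}^{l+1}=\frac{\beta_2}{\rho_2}\bm{Z}_{2,k}^{l+1}$ and $\bm{\Lambda}_{3}^{l+1}=\frac{\beta_3}{\rho_3}\bm{Z}_{3}^{l+1}$, which hold at every iterate by the first-order optimalities \eqref{eq:1stOdrOpt-Z1-update}--\eqref{eq:1stOdrOpt-Z3-update}. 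Substituting each relation pairs a negative multiplier term with its companion positive quadratic in $\bm{Z}$; for the index-$1$ block, for instance,
\begin{equation*}
\frac{\beta_1}{2}\Vert\bm{Z}_{1,k}^{l}\Vert_{F}^{2} - \frac{\rho_1}{2}\Vert\bm{\Lambda}_{1,k}^{l}\Vert_{F}^{2} = \frac{\beta_1}{2}\left(1-\frac{\beta_1}{\rho_1}\right)\Vert\bm{Z}_{1,k}^{l}\Vert_{F}^{2},
\end{equation*}
and analogously for the index-$2$ and index-$3$ blocks.

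Because the penalty thresholds of Theorem~\ref{thm:convergence-ROAD} enforce $\rho_1>\beta_1+2>\beta_1$, $\rho_2>\beta_2$ and $\rho_3>\beta_3$, every factor $1-\beta_i/\rho_i$ is strictly positive, so each recombined term is non-negative. All remaining summands of $\mathcal{L}_{\rho 's}$ are manifestly non-negative as well: the norm $\Vert\bm{X}_{1,k}\Vert_{2,1}$, the indicator $\mathbbm{1}_{\mathcal{R}_{1}}(\bm{X}_{2,k})$, and the ADMM quadratics $\tfrac{\rho_i}{2}\Vert\cdot\Vert_{F}^{2}$. Hence $\mathcal{L}_{\rho 's}\bigl(\bm{X}^{l}{'}s,\bm{Z}^{l}{'}s,\bm{\Lambda}^{l}{'}s\bigr)\ge 0$ for every $l$, which supplies the required uniform lower bound (note this bound is specific to the trajectory, as the dual relations hold only at the iterates).

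With the sequence non-increasing by Lemma~\ref{lem:Lagranian-descends} and bounded below by $0$, the monotone convergence theorem yields a finite limit as $l\rightarrow\infty$, completing the proof. I expect the substitution of the dual relations to be the crux: once the negative multiplier terms are absorbed into the $\beta$-quadratics under the condition $\rho_i>\beta_i$, lower-boundedness is immediate and the conclusion is a one-line appeal to monotone convergence.
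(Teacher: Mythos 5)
Your proof is correct, and it reaches the same conclusion as the paper by a genuinely more direct route. The paper's proof introduces auxiliary primal-feasible points $\bm{Z}_{1,k}^{'}=\bm{X}_{1,k}^{l}-\bm{X}_{3,k}^{l}$, $\bm{Z}_{2,k}^{'}=\bm{X}_{2,k}^{l}-\bm{X}_{3,k}^{l}$ and $\bm{Z}_{3}^{'}=\bm{Y}-\sum_k\bm{X}_{3,k}^{l}$, rewrites each augmented quadratic so the multiplier appears only in a cross term $\mathrm{tr}\langle\rho_i\bm{\Lambda}_{i}^{l},\bm{Z}_{i}^{'}-\bm{Z}_{i}^{l}\rangle$, identifies $\rho_i\bm{\Lambda}_{i}^{l}$ with $\partial\bigl(\tfrac{\beta_i}{2}\Vert\bm{Z}_{i}^{l}\Vert_F^2\bigr)$ via the same first-order optimalities (\ref{eq:1stOdrOpt-Z1-update}--\ref{eq:1stOdrOpt-Z3-update}) you invoke, and then appeals to convexity of the squared Frobenius norm to conclude nonnegativity. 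You bypass the auxiliary construction entirely: substituting $\bm{\Lambda}_{i}^{l}=(\beta_i/\rho_i)\bm{Z}_{i}^{l}$ directly into \eqref{eq:Augmented-Lagrangian-DifRho} collapses each indefinite pair to $\tfrac{\beta_i}{2}\bigl(1-\tfrac{\beta_i}{\rho_i}\bigr)\Vert\bm{Z}_{i}^{l}\Vert_F^2\ge 0$, with all remaining summands manifestly nonnegative. Two remarks in your favor. First, your route makes the role of the condition $\rho_i>\beta_i$ completely explicit, whereas in the paper it is hidden: as literally written, the paper's convexity step has the inequality in the suspect direction (the linearization of a convex $f$ at $\bm{Z}^{l}$ \emph{under}-estimates $f(\bm{Z}^{'})$, so replacing $f(\bm{Z}^{l})+\langle\nabla f(\bm{Z}^{l}),\bm{Z}^{'}-\bm{Z}^{l}\rangle$ by $f(\bm{Z}^{'})$ over-estimates); the exact algebraic identity gives $\tfrac{\beta_i}{2}\Vert\bm{Z}_{i}^{'}\Vert_F^2+\tfrac{\rho_i-\beta_i}{2}\Vert\bm{Z}_{i}^{'}-\bm{Z}_{i}^{l}\Vert_F^2$ rather than the paper's stated coefficient $\tfrac{\rho_i}{2}$, so the conclusion survives precisely because $\rho_i>\beta_i$ — the condition your computation isolates. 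Second, you are right to flag that the bound is trajectory-specific: the dual relations hold only at iterates ($l\ge 1$, after the first $\bm{Z}$- and $\bm{\Lambda}$-updates), which suffices since only lower-boundedness along the sequence is needed before invoking monotone convergence.
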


\begin{proof}
There exist $\bm{Z}_{1,k}^{'}$, $\bm{Z}_{2,k}^{'}$ and $\bm{Z}_{3}^{'}$ such that
 $\bm{Z}_{1,k}^{'} = \bm{X}_{1,k}^{l}-\bm{X}_{3,k}^{l}$, $\bm{Z}_{2,k}^{'} = \bm{X}_{2,k}^{l}-\bm{X}_{3,k}^{l},\;\forall k\in [K]$ and $\sum_k \bm{X}_{3,k}^{l}+\bm{Z}_{3}^{'}=\bm{Y}$. Then we have
\begin{align}
 & \mathcal{L}_{\rho 's}\left(\bm{X}^{l}{\rm 's},\bm{Z}^{l}{\rm 's},\bm{\Lambda}^{l}{\rm 's}\right) \nonumber\\
= & \sum_{k}\left( \Vert\bm{X}_{1,k}^{l}\Vert_{2,1} + \frac{\rho_1}{2} \Vert \bm{Z}_{1,k}^{'}-\bm{Z}_{1,k}^{l}\Vert _{F}^{2} \right) \nonumber\\
+ & \sum_{k}\left( \mathbbm{1}_{\mathcal{R}_{1}}(\bm{X}_{2,k}^{l})+ \frac{\rho_2}{2} \Vert  \bm{Z}_{2,k}^{'}-\bm{Z}_{2,k}^{l}\Vert _{F}^{2} \right) 
+ \frac{\rho_3}{2}\Vert\bm{Z}_{3}^{'}-\bm{Z}_{3}^{l} \Vert_{F}^{2}\nonumber\\
+& \sum_k\left(\frac{\beta_1}{2}\Vert\bm{Z}_{1,k}^{l}\Vert_{F}^{2}+{\rm tr}\langle \rho_1\bm{\Lambda}_{1,k}^{l},\bm{Z}_{1,k}^{'}-\bm{Z}_{1,k}^{l}\rangle\right) \nonumber\\
+& \sum_k\left(\frac{\beta_2}{2}\Vert\bm{Z}_{2,k}^{l}\Vert_{F}^{2}+{\rm tr}\langle \rho_2\bm{\Lambda}_{2,k}^{l},\bm{Z}_{2,k}^{'}-\bm{Z}_{2,k}^{l}\rangle\right) \nonumber\\
+ &\frac{\beta_3}{2}\Vert\bm{Z}_{3}^{l}\Vert_{F}^{2} + {\rm tr}\langle\rho_3\bm{\Lambda}_{3}^{l},\bm{Z}_{3}^{'}-\bm{Z}_{3}^{l}\rangle \nonumber\\
= & \sum_{k}\left( \Vert\bm{X}_{1,k}^{l}\Vert_{2,1} + \frac{\rho_1}{2} \Vert \bm{Z}_{1,k}^{'}-\bm{Z}_{1,k}^{l}\Vert _{F}^{2} \right) \nonumber\\
+ & \sum_{k}\left( \mathbbm{1}_{\mathcal{R}_{1}}(\bm{X}_{2,k}^{l})+ \frac{\rho_2}{2} \Vert  \bm{Z}_{2,k}^{'}-\bm{Z}_{2,k}^{l}\Vert _{F}^{2} \right) 
+ \frac{\rho_3}{2}\Vert\bm{Z}_{3}^{'}-\bm{Z}_{3}^{l} \Vert_{F}^{2}\nonumber\\
+& \sum_k\left(\frac{\beta_1}{2}\Vert\bm{Z}_{1,k}^{l}\Vert_{F}^{2}+{\rm tr}\langle \partial(\frac{\beta_1}{2}\Vert\bm{Z}_{1,k}^{l}\Vert_{F}^{2}),\bm{Z}_{1,k}^{'}-\bm{Z}_{1,k}^{l}\rangle\right) \nonumber\\
+& \sum_k\left(\frac{\beta_2}{2}\Vert\bm{Z}_{2,k}^{l}\Vert_{F}^{2}+{\rm tr}\langle \partial(\frac{\beta_2}{2}\Vert\bm{Z}_{2,k}^{l}\Vert_{F}^{2}),\bm{Z}_{2,k}^{'}-\bm{Z}_{2,k}^{l}\rangle\right) \nonumber\\
+ &\frac{\beta_3}{2}\Vert\bm{Z}_{3}^{l}\Vert_{F}^{2} + {\rm tr}\langle\partial(\frac{\beta_3}{2}\Vert\bm{Z}_{3}^{l}\Vert_{F}^{2}),\bm{Z}_{3}^{'}-\bm{Z}_{3}^{l}\rangle \nonumber\\
\geq & \sum_{k}\left( \Vert\bm{X}_{1,k}^{l}\Vert_{2,1} + \frac{\rho_1}{2} \Vert \bm{Z}_{1,k}^{'}-\bm{Z}_{1,k}^{l}\Vert _{F}^{2} \right) \nonumber\\
+ & \sum_{k}\left( \mathbbm{1}_{\mathcal{R}_{1}}(\bm{X}_{2,k}^{l})+ \frac{\rho_2}{2} \Vert  \bm{Z}_{2,k}^{'}-\bm{Z}_{2,k}^{l}\Vert _{F}^{2} \right) 
+ \frac{\rho_3}{2}\Vert\bm{Z}_{3}^{'}-\bm{Z}_{3}^{l} \Vert_{F}^{2}\nonumber\\
+ & \sum_k\left(\frac{\beta_1}{2}\Vert\bm{Z}_{1,k}^{'}\Vert_{F}^{2}+\frac{\beta_2}{2}\Vert\bm{Z}_{2,k}^{'}\Vert_{F}^{2}\right)+\frac{\beta_3}{2}\Vert\bm{Z}_{3}^{'}\Vert_{F}^{2} \nonumber\\
\geq & 0,
\end{align}
where the second equality holds for the first order optimalities of updating the variables $\bm{Z}_{1,k}^{l}$, $\bm{Z}_{2,k}^{l}$ and $\bm{Z}_{3}^{l}$, and the last second inequality follows the convexity of Frobenius norm. For $\mathcal{L}_{\rho 's}\left(\bm{X}^{l}{\rm 's},\bm{Z}^{l}{\rm 's},\bm{\Lambda}^{l}{\rm 's}\right)$ is lower bounded and descends every step, it converges as $l\rightarrow\infty$.
\end{proof}

\subsubsection{$\partial\mathcal{L}_{\rho 's}$ converges to zero}

Consider $\partial\mathcal{L}_{\rho 's}$ as
\begin{align}
\partial\mathcal{L}_{\rho 's}\left(\bm{X}^{l}{\rm 's},\bm{Z}^{l}{\rm 's},\bm{\Lambda}^{l}{\rm 's}\right)=\left(\frac{\partial\mathcal{L}_{\rho 's}}{\partial \bm{X}^{l}{\rm 's}},\frac{\partial\mathcal{L}_{\rho 's}}{\partial \bm{Z}^{l}{\rm 's}},\frac{\partial\mathcal{L}_{\rho 's}}{\partial \bm{\Lambda}^{l}{\rm 's}}\right), \label{eq:explicit-expression-pratial-rho} 
\end{align}
where
\begin{subequations}
\begin{align}
&\frac{\partial\mathcal{L}_{\rho 's}}{\partial \bm{X}_{1,k}^{l}}=\partial\Vert\bm{X}_{1,k}^{l}\Vert_{2,1}+\rho_1(\bm{X}_{1,k}^{l}-\bm{X}_{3,k}^{l}-\bm{Z}_{1,k}^{l}+\bm{\Lambda}_{1,k}^{l}), \\
&\frac{\partial\mathcal{L}_{\rho 's}}{\partial \bm{X}_{2,k}^{l}}=\partial_{p}\mathbbm{1}_{\mathcal{R}_{1}}(\bm{X}_{2,k}^{l})+\rho_2(\bm{X}_{2,k}^{l}-\bm{X}_{3,k}^{l}-\bm{Z}_{2,k}^{l}+\bm{\Lambda}_{2,k}^{l}), \\
&\frac{\partial\mathcal{L}_{\rho 's}}{\partial \bm{X}_{3,k}^{l}}=-\rho_1(\bm{X}_{1,k}^{l}-\bm{X}_{3,k}^{l}-\bm{Z}_{1,k}^{l}+\bm{\Lambda}_{1,k}^{l})-\rho_2(\bm{X}_{2,k}^{l} \nonumber\\
&-\bm{X}_{3,k}^{l}-\bm{Z}_{2,k}^{l}+\bm{\Lambda}_{2,k}^{l})-\rho_3(\bm{Y}-\sum_k\bm{X}_{3,k}^{l}-\bm{Z}_{3}^{l}+\bm{\Lambda}_{3}^{l}), \\
&\frac{\partial\mathcal{L}_{\rho 's}}{\partial \bm{Z}_{1,k}^{l}}=\beta_1\bm{Z}_{1,k}^{l}-\rho_1(\bm{X}_{1,k}^{l}-\bm{X}_{3,k}^{l}-\bm{Z}_{1,k}^{l}+\bm{\Lambda}_{1,k}^{l}), \\
&\frac{\partial\mathcal{L}_{\rho 's}}{\partial \bm{Z}_{2,k}^{l}}=\beta_2\bm{Z}_{2,k}^{l}-\rho_2(\bm{X}_{2,k}^{l}-\bm{X}_{3,k}^{l}-\bm{Z}_{2,k}^{l}+\bm{\Lambda}_{2,k}^{l}), \\
&\frac{\partial\mathcal{L}_{\rho 's}}{\partial \bm{Z}_{3}^{l}}=\beta_3\bm{Z}_{3}^{l}-\rho_3(\bm{Y}-\sum_k\bm{X}_{3,k}^{l}-\bm{Z}_{3}^{l}+\bm{\Lambda}_{3}^{l}), \\
&\frac{\partial\mathcal{L}_{\rho 's}}{\partial \bm{\Lambda}_{1,k}^{l}}=\rho_1(\bm{X}_{1,k}^{l}-\bm{X}_{3,k}^{l}-\bm{Z}_{1,k}^{l}), \\
&\frac{\partial\mathcal{L}_{\rho 's}}{\partial \bm{\Lambda}_{2,k}^{l}}=\rho_2(\bm{X}_{2,k}^{l}-\bm{X}_{3,k}^{l}-\bm{Z}_{2,k}^{l}),\\
&\frac{\partial\mathcal{L}_{\rho 's}}{\partial
\bm{\Lambda}_{3}^{l}}=\rho_3(\bm{Y}-\sum_k\bm{X}_{3,k}^{l}-\bm{Z}_{3}^{l}).
\end{align}
\end{subequations}

Recall Lemma \ref{lem:Lagranian-descends} regarding to the sufficient descent of $\mathcal{L}_{\rho 's}$. As $\mathcal{L}_{\rho 's}$ converges as $l\rightarrow\infty$, we have
\begin{align}
&\lim_{l\rightarrow\infty}\Vert\bm{X}^{l}{\rm 's}-\bm{X}^{l-1}{\rm 's}\Vert_{F}^{2}=0,   \nonumber\\ 
&\lim_{l\rightarrow\infty}\Vert\bm{Z}^{l}{\rm 's}-\bm{Z}^{l-1}{\rm 's}\Vert_{F}^{2}=0,  \nonumber\\ \label{eq:limit-dual-residual}
&\lim_{l\rightarrow\infty}\Vert\bm{\Lambda}^{l}{\rm 's}-\bm{\Lambda}^{l-1}{\rm 's}\Vert_{F}^{2}=0.
\end{align}
By the steps of updating the Lagrange multipliers $\bm{\Lambda}'s$ in (\ref{eq:ADMM-Lambda1k-Update}-\ref{eq:ADMM-Lambda3-Update}), we can further obtain that the primal residual converges to $\bm{0}$, that is
\begin{align} \label{eq:limit-primal-residual}
&\lim_{l\rightarrow\infty}\Vert\bm{X}_{1,k}^{l}-\bm{X}_{3,k}^{l}-\bm{Z}_{1,k}^{l}\Vert_{F}^{2}=0,  \nonumber\\  
&\lim_{l\rightarrow\infty}\Vert\bm{X}_{2,k}^{l}-\bm{X}_{3,k}^{l}-\bm{Z}_{2,k}^{l}\Vert_{F}^{2}=0,  \nonumber\\ 
&\lim_{l\rightarrow\infty}\Vert\bm{Y}-\sum_k\bm{X}_{3,k}^{l}-\bm{Z}_{3}^{l}\Vert_{F}^{2}=0.
\end{align}
Consider the first order optimal condition of ADMM updating steps (\ref{eq:1stOdrOpt-X1-update}-\ref{eq:1stOdrOpt-Z3-update}), as $l\rightarrow\infty$, it can be derived that 
\begin{align*}
\bm{0} & \in\partial\Vert\bm{X}_{1,k}^{l+1}\Vert_{2,1}+\rho_1(\bm{X}_{1,k}^{l+1}-\bm{X}_{3,k}^{l}-\bm{Z}_{1,k}^{l}+\bm{\Lambda}_{1,k}^{l}),\\
& = \partial\Vert\bm{X}_{1,k}^{l+1}\Vert_{2,1}+\rho_1(\bm{X}_{1,k}^{l+1}-\bm{X}_{3,k}^{l+1}-\bm{Z}_{1,k}^{l+1}+\bm{\Lambda}_{1,k}^{l+1}) \\
&=\frac{\partial\mathcal{L}_{\rho 's}}{\partial \bm{X}_{1,k}^{l+1}}, 
\\
\bm{0} & \in\partial_{p}\mathbbm{1}_{\mathcal{R}_{1}}(\bm{X}_{2,k}^{l+1})+\rho_2(\bm{X}_{2,k}^{l+1}-\bm{X}_{3,k}^{l}-\bm{Z}_{2,k}^{l}+\bm{\Lambda}_{2,k}^{l}) \\
& = \partial_{p}\mathbbm{1}_{\mathcal{R}_{1}}(\bm{X}_{2,k}^{l+1})+\rho_2(\bm{X}_{2,k}^{l+1}-\bm{X}_{3,k}^{l+1}-\bm{Z}_{2,k}^{l+1}+\bm{\Lambda}_{2,k}^{l+1})   \\
& = \frac{\partial\mathcal{L}_{\rho 's}}{\partial \bm{X}_{2,k}^{l+1}}, 
\\
\bm{0} & \in \rho_1(\bm{X}_{1,k}^{l+1}-\bm{X}_{3,k}^{l+1}-\bm{Z}_{1,k}^{l}+\bm{\Lambda}_{1,k}^{l})+\rho_2(\bm{X}_{2,k}^{l+1}-\bm{X}_{3,k}^{l+1} \\
-& \bm{Z}_{2,k}^{l}+\bm{\Lambda}_{2,k}^{l})+\rho_3(\bm{Y}-\sum_{k}\bm{X}_{3,k}^{l+1}-\bm{Z}_{3}^{l}+\bm{\Lambda}_{3}^{l})  \\
& = \rho_1(\bm{X}_{1,k}^{l+1}-\bm{X}_{3,k}^{l+1}-\bm{Z}_{1,k}^{l+1}+\bm{\Lambda}_{1,k}^{l+1})+\rho_2(\bm{X}_{2,k}^{l+1}-\bm{X}_{3,k}^{l+1} \\
- &\bm{Z}_{2,k}^{l+1}+\bm{\Lambda}_{2,k}^{l+1})+\rho_3(\bm{Y}-\sum_{k}\bm{X}_{3,k}^{l+1}-\bm{Z}_{3}^{l+1}+\bm{\Lambda}_{3}^{l+1})  \\
& = -\frac{\partial\mathcal{L}_{\rho 's}}{\rho\partial \bm{X}_{3,k}^{l+1}}, 
\\
\bm{0} & \in \beta_1\bm{Z}_{1,k}^{l+1}-\rho_1(\bm{X}_{1,k}^{l+1}-\bm{X}_{3,k}^{l+1}-\bm{Z}_{1,k}^{l+1}+\bm{\Lambda}_{1,k}^{l})  \\
& = \beta_1\bm{Z}_{1,k}^{l+1}-\rho_1(\bm{X}_{1,k}^{l+1}-\bm{X}_{3,k}^{l+1}-\bm{Z}_{1,k}^{l+1}+\bm{\Lambda}_{1,k}^{l+1})  \\
& = \frac{\partial\mathcal{L}_{\rho 's}}{\partial \bm{Z}_{1,k}^{l+1}},
\\
\bm{0} & \in \beta_2\bm{Z}_{2,k}^{l+1}-\rho_2(\bm{X}_{2,k}^{l+1}-\bm{X}_{3,k}^{l+1}-\bm{Z}_{2,k}^{l+1}+\bm{\Lambda}_{2,k}^{l})  \\
& = \beta_2\bm{Z}_{2,k}^{l+1}-\rho_2(\bm{X}_{2,k}^{l+1}-\bm{X}_{3,k}^{l+1}-\bm{Z}_{2,k}^{l+1}+\bm{\Lambda}_{2,k}^{l+1}) \\
& = \frac{\partial\mathcal{L}_{\rho 's}}{\partial \bm{Z}_{2,k}^{l+1}},
\\
\bm{0} & \in \beta_3\bm{Z}_{3}^{l+1}-\rho_3(\bm{Y}-\sum_k\bm{X}_{3,k}^{l+1}-\bm{Z}_{3}^{l+1}+\bm{\Lambda}_{3}^{l})  \\
& = \beta_3\bm{Z}_{3}^{l+1}-\rho_3(\bm{Y}-\sum_k\bm{X}_{3,k}^{l+1}-\bm{Z}_{3}^{l+1}+\bm{\Lambda}_{3}^{l+1}) \\
& = \frac{\partial\mathcal{L}_{\rho 's}}{\partial \bm{Z}_{3}^{l+1}},
\end{align*}
Hence, it can be directly obtained that as $l\rightarrow\infty$,
$\bm{0}\in\partial\mathcal{L}_{\rho 's}\left(\{\bm{X}^{l+1}{\rm 's},\bm{Z}^{l+1}{\rm 's},\bm{\Lambda}^{l+1}{\rm 's}\right)$.

\subsection{Stopping Criteria}\label{subsec:stopping}

This section gives the stopping criteria of our ROAD algorithm. According to Theorem \ref{thm:convergence-ROAD}, one way to terminate the algorithm is that $\bm{0}\in\partial\mathcal{L}_{\rho 's}\left(\{\bm{X}^{l+1}{\rm 's},\bm{Z}^{l+1}{\rm 's},\bm{\Lambda}^{l+1}{\rm 's}\right)$.

The conditions \eqref{eq:subdif-Lambdas} is straightforward. However,
the conditions \eqref{eq:subdif-Pk} and \eqref{eq:subdif-Qk} are related to the subdifferential of $\ell_{2,1}$ norm and the proximal subdifferential of $\mathbbm{1}_{\mathcal{R}_{1}}$, which are not straightforward to determine.

Inspired by \cite[\textsection 3.3]{boyd2011distributed} which is also introduced in subsection \ref{subsec:ADMM-background}, we define primal residual $\bm{r}^{l+1}=\bm{A}\bm{x}^{l+1}+\bm{B}\bm{z}^{l+1}-\bm{c}^{l+1}$ in standard form of ADMM \eqref{eq:ADMM-inexact-StandardForm}, or the matrix form $\bm{R}^{l+1}=[[\cdots,(\bm{X}_{1,k}^{l+1}-\bm{X}_{3,k}^{l+1}-\bm{Z}_{1,k}^{l+1}),\cdots],[\cdots,(\bm{X}_{2,k}^{l+1}-\bm{X}_{3,k}^{l+1}-\bm{Z}_{2,k}^{l+1}),\cdots],[\bm{Y}-\sum_k\bm{X}_{3,k}^{l+1}-\bm{Z}_{3}^{l+1}]]\in\mathbb{R}^{M\times (2NK+N)}
$ at iteration $l+1$. We also define dual residual $\bm{s}^{l+1}=\bm{B}\bm{z}^{l+1}-\bm{B}\bm{z}^{l}$, or the matrix form $\bm{S}^{l+1}=[[\cdots,(\bm{Z}_{1,k}^{l+1}-\bm{Z}_{2,k}^{l}),\cdots],[\cdots,(\bm{Z}_{2,k}^{l+1}-\bm{Z}_{2,k}^{l}),\cdots],[\bm{Z}_{3}^{l+1}-\bm{Z}_{3}^{l}]]\in\mathbb{R}^{M\times 2NK+N}$ at iteration $l+1$. By \eqref{eq:limit-dual-residual} and \eqref{eq:limit-primal-residual}, we have $\lim_{l\rightarrow\infty}\Vert\bm{R}^{l}\Vert_{F}^{2}=0$ and $\lim_{l\rightarrow\infty}\Vert\bm{S}^{l}\Vert_{F}^{2}=0$. Accordingly, the stopping criteria are that
\begin{equation}
    \frac{\Vert\bm{R}^{l}\Vert_{F}^{2}}{\Vert[\cdots,\bm{X}_{3,k}^{l},\cdots]\Vert_{F}^{2}}\leq\epsilon^{\rm primal}\;{\rm and}\;\frac{\Vert\bm{S}^{l}\Vert_{F}^{2}}{\Vert[\cdots,\bm{Z}_{3,k}^{l},\cdots]\Vert_{F}^{2}}\leq\epsilon^{\rm dual},
\end{equation}
where $\epsilon^{\rm primal}>0$ and $\epsilon^{\rm dual}>0$ are the tolerances, and a reasonable value is suggested to be $10^{-6}$.

\section{Numerical Tests}\label{sec:numerical-tests}

This section compares the numerical performance of ROAD with other benchmark dictionary learning algorithms including MOD, K-SVD, and BLOTLESS in different tests. The comparison in Section \ref{subsec:synthetic-data} is based on synthetic data while the results of real data tests are illustrated in Section \ref{subsec:real-data}. 

\subsection{Dictionary learning for synthetic data \label{subsec:synthetic-data}}

For synthetic data tests, we adopt the typical setting for data generation. 
We assume that the training data $\bm{Y}$ are generated from a ground-truth dictionary $\bm{D}^0$ and a ground-truth sparse coefficient matrix $\bm{X}^0$ via $\bm{Y}=\bm{D}^0 \bm{X}^0$.
The dictionary $\bm{D}^0$ is generated by first filling it with independent realizations of the standard Gaussian variable and then normalizing its columns to have unit $\ell_2$-norm. The sparse coefficients in $\bm{X}^0$ is generated as follows. Assume that the number of nonzero coefficients in the $n$-th column of $\bm{X}^0$ is $S_n$. The index set of the nonzero coefficients are randomly generated from the uniform distribution on $[K] \choose S_n$ and the values of the nonzero coefficients are independently generated from the standard Gaussian distribution. In our simulations, we set $S_{n} = S \in \mathbb{N}$, $\forall n\in [N]$.

Given the synthetic data, different dictionary learning algorithms are tested and compared using the criterion of dictionary recovery error. Consider the permutation ambiguity of the trained dictionary. The dictionary recovery error is defined as 
\begin{equation}
\mathrm{Error}\coloneqq\frac{1}{K}\sum_{k=1}^{K}(1-\mid\hat{\bm{D}}_{:,k}^{T}\bm{D}_{:,i_{k}}^{0}\mid),\label{eq:DRE}
\end{equation}
where $i_{k} \coloneqq{\arg\max}_{i\in\mathcal{I}_{k}}(\hat{\bm{D}}_{:,k}^{T}\bm{D}_{:,i}^{0})$, $\mathcal{I}_{k} \coloneqq[K]\backslash\{i_{1},\cdots,i_{k-1}\}$, $\hat{\bm{D}}_{:,k}$ denotes the $k$-th column of estimated
dictionary, and $\bm{D}_{:,i_{k}}^{0}$ represents the $i_{k}$-th
column of ground truth dictionary which has largest correlation with
$\hat{\bm{D}}_{k}$. The use of $\mathcal{I}_{k}$ is to avoid repetitions in $i_{k}$, $\forall k\in[K]$.

In the first test, we compare three different ROAD mechanisms including inexact ADMM (iADMM) using different penalty parameter $\rho$'s, inexact ADMM (iADMM) using fixed $\rho$ and exact ADMM (eADMM). Set $M=16$, $K=32$ and $S=3$, we test the performance of these three mechanisms in two ways. One is to fix the number of samples $N$ and to compare the number of iterations that the ADMM converges, while the other is to vary $N$ from 100 to 400 and to compare different levels of dictionary recovery error corresponding to different $N$. To ensure every method stops at a relatively stable point even with few samples, we set the stopping iteration number at 10000 for iADMM and at 300 for eADMM. We  repeat  each  test  for  100  trials  to  get  an  average  to acquire  a  preciser  result. In terms of choosing the values of parameters, we set $\beta_1=200$, $\beta_2=300$ and $\beta_3=250$ for both iADMM methods. For penalty parameter $\rho$, we set $\rho_1=210$, $\rho_2=310$ and $\rho_3=260$ for the method of using different $\rho$'s, and set $\rho=310$ for the approach using the same $\rho$. Note that there is no noise term for eADMM, which means we only need to choose one parameter $\rho$. Here we choose $\rho=10$ for eADMM, and we  keep this value for the remaining tests.
Figure \ref{fig:compare-iADMM-eADMM} depicts the significant superiority of eADMM both in convergence rate and error rate compared with two iADMM methods. It can be clearly noticed that when the number of measurements $N\geq 300$, the dictionary recovery error can converge to almost 0 in less than 300 iterations. While for both iADMM approaches, the error can only converge to a value that is slightly lower than $10^{-2}$ even at 10000 iterations.

\begin{figure}
\begin{centering}
\subfloat[$M=16$, $K=32$, $S=3$, $N=300$.]{\begin{centering}
\includegraphics[scale=0.27]{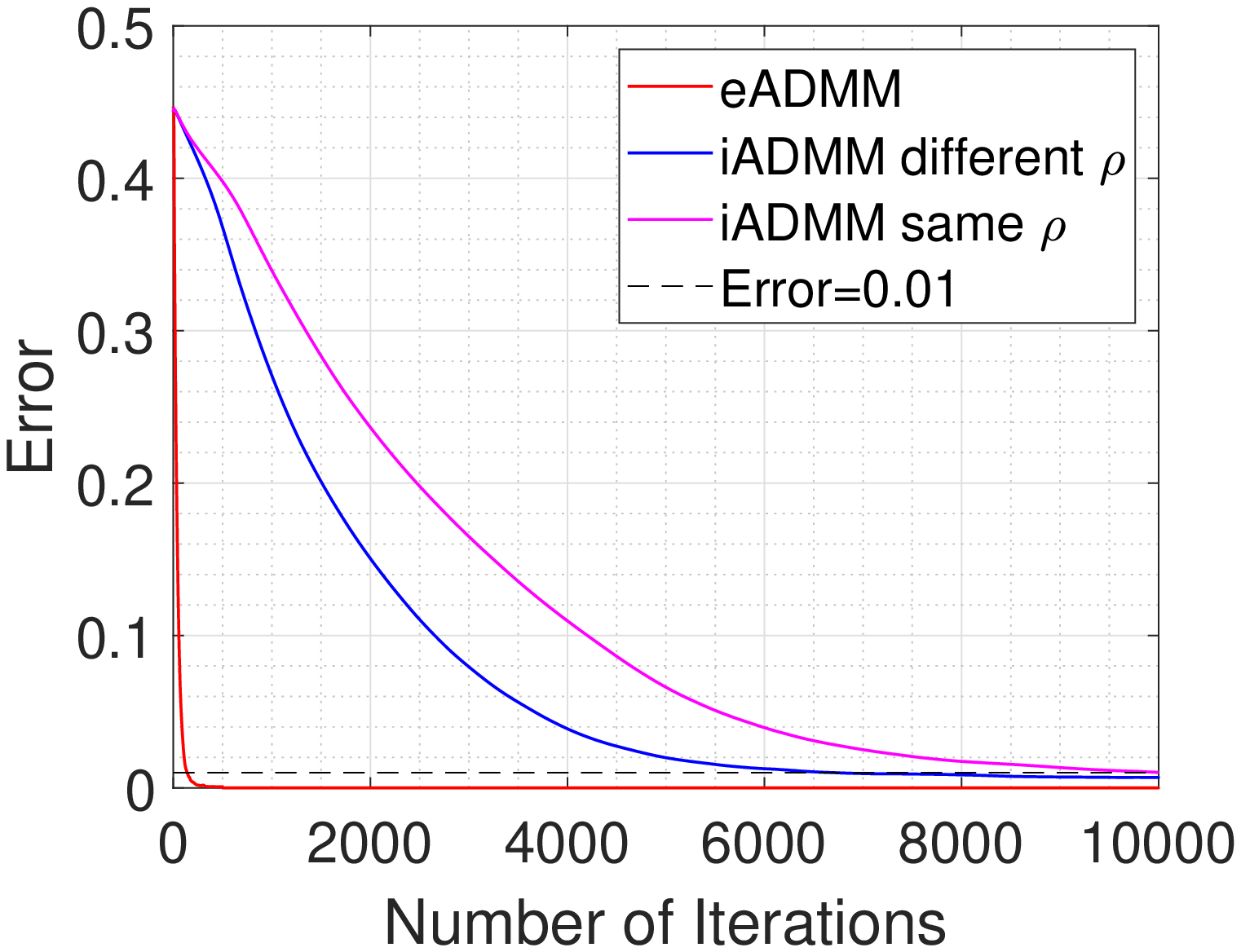}
\par\end{centering}
}\subfloat[$M=16$, $K=32$, $S=3$.]{\begin{centering}
\includegraphics[scale=0.25]{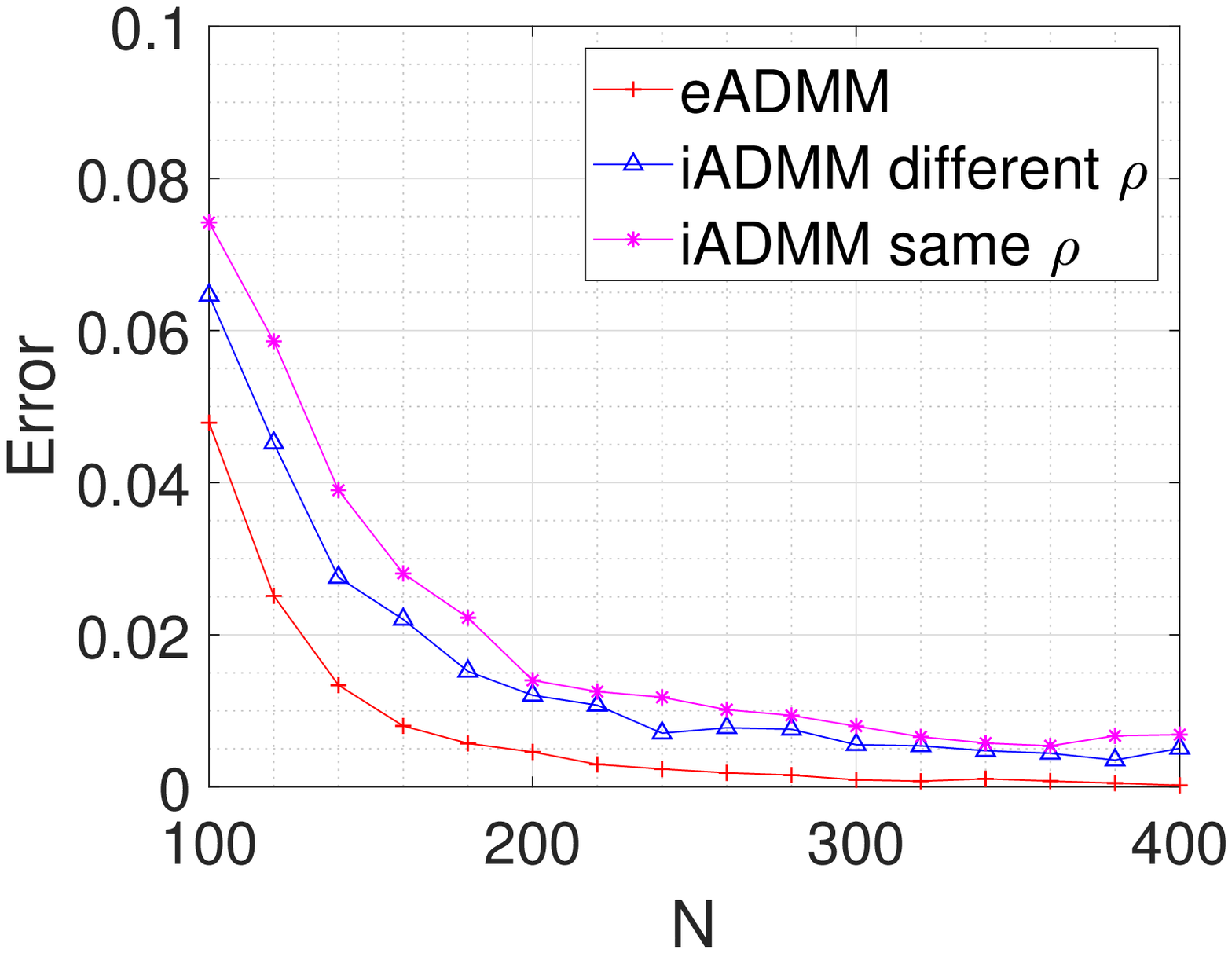}
\par\end{centering}
}
\par\end{centering}
\caption{\label{fig:compare-iADMM-eADMM}Comparison of dictionary learning methods including iADMM with different penalty parameter $\rho$'s, iADMM with the same $\rho$ and exact ADMM.}
\end{figure}

According to the advantages of eADMM, in the following performance tests compared with other benchmark algorithms, we utilize ROAD with eADMM. As all the other benchmark algorithms, including MOD, K-SVD and BLOTLESS, requires another sparse coding stage, OMP \cite{pati1993orthogonal} is used for the sparse coding stage with the prior knowledge of $S$. Note that different from other benchmark algorithms, ROAD does not need such prior information. 
Fig. \ref{fig:1} compares different dictionary learning methods for different sized dictionaries and sparsity levels. We use OMP as sparse coding stage for those two-stage dictionary learning approaches, and also in the other tests. We vary the number
of measurement to find its relation to the dictionary recovery error.
To make every algorithm stop at a stable point even with few samples,
we set the stopping iteration number at 300. We repeat each test for
100 trials to get an average to acquire a preciser result. Fig. \ref{fig:1}
illustrates that for different sized dictionaries and sparsity levels,
ROAD can always converge to almost zero dictionary recovery error,
and it requires fewer samples than other benchmarks. However for the
other dictionary learning methods, there is always a small error even
with sufficient number of samples. 

\begin{figure}
\begin{centering}
\subfloat[$M=16$, $K=32$, $S=3$.]{\begin{centering}
\includegraphics[scale=0.27]{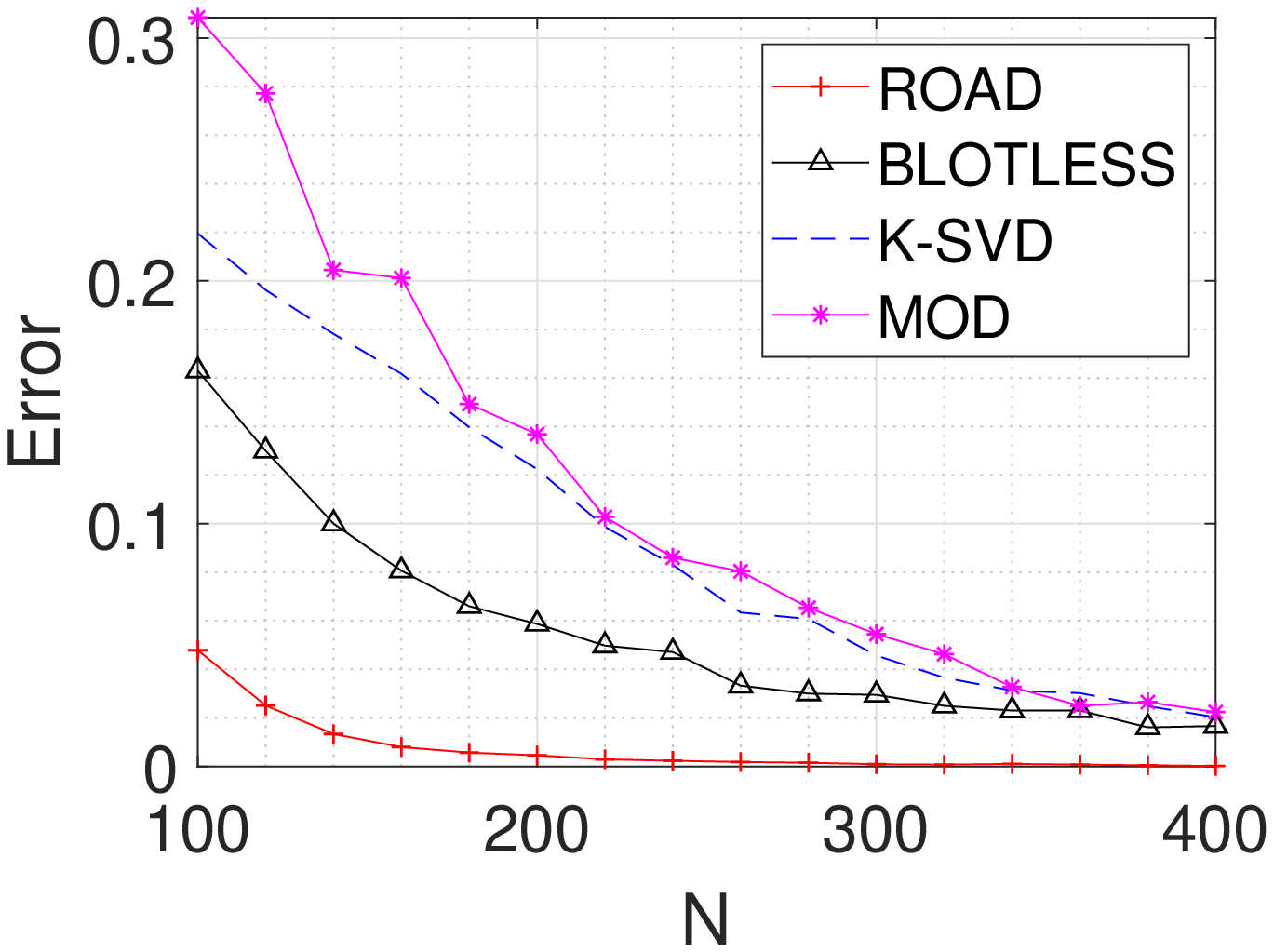}
\par\end{centering}
}\subfloat[$M=24$, $K=48$, $S=3$. ]{\begin{centering}
\includegraphics[scale=0.27]{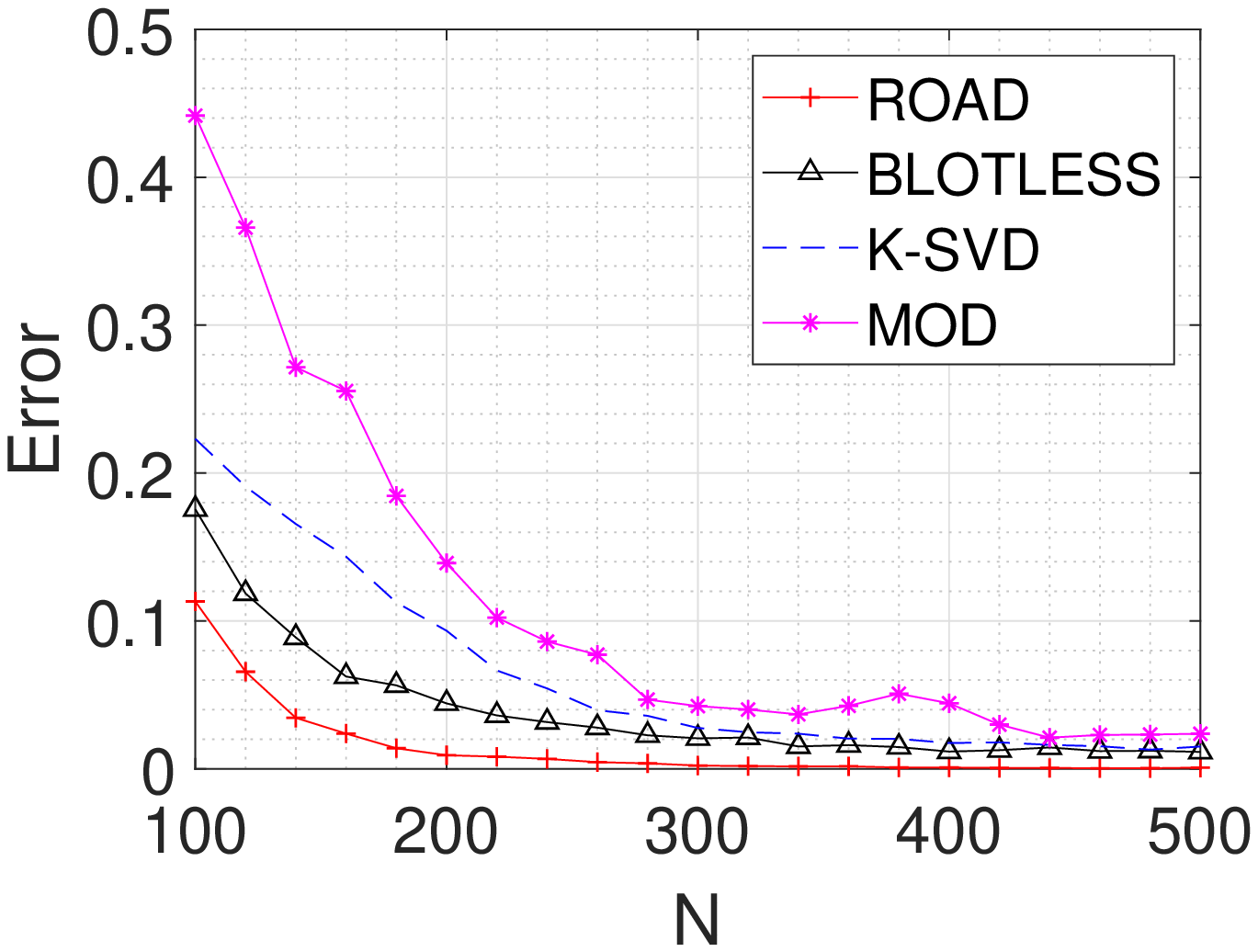}
\par\end{centering}
}
\par\end{centering}
\begin{centering}
\subfloat[$M=24$, $K=48$, $S=6$.]{\begin{centering}
\includegraphics[scale=0.27]{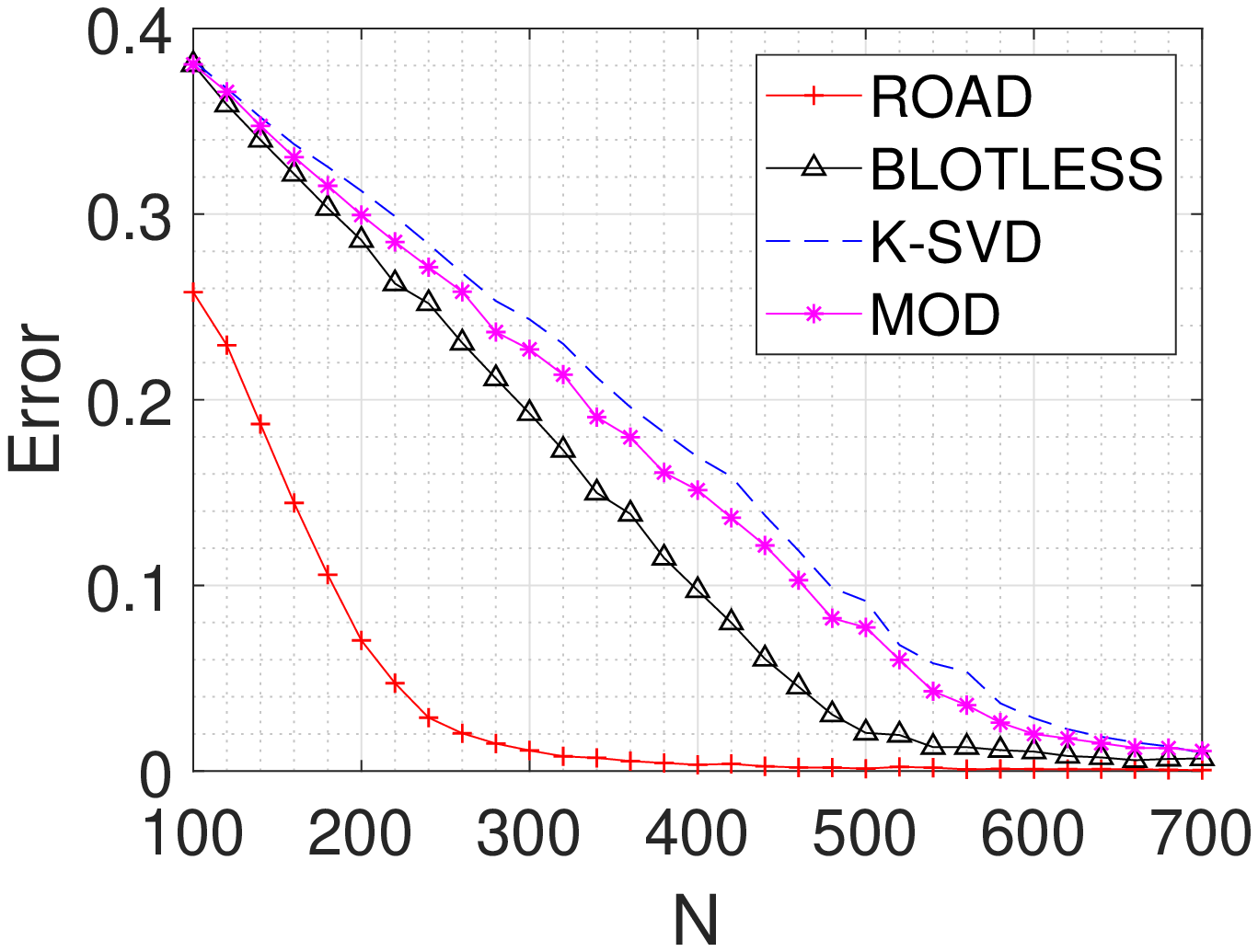}
\par\end{centering}
}\subfloat[$M=32$, $K=64$, $S=6$.]{\begin{centering}
\includegraphics[scale=0.27]{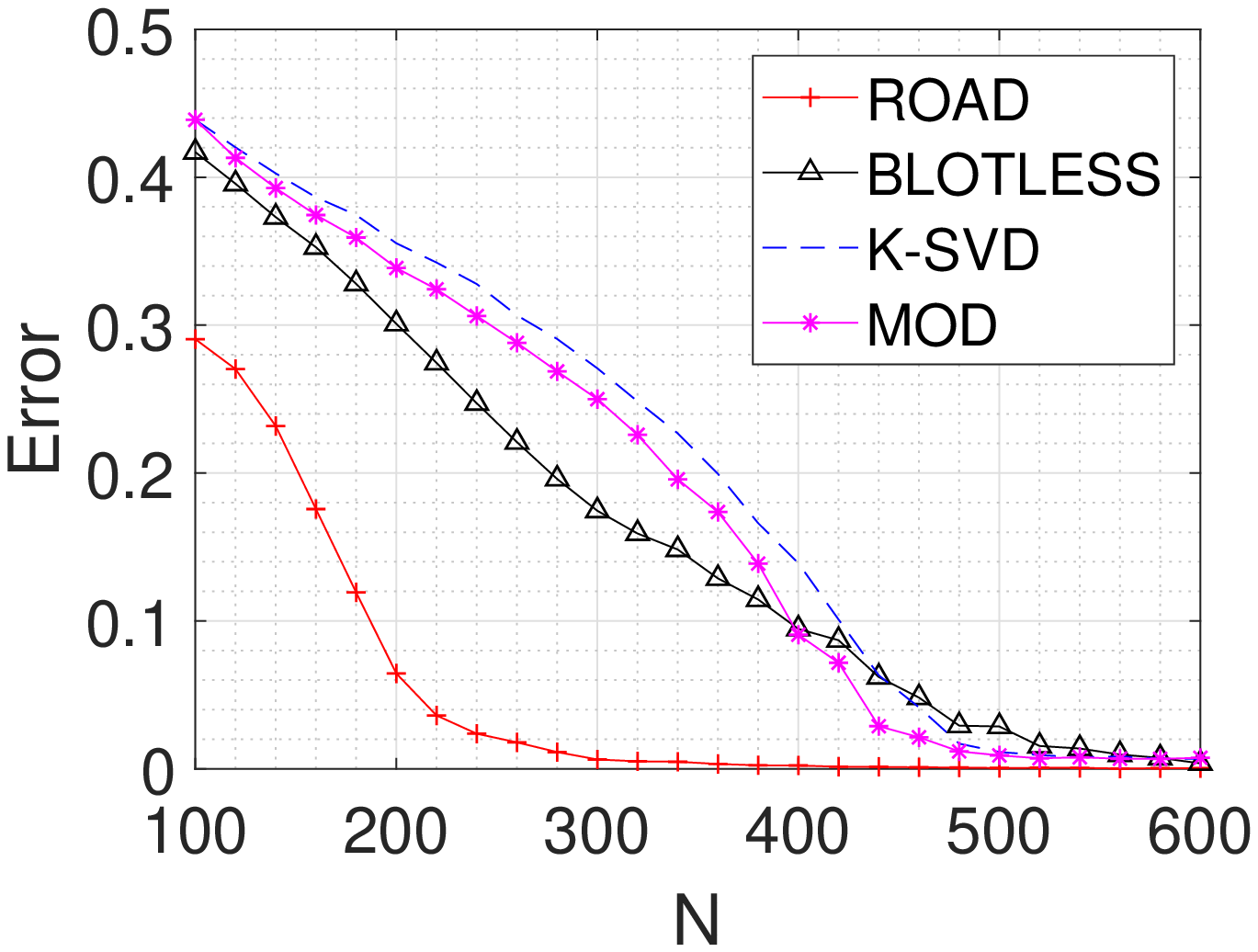}
\par\end{centering}
}
\par\end{centering}
\caption{\label{fig:1}Comparison of dictionary learning methods for the noise-free cases. Results are averages of 100 trials.}
\end{figure}

Fig. \ref{fig:2} compares different dictionary learning methods using the same settings as the first simulation in Fig. \ref{fig:1} but with noise, and SNR=30dB and 20dB respectively. The results in Fig. \ref{fig:2} demonstrate that for different noise level, ROAD also outperforms the other dictionary learning methods.

\begin{figure}
\begin{centering}
\subfloat[SNR=30dB.]{\begin{centering}
\includegraphics[scale=0.29]{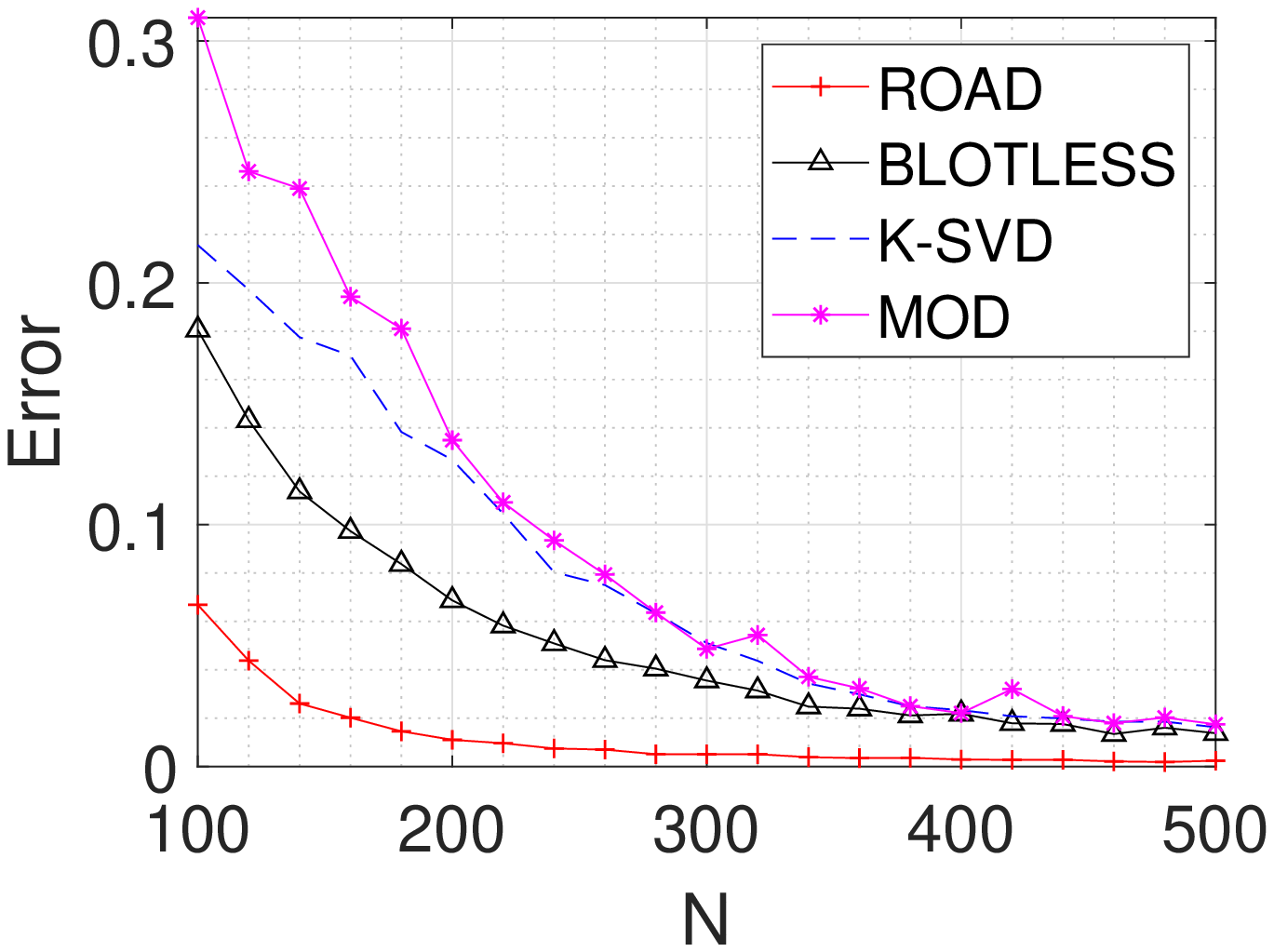}
\par\end{centering}
}\subfloat[SNR=20dB.]{\begin{centering}
\includegraphics[scale=0.29]{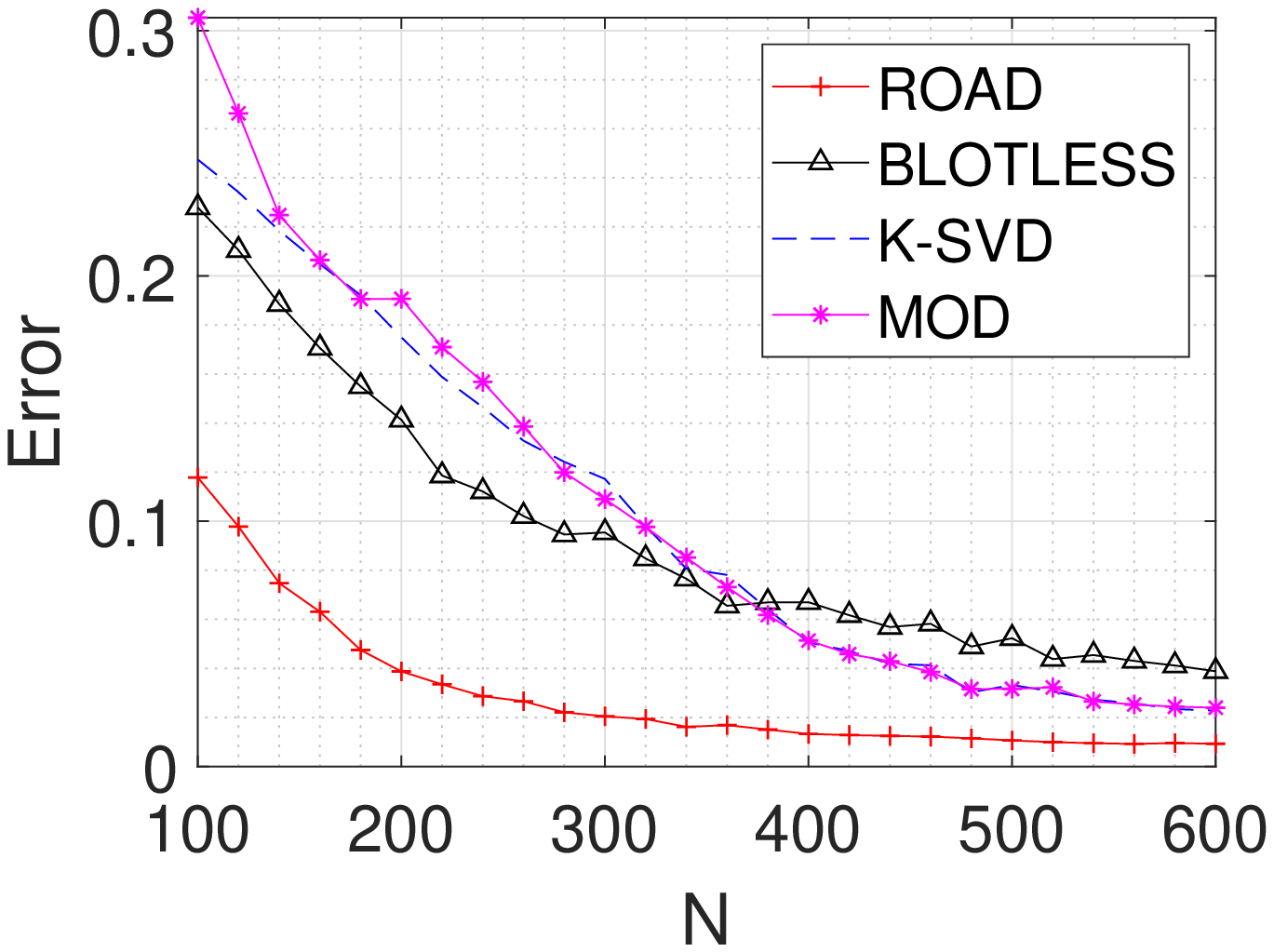}
\par\end{centering}
}
\par\end{centering}
\caption{\label{fig:2}Comparison of dictionary learning methods for the noisy
cases. Dictionary size is fixed to $M=$16, $K=$32, $S=$3. Results are averages of 100 trials.}
\end{figure}

In the previous simulations, we assume the sparsity level of each column of sparse coefficients is fixed. However, in the real data, the sparsity level is not fixed. To make the test more practical, we assume the entries of sparse coefficients are under Bernoulli distribution with the probability of $\theta$, and the values of non-zero entries are generated from independent Gaussian distribution $\mathscr{N}(0,1)$. Fig.\ref{fig:3} compares different dictionary learning methods for different sparsity levels but with the same dictionary size, where the sparse ratio $\theta$ is $3/48$ and $6/48$ respectively. For benchmark approaches, as we use OMP as sparse coding stage, it has high probability to obtain a poor sparsity pattern with not fixed the sparsity level, which will result in learning a wrong dictionary. Accordingly, for the sparse coding stage in benchmark methods, we set fixed sparsity level $S=3$ and $S=6$ respectively in these two tests. From Fig.\ref{fig:3}, it can be concluded that even with different sparsity levels, ROAD can still exactly recover the dictionary with enough samples, but the dictionaries learnt by other techniques have a significant difference with the ground truth dictionary especially the sparse ratio is high. 

\begin{figure}
\begin{centering}
\subfloat[$M=24$, $K=48$, $\theta=\frac{3}{48}$.]{\begin{centering}
\includegraphics[scale=0.23]{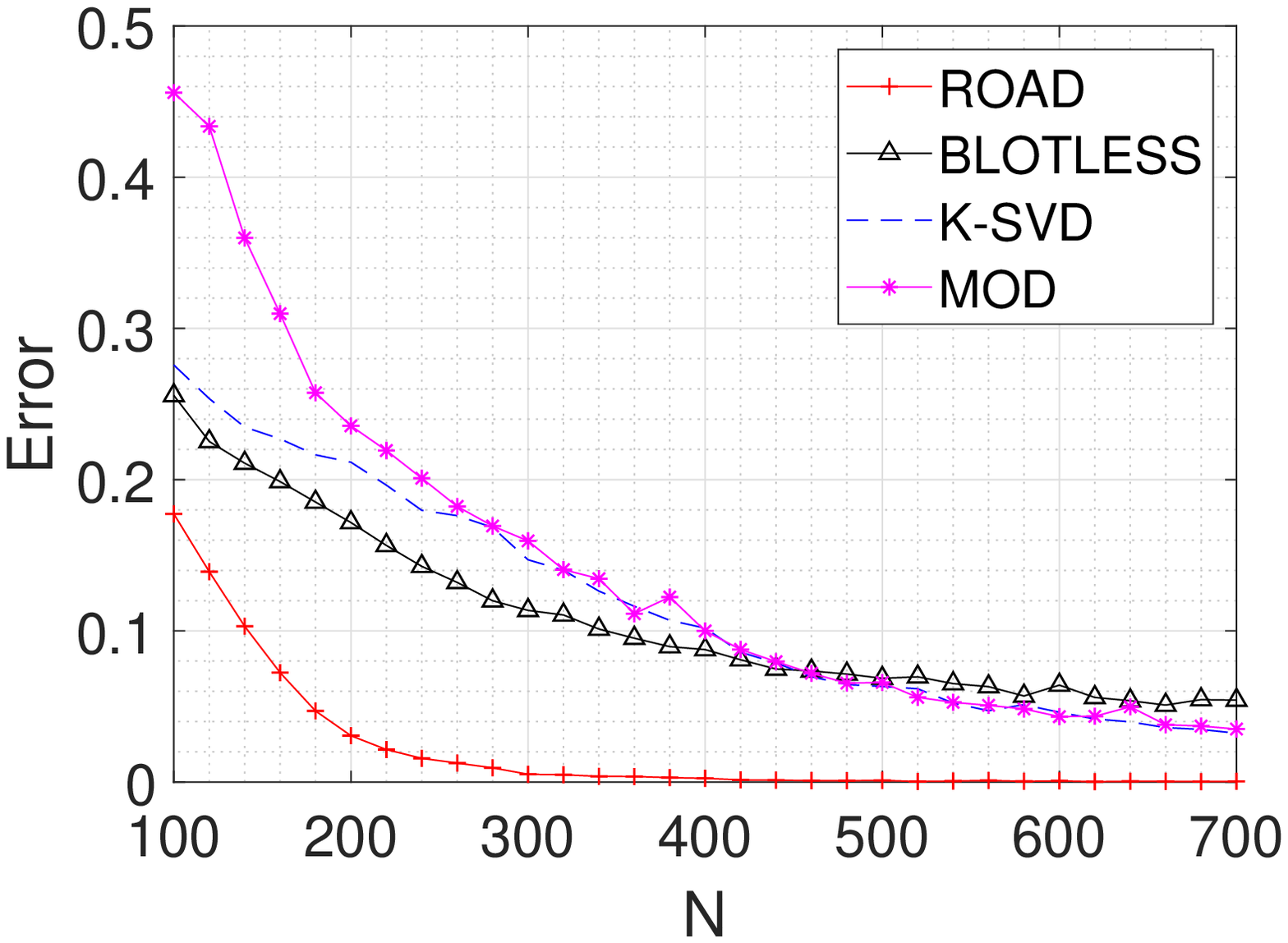}
\par\end{centering}
}\hspace{0.15cm}\subfloat[$M=24$, $K=48$, $\theta=\frac{6}{48}$.]{\begin{centering}
\includegraphics[scale=0.23]{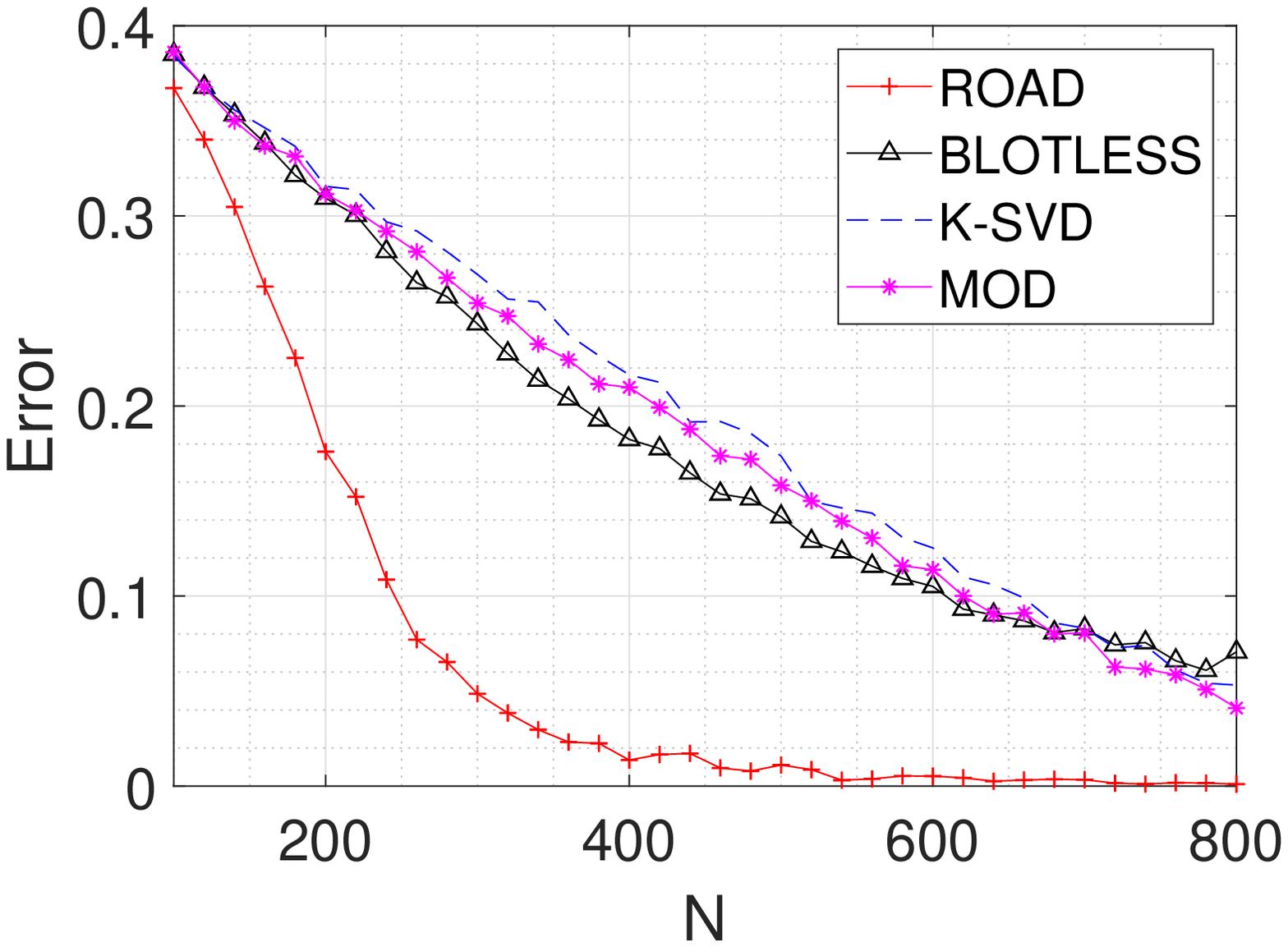}
\par\end{centering}
}
\par\end{centering}
\caption{\label{fig:3}Comparison of dictionary learning methods for different sparsity levels in the columns of sparse coefficient matrix.  Results are averages of 100 trials.}
\end{figure}

\subsection{Single image super-resolution using dictionary learning \label{subsec:real-data}}

\begin{table*}
\caption{\label{tab:1}Comparison of single image super-resolution using different
dictionary learning methods, where both the figures of super-resolution
results and the PSNR between the estimated high-resolution digits
and the ground truth digits are shown in the table.}
\renewcommand{\arraystretch}{2}
\newcommand\cincludegraphics[2][]{\raisebox{-0.3\height}{\includegraphics[#1]{#2}}}
\begin{centering}
\begin{tabular}{|C{3cm}| C{2cm}| C{1cm} C{1cm} C{1cm} C{1cm}| C{1cm} C{1cm} C{1cm} C{1cm}|}
\hline

Training images & Testing image  & \multicolumn{4}{c|}{Stacked dictionary learning \cite{yang2010image}} & \multicolumn{4}{c|}{Single dictionary learning \cite{zeyde2010single}} \\

\cline{3-10}

 & Bicubic & ROAD & Blotless & K-SVD & MOD & ROAD & Blotless & K-SVD & MOD \\

\cincludegraphics[scale=0.18]{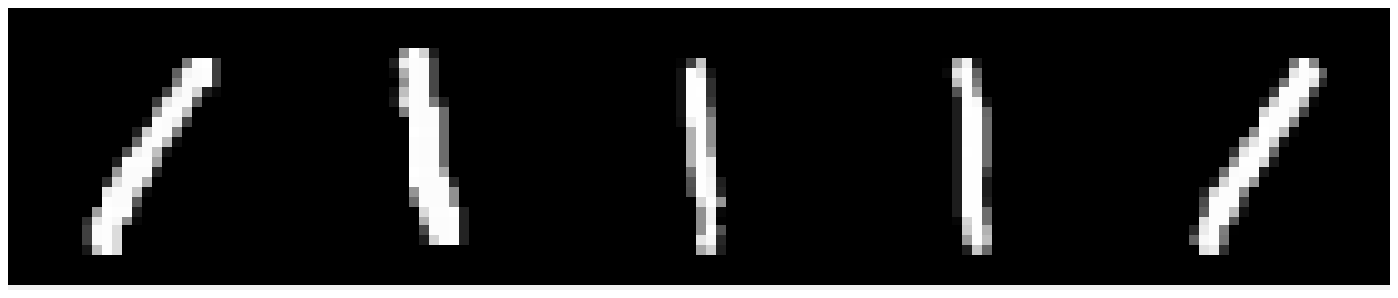}  & 
\cincludegraphics[scale=0.05]{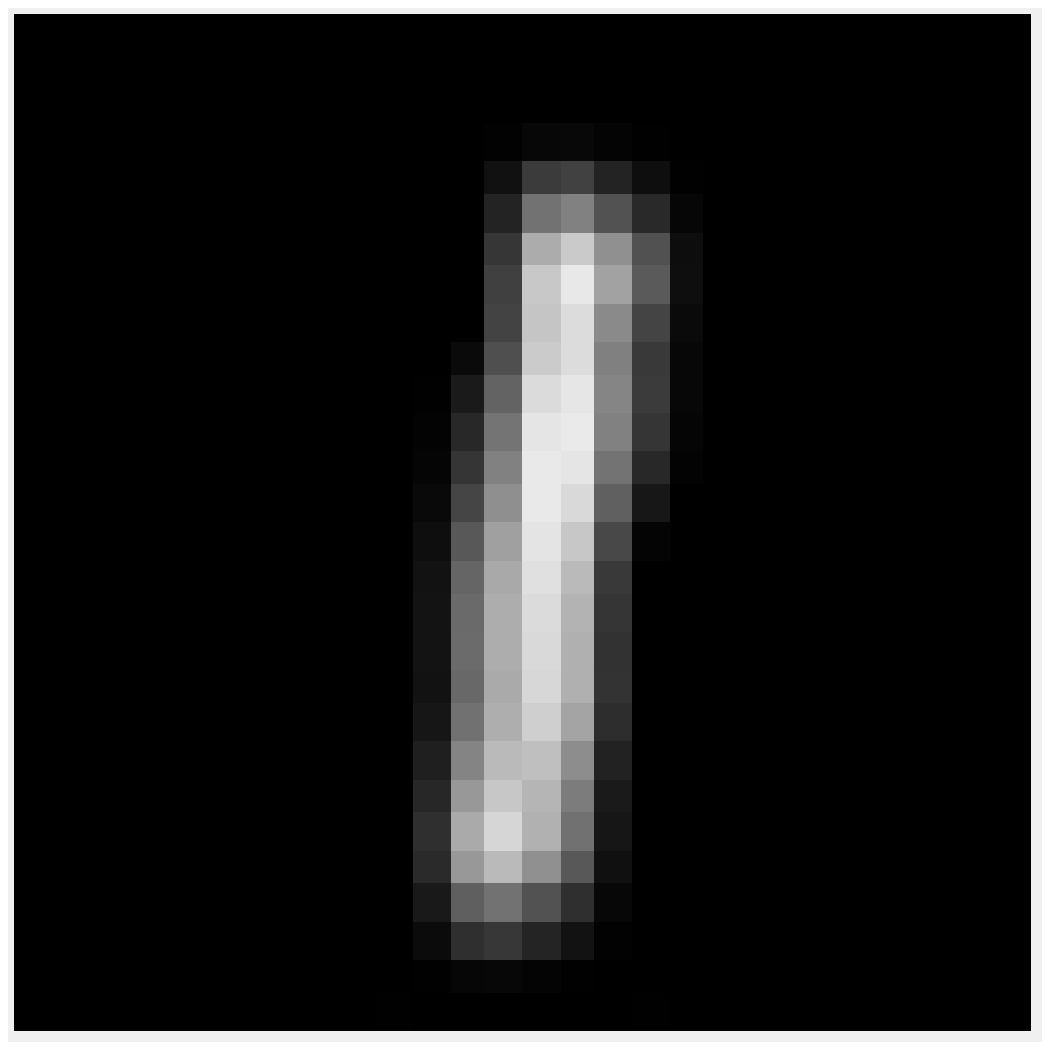}  &
\cincludegraphics[scale=0.05]{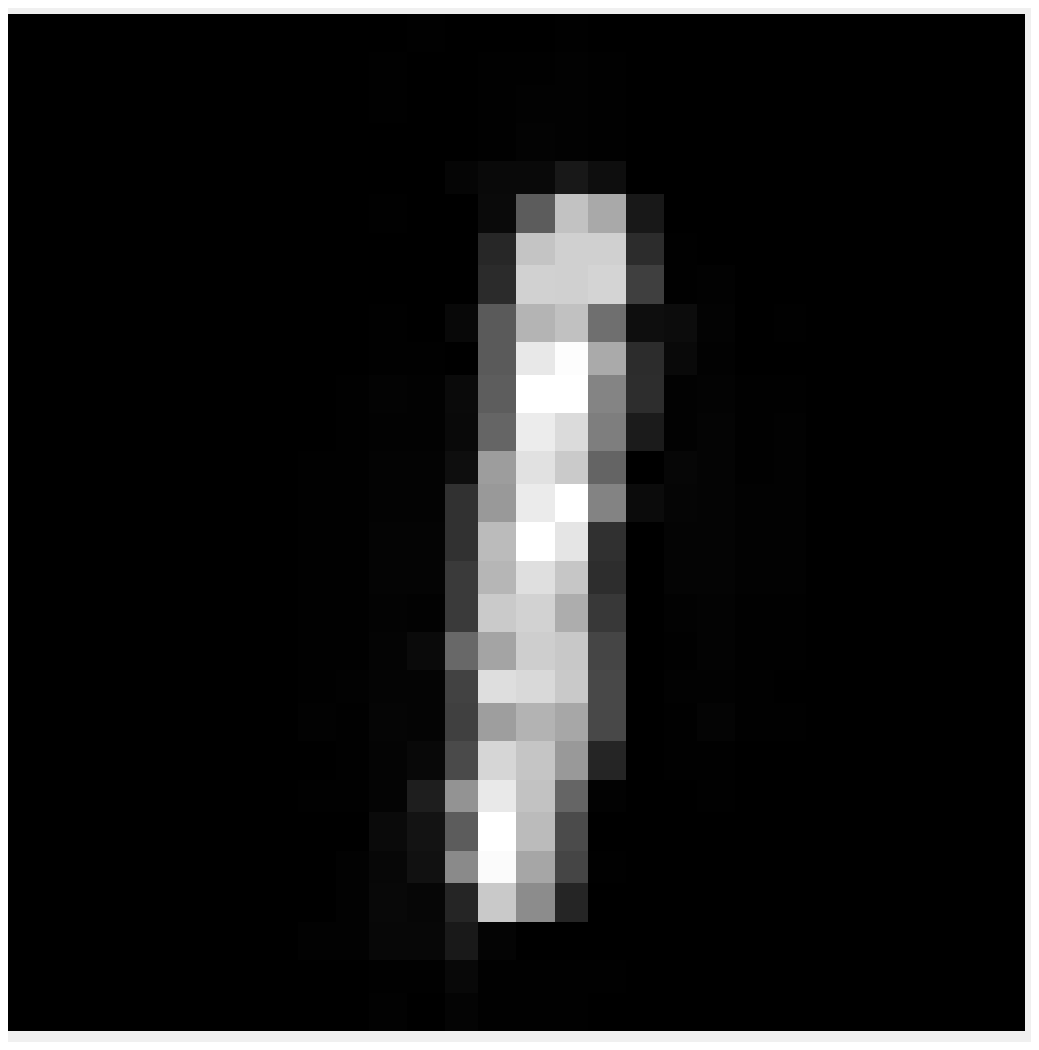} & 
\cincludegraphics[scale=0.05]{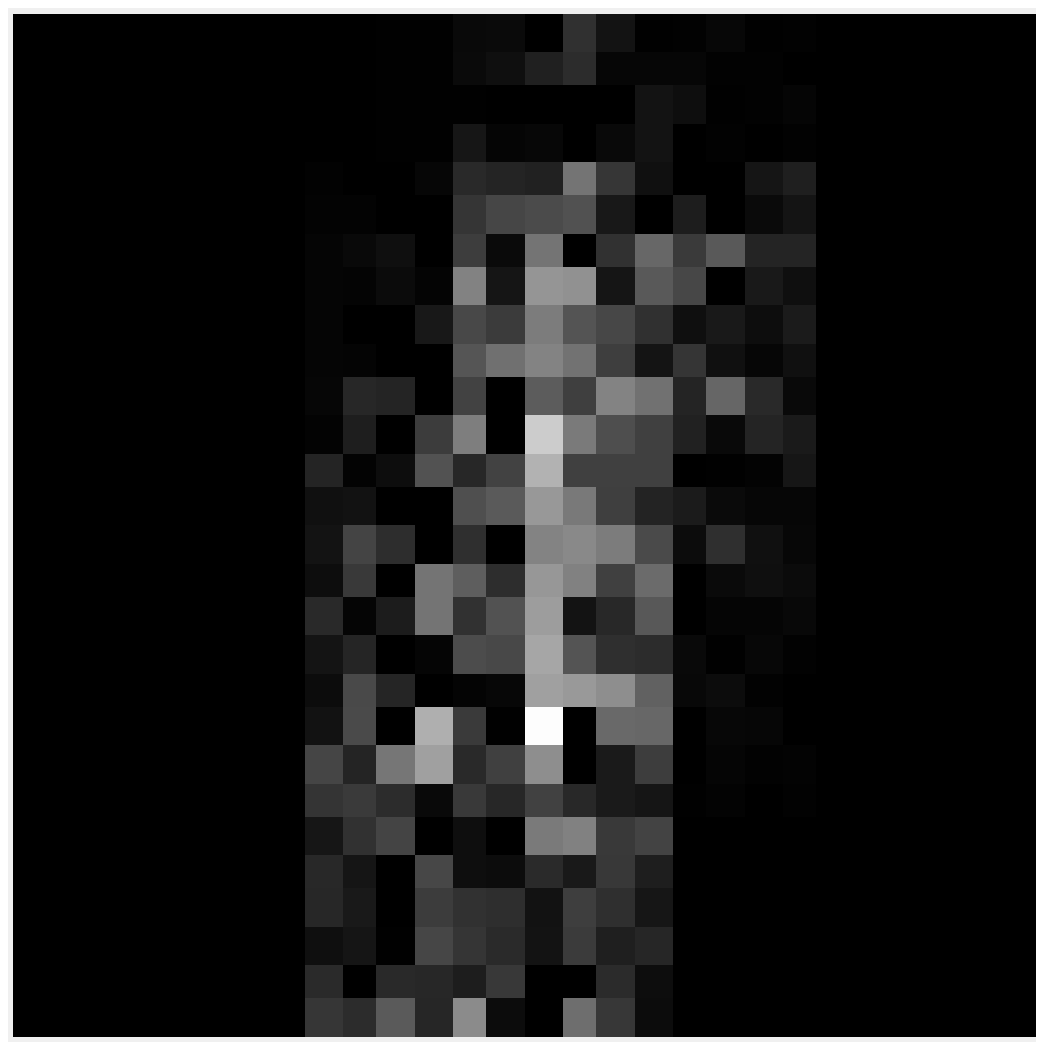} & 
\cincludegraphics[scale=0.05]{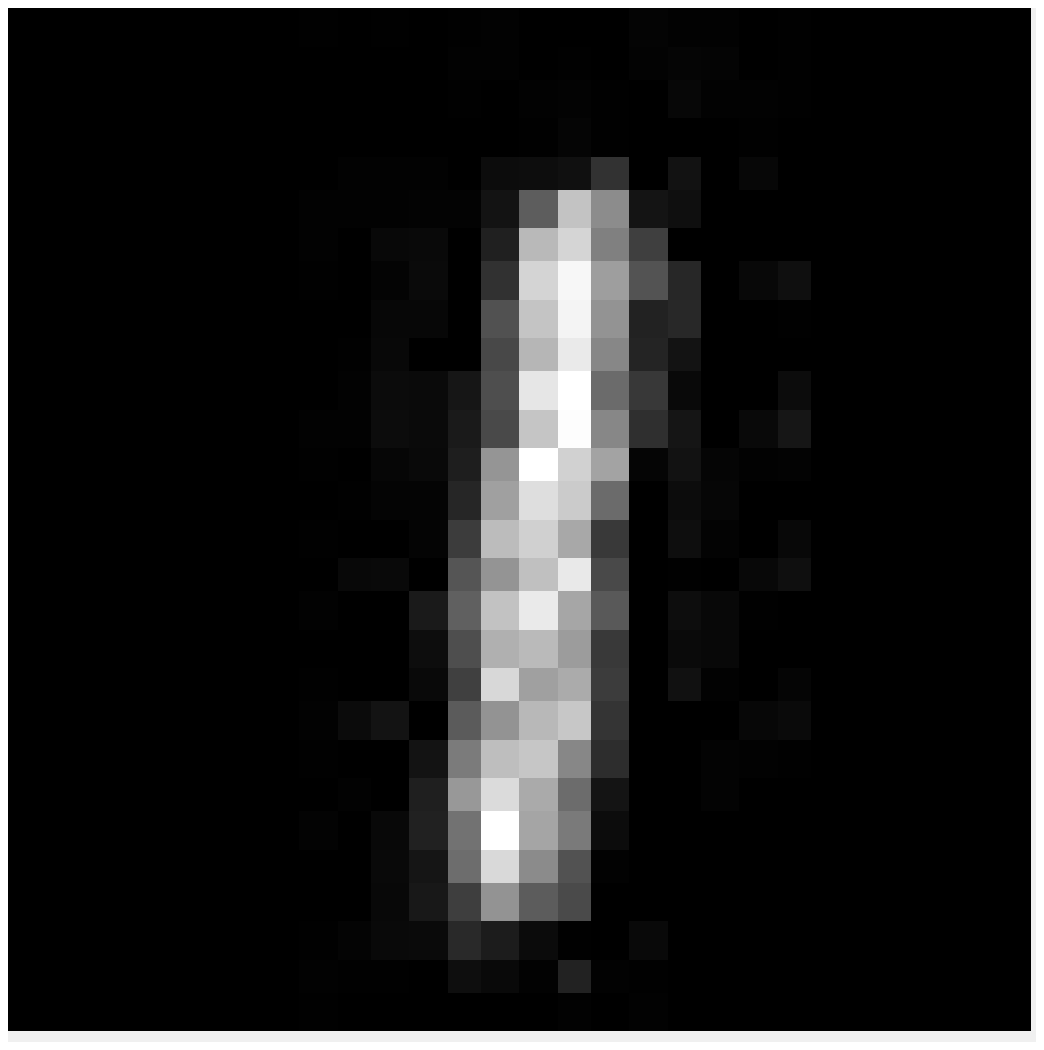} & 
\cincludegraphics[scale=0.05]{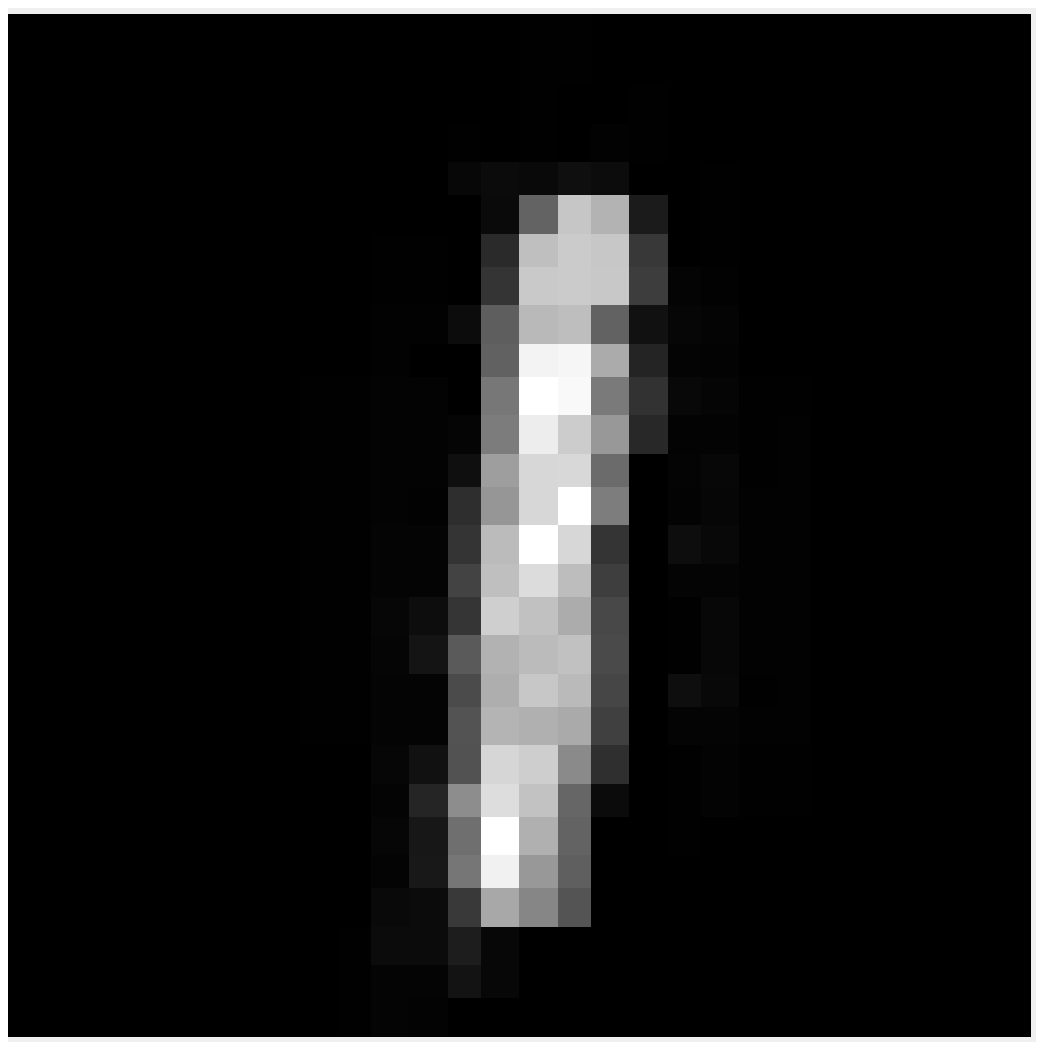} & 
\cincludegraphics[scale=0.05]{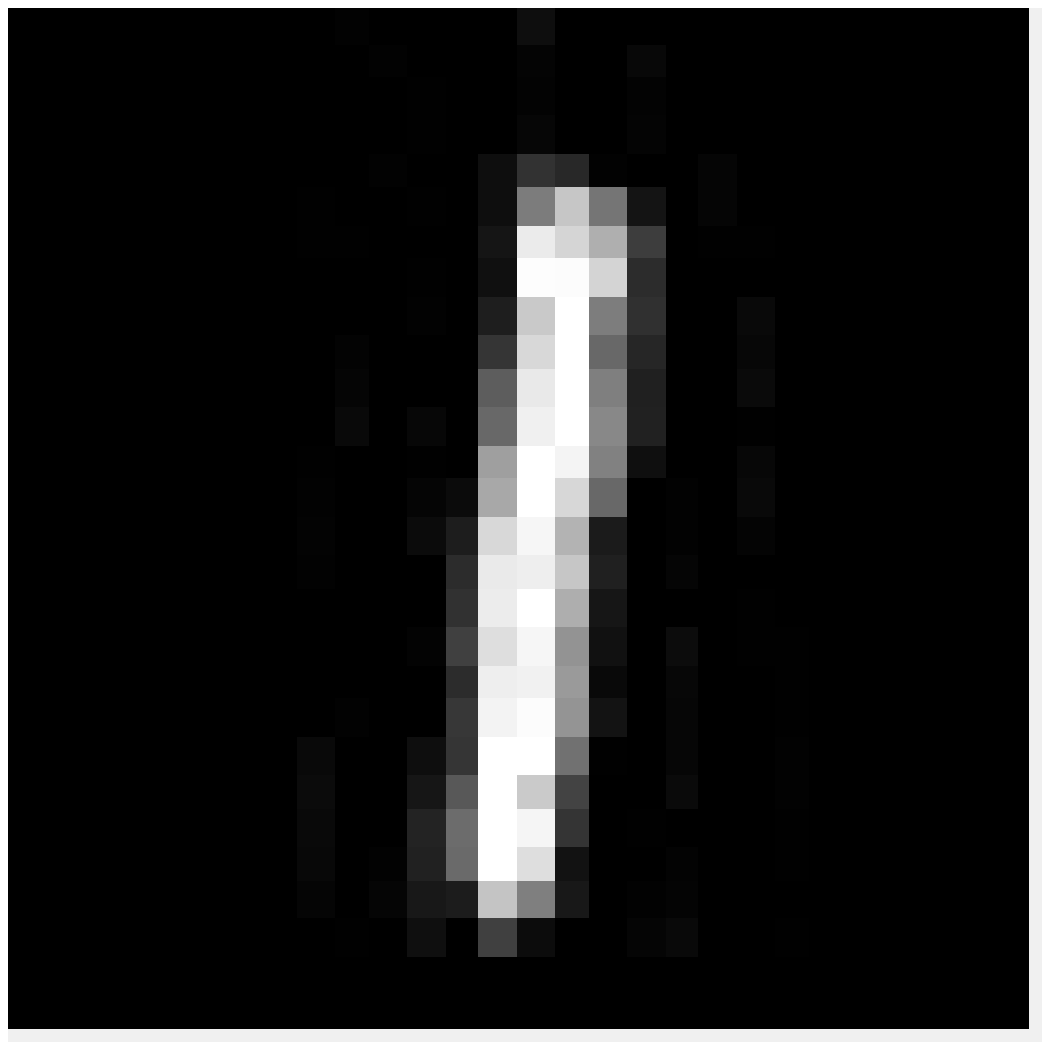} & 
\cincludegraphics[scale=0.05]{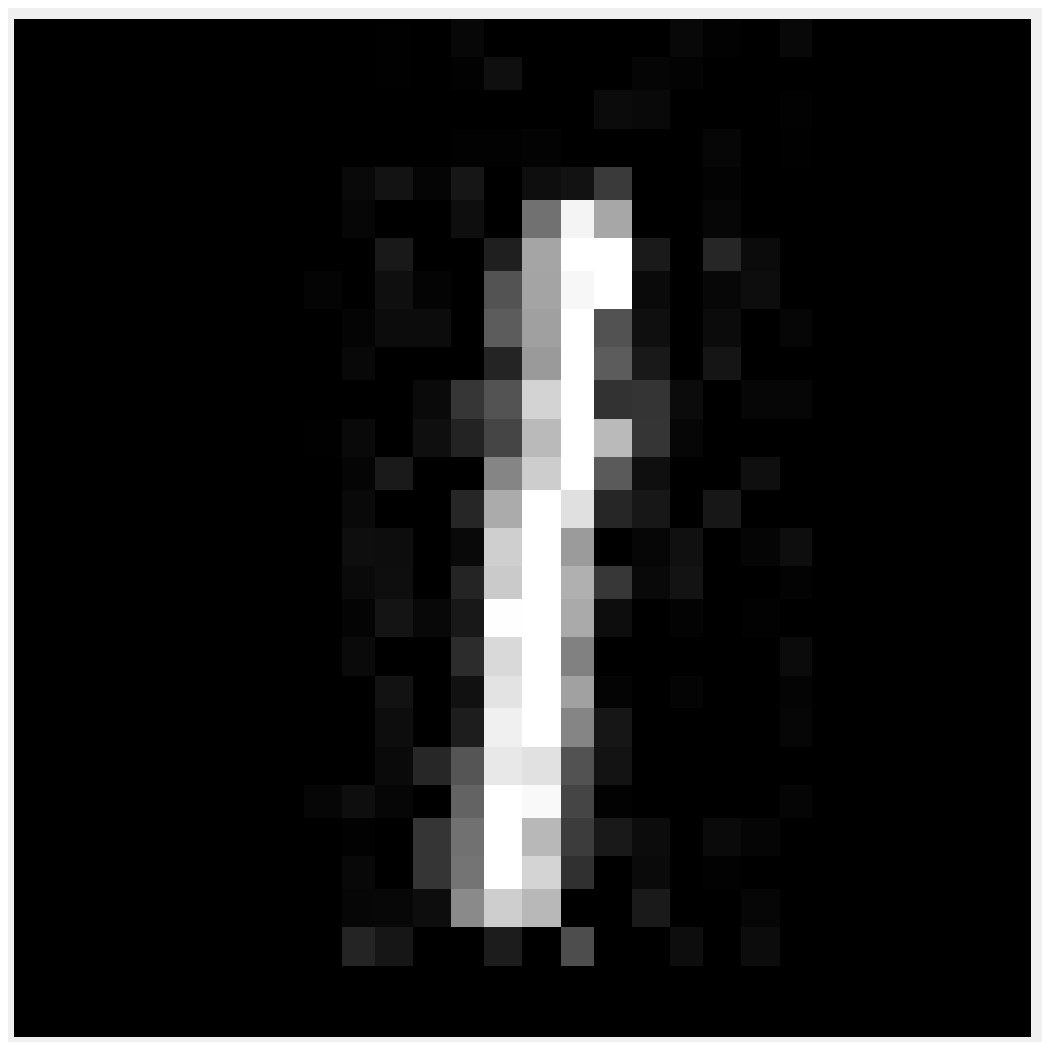} & 
\cincludegraphics[scale=0.05]{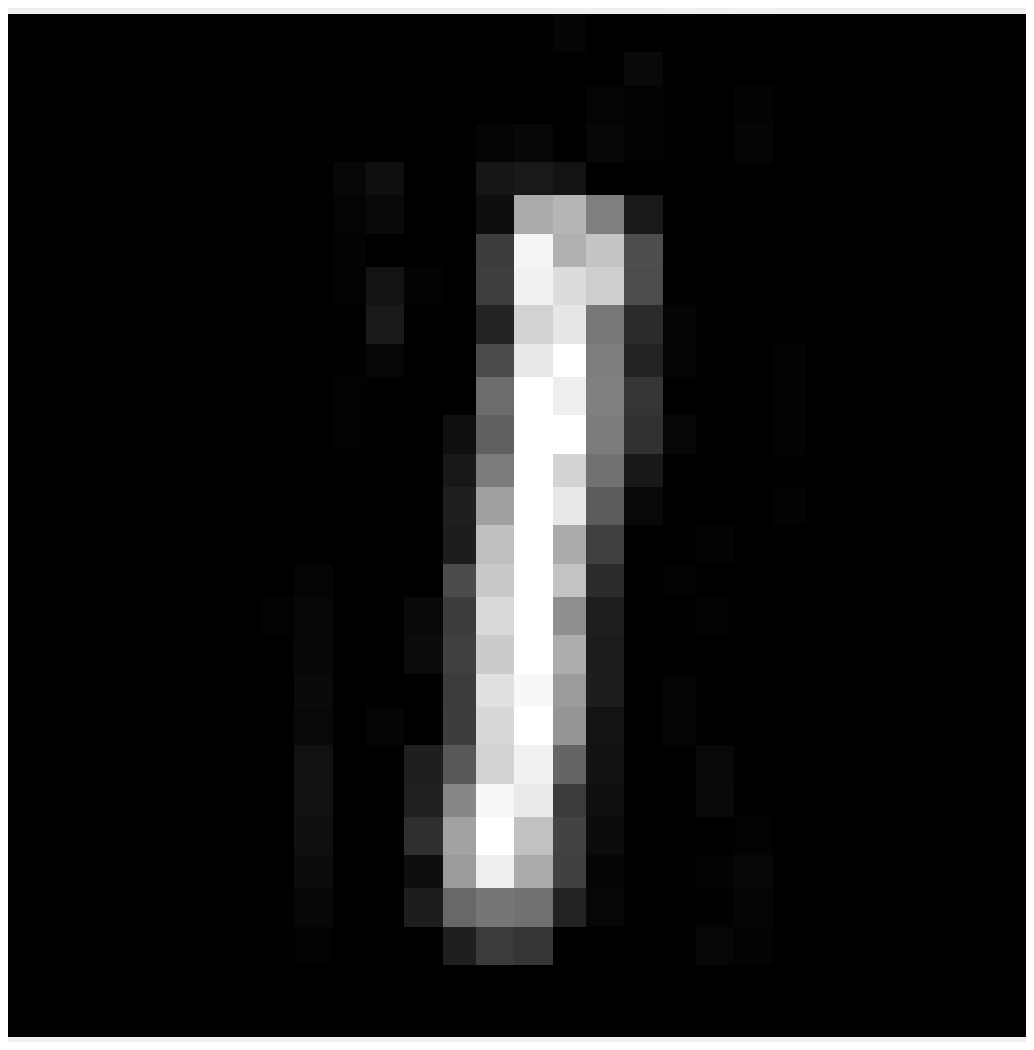} & 
\cincludegraphics[scale=0.05]{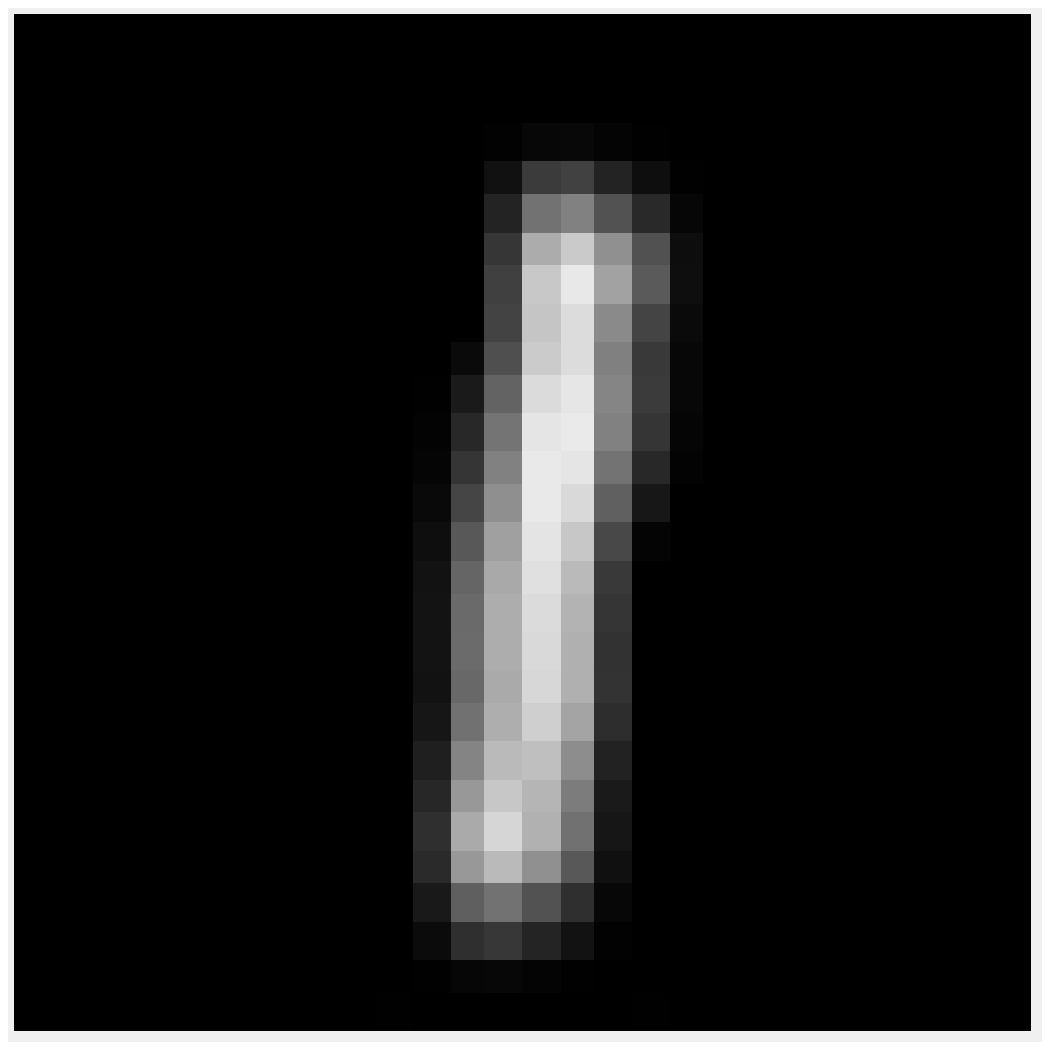}\\
%\hline
 & 21.8881 & 24.0165 & 14.1423 & 22.1150 & 23.4254 
& 26.9159 & 22.4646 & 24.6290 & 21.8881\\

\hline
\end{tabular}
\par\end{centering}
\end{table*}

This subsection focuses on the performance comparison of dictionary learning algorithms when applied for single image super-resolution problem. The basic idea of this subsection is that with a relatively small number of training samples, our algorithm still can obtain good results even for real data. We first use MNIST as the data set, and follow two different algorithms, that is Yang et. al. \cite{yang2010image} and Zeyde et. al. \cite{zeyde2010single} as the approaches for super-resolution. The difficulty of this problem is that we can only extract few training samples from one digit. Then we apply the later one to natural images. Here we only use three images as training samples, which is far from enough for the most of the super-resolution problems. 

For both approaches by Yang et. al. \cite{yang2010image} and Zeyde et. al. \cite{zeyde2010single}, the basic idea is that given pairs of low and high resolution images as training data, a pair of dictionaries are learned so that sparse approximations of each pair of low/high resolution images share the same coefficients. For a test image of low resolution, one first finds its sparse representation under the low-resolution dictionary, and then apply the corresponding sparse coefficients to the high-resolution dictionary to generate a high resolution image. We usually extract patches with overlap from images to obtain more samples. The difference between Yang et. al. \cite{yang2010image} and Zeyde et. al. \cite{zeyde2010single} is mainly at the step of training low and high resolution dictionaries. By Yang et. al. \cite{yang2010image}, we stack a pair of low and high resolution training images vertically to obtain the sample data, and learn both low and high resolution dictionaries at the same time. While for Zeyde et. al.'s approach \cite{zeyde2010single}, we look at the low and high resolution training images separately. We first scale-up the low-resolution images by interpolation, returning to the size of the high-resolution training images, and learn the low-resolution dictionary and corresponding sparse coefficients. We then directly compute the high-resolution dictionary using the obtained sparse coefficients and high-resolution images by least-squares method.   

Our first simulation is based on MNIST dataset which contains images for handwritten digits from 0 to 9. Each image is of $28\times 28$ pixels. We generate low-resolution images of size $14 \times 14$ by bicubic interpolation, that is the output pixel value is a weighted average of pixels in the nearest 4-by-4 neighborhood. 

The training data used for dictionary learning is patch based. For the method by Yang et. al \cite{yang2010image}, in practice, it is suggested to extract features from upsampled version of the low-resolution to obtain better results \cite[\uppercase\expandafter{\romannumeral3}.C.]{yang2010image}. Thus, here we first scale-up the $14\times 14$ low-resolution images to the size of $28\times 28$ by bicubic interpolation, then patches of size $6\times 6$ are extracted from the low-resolution images with 2 pixels overlap in either direction for adjacent patches. We then find the corresponding patches of size $6\times 6$ at the same position from the high-resolution images. Each pair of low and high resolution patches is stacked to form a column in the training data, i.e.,  $\bm{Y}_{:,n}=\left[{\rm vec}(\bm{P}_{L})_n^{T},{\rm vec}(\bm{P}_{H})_n^{T}\right]^{T}$, where $\bm{P}_L$ and $\bm{P}_H$ are low/high resolution patches respectively. In the simulations, we use five different digit ones as training images and another digit one as testing image. Hence, the training sample matrix $\bm{Y}$ is of size $72 \times 245$. We then apply different algorithms for dictionary learning. As here we do not know the number of sparsity of sparse coefficients, we replace OMP with $\ell_1$ norm based sparse coding approach, that is LASSO for benchmark algorithms, and so do the following simulations. Denote the acquired dictionary by $\bm{D}=\left[\bm{D}_{L}^{T},\bm{D}_{H}^{T}\right]^{T}$, where $\bm{D}_{L}$ and $\bm{D}_{H}$ are the sub-dictionaries corresponding to low and high resolution patches respectively. Here we set $K=64$.

For Zeyde et. al.'s approach \cite{zeyde2010single}, we also upsample the low-resolution training image by factor of two. We then extract patches of size $6\times 6$ from the interpolated images with 2 pixels overlap, and the corresponding high-resolution patches of the same size at the same position. Here we only use the low-resolution patches as the training samples, and therefore the training sample matrix $\bm{Y}$ is of size $36\times 245$. We then apply different dictionary learning methods to obtain the low-resolution dictionary $\bm{D}_{L}$ of size $36\times 64$ and the sparse representations $\bm{X}_{0}$. The high-resolution dictionary $\bm{D}_{H}$ is hence computed by $\bm{D}_{H}=\bm{P}_{H}\bm{X}_{0}(\bm{X}_{0}\bm{X}_{0}^{T})^{-1}$.  

 Same as the low-resolution training images, we also use the upsampled version of the testing image. Similarly, $6\times 6$ patches with overlap of 2 pixels are 
extracted for both Yang et. al's algorithm \cite{yang2010image} and Zeyde et. al.'s approach \cite{zeyde2010single}. For each patch, a sparse representation coefficient vector $\bm{\alpha}$ is obtained so that $\bm{P}_L \approx \bm{D}_L \bm{\alpha}$ using sparse coding technique. The corresponding high resolution patches are generated via $\bm{D}_H \bm{\alpha}$ and the high resolution image is generated by aligning the patches and taking average of overlapped pixels across patches. 

The simulation results are presented in Table \ref{tab:1}. In numerical comparison, peak signal-to-noise ratio (PSNR) is used as the performance criterion, which is formulated as
\begin{equation}
    {\rm PSNR}=10{\rm log}_{10}\frac{N_{e}}{\Vert\hat{\bm{I}}-\bm{I}^0\Vert_{F}^{2}},
\end{equation}
where $\bm{I}^0$ and $\hat{\bm{I}}$ are the `ground-truth' high-resolution image and a high-resolution image generated using the learned dictionary respectively, and $N_{e}$ denotes the number of entries in $\bm{I}^{0}$. Simulation results demonstrate the significant improvement of ROAD in both the numerical error and the visual effect. By Yang et. al.'s method \cite{yang2010image}, only our algorithm can obtain higher resolution after super-resolution process. By Zeyde et. al.'s approach \cite{zeyde2010single}, even though the performance of other benchmark algorithms is improved, promotion of the resolution is not obvious than our method.

\begin{table*}
\caption{\label{tab:2}Comparison of single image super-resolution using different
dictionary learning methods, where both the figures of super-resolution
results and the PSNR between the estimated high-resolution images
and the ground truth image are shown in the table.}
\renewcommand{\arraystretch}{1.8}
\newcommand\cincludegraphics[2][]{\raisebox{-0.3\height}{\includegraphics[#1]{#2}}}
\begin{centering}
\begin{tabular}{C{3.1cm} C{3.1cm} C{3.1cm} C{3.1cm} C{3.1cm}}

Bicubic interpolation & ROAD & Blotless & K-SVD & MOD\\

\cincludegraphics[scale=0.2]{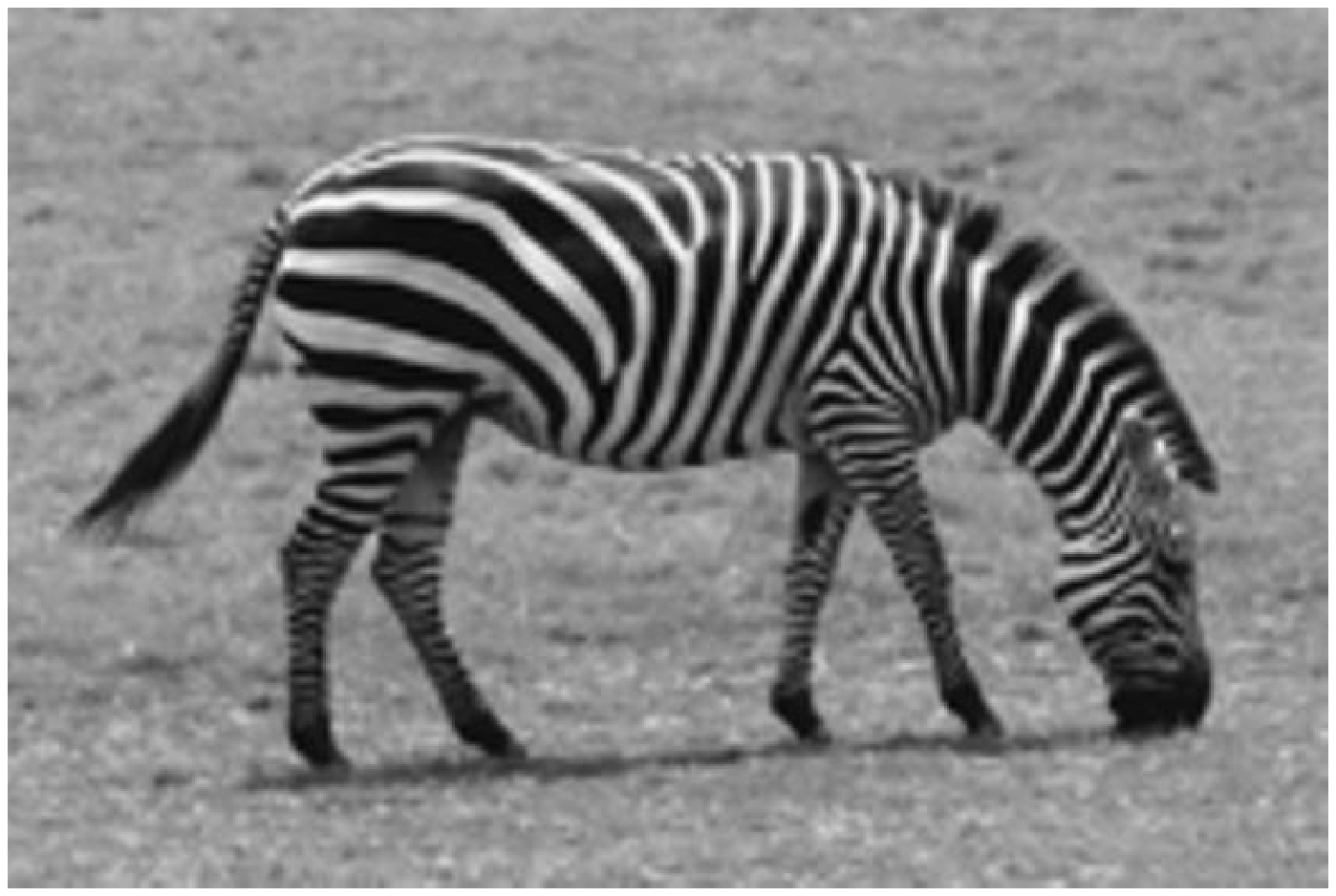}  & 
\cincludegraphics[scale=0.2]{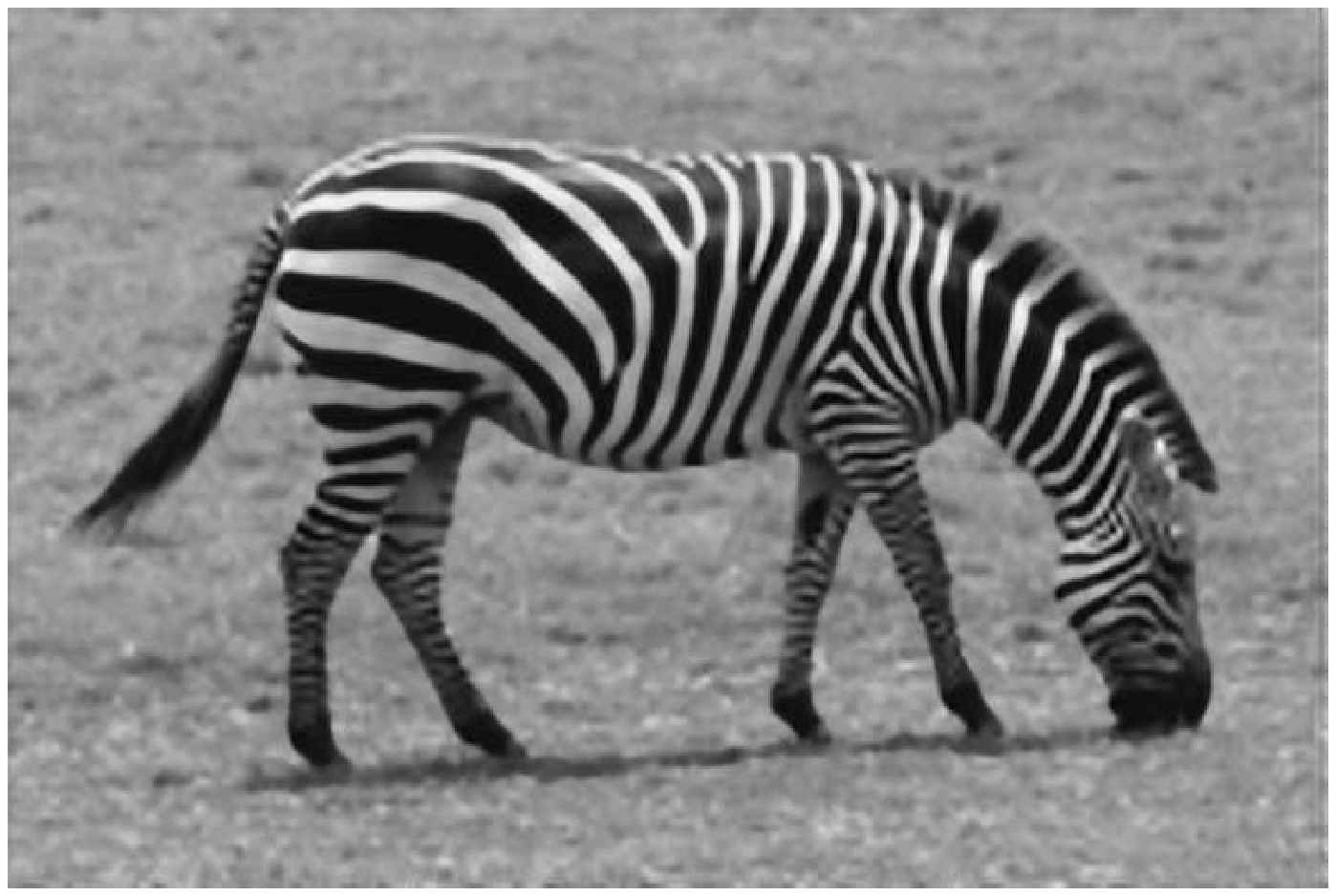}  &
\cincludegraphics[scale=0.2]{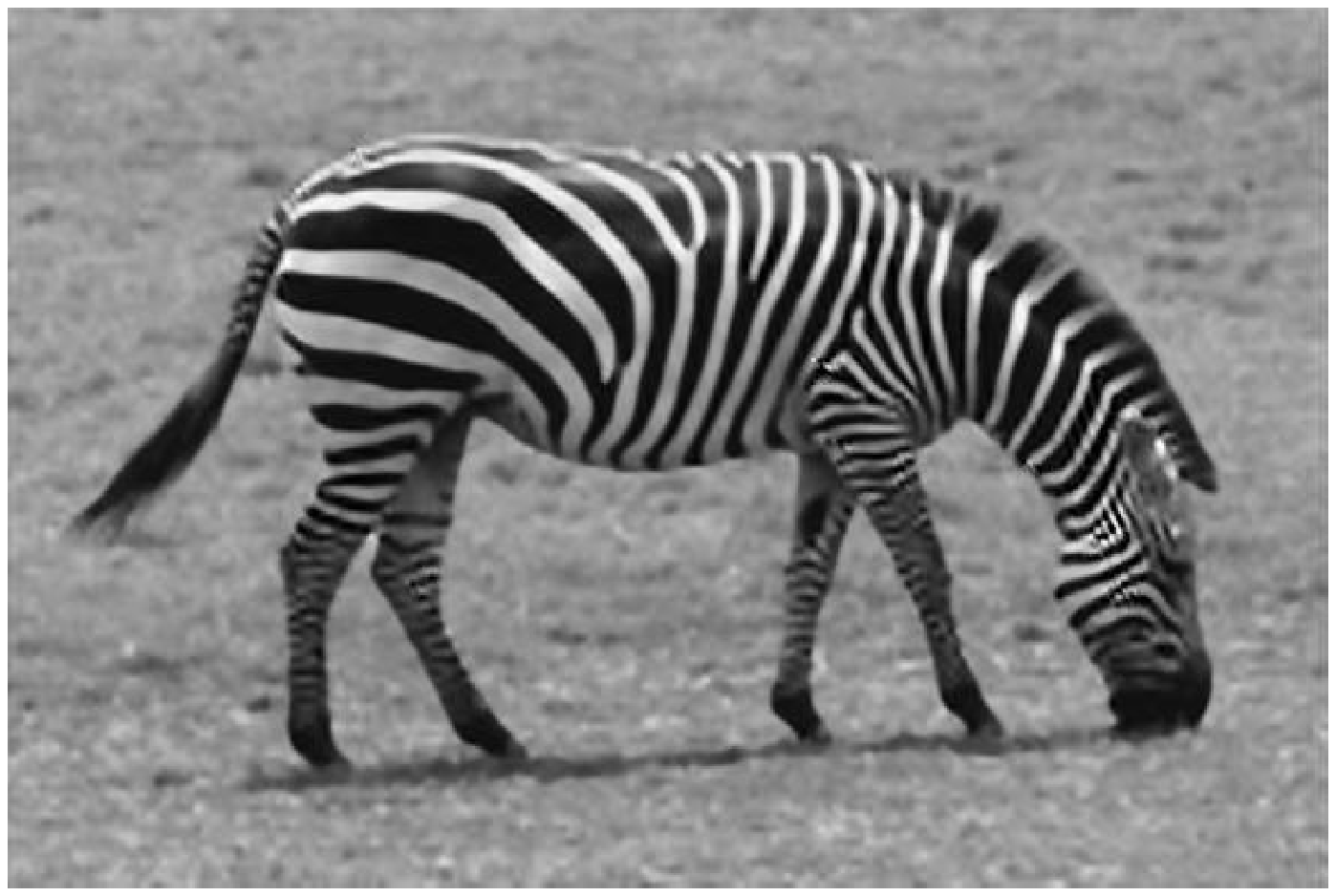} & 
\cincludegraphics[scale=0.2]{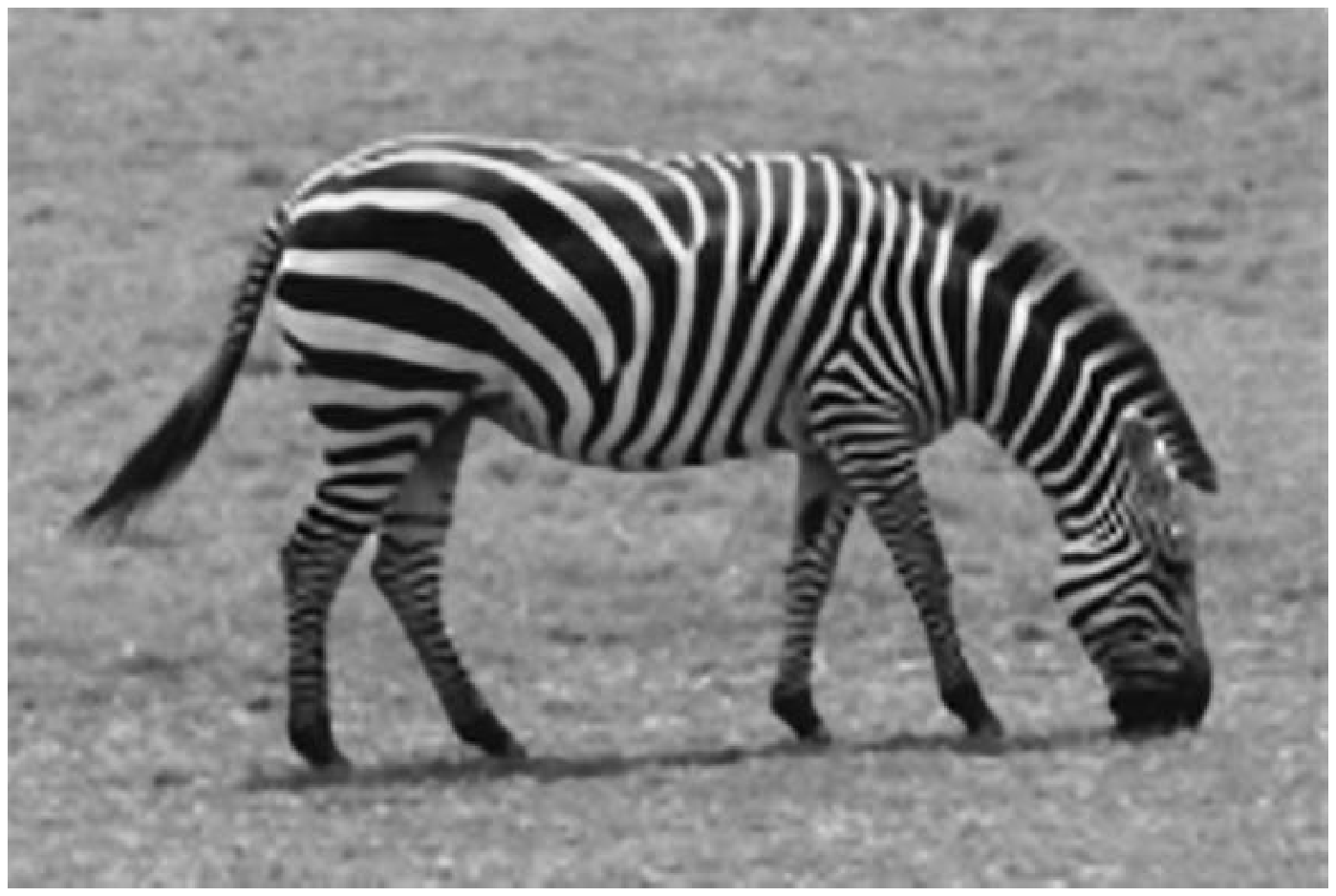} & 
\cincludegraphics[scale=0.2]{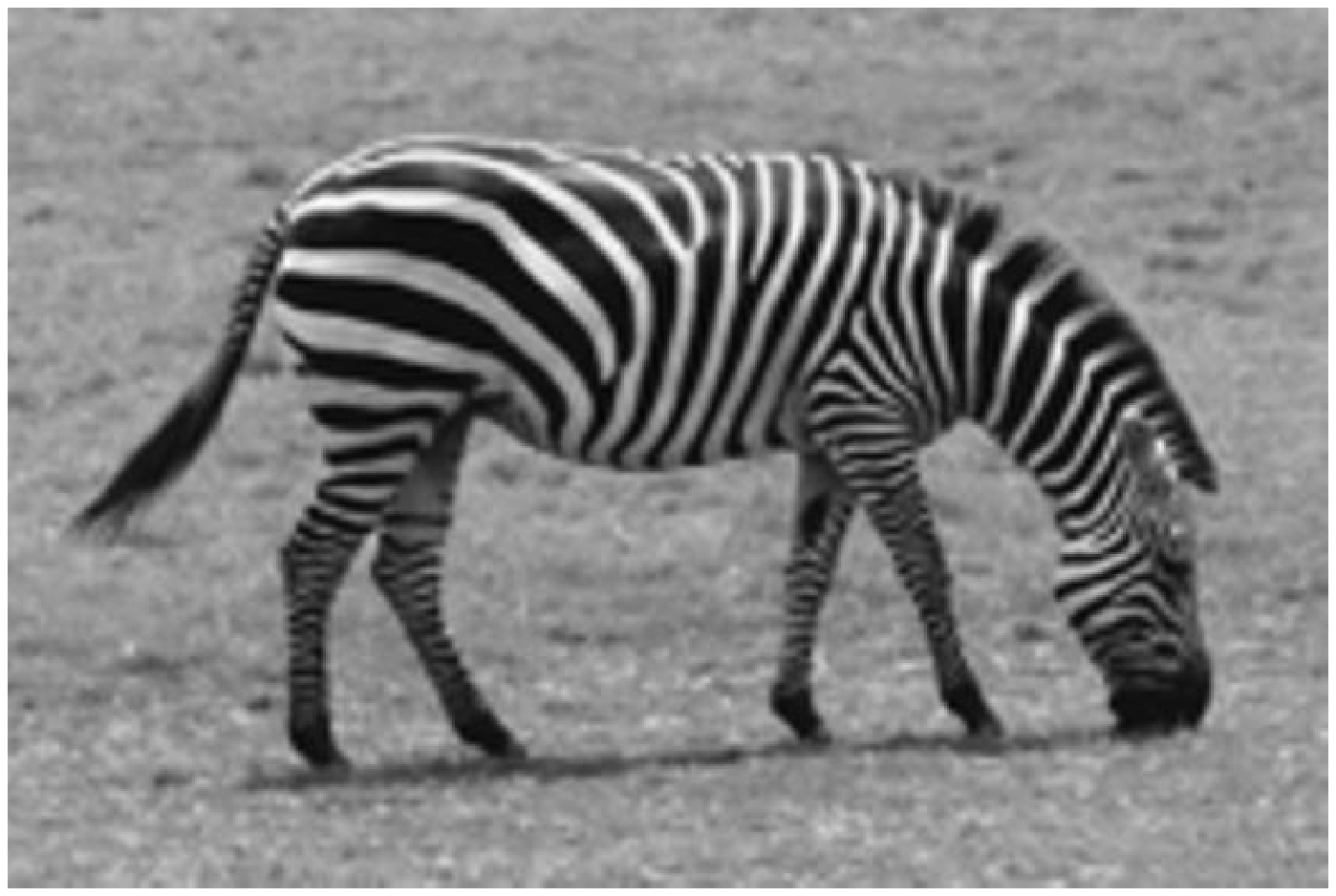}\\

26.7055 & 28.1758 & 27.7594 & 27.8736 & 26.7055 \\
\end{tabular}
\par\end{centering}
\end{table*}

\begin{figure}
\begin{centering}
\subfloat{\begin{centering}
\includegraphics[scale=0.15]{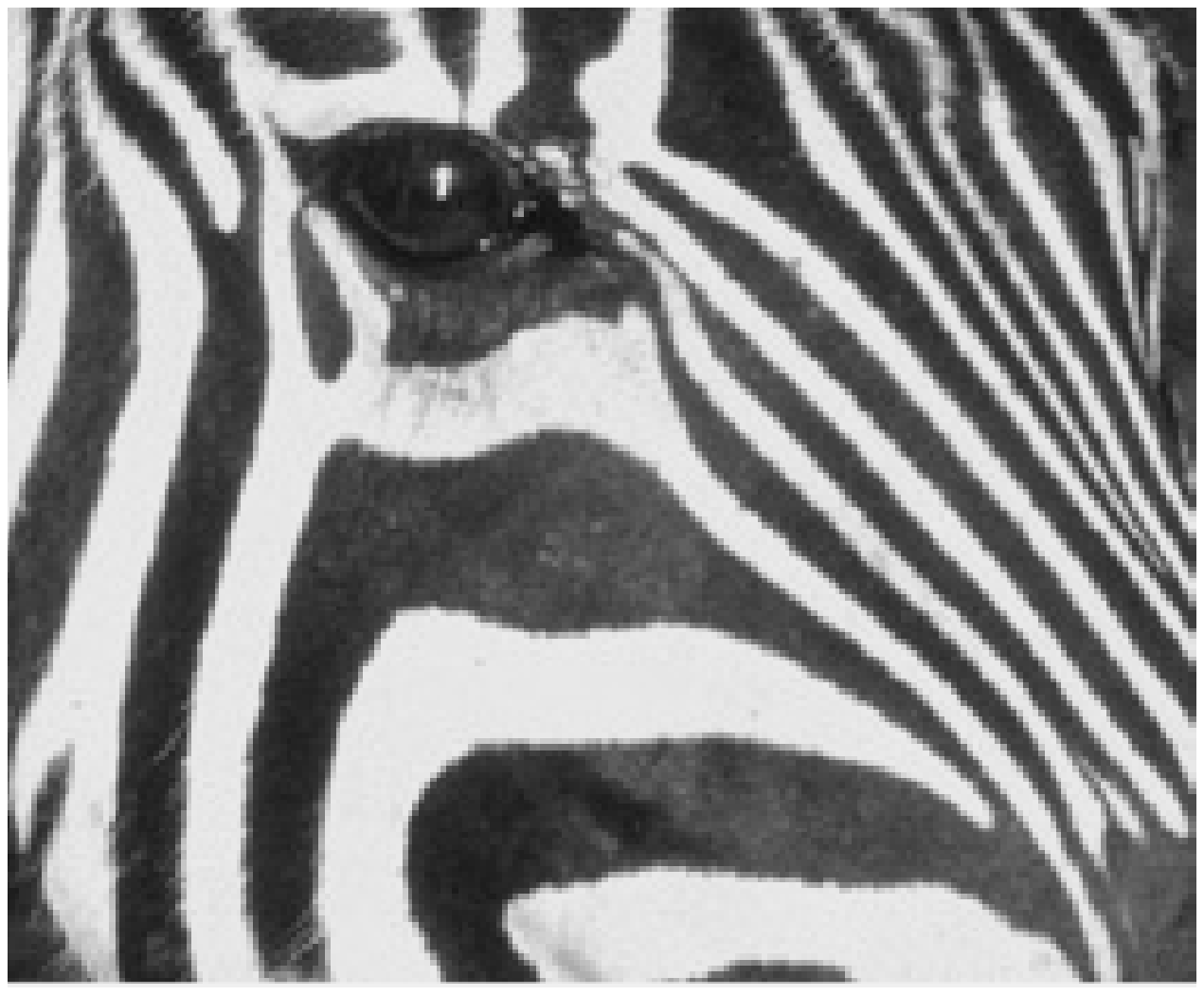}
\par\end{centering}
}\subfloat{\begin{centering}
\includegraphics[scale=0.15]{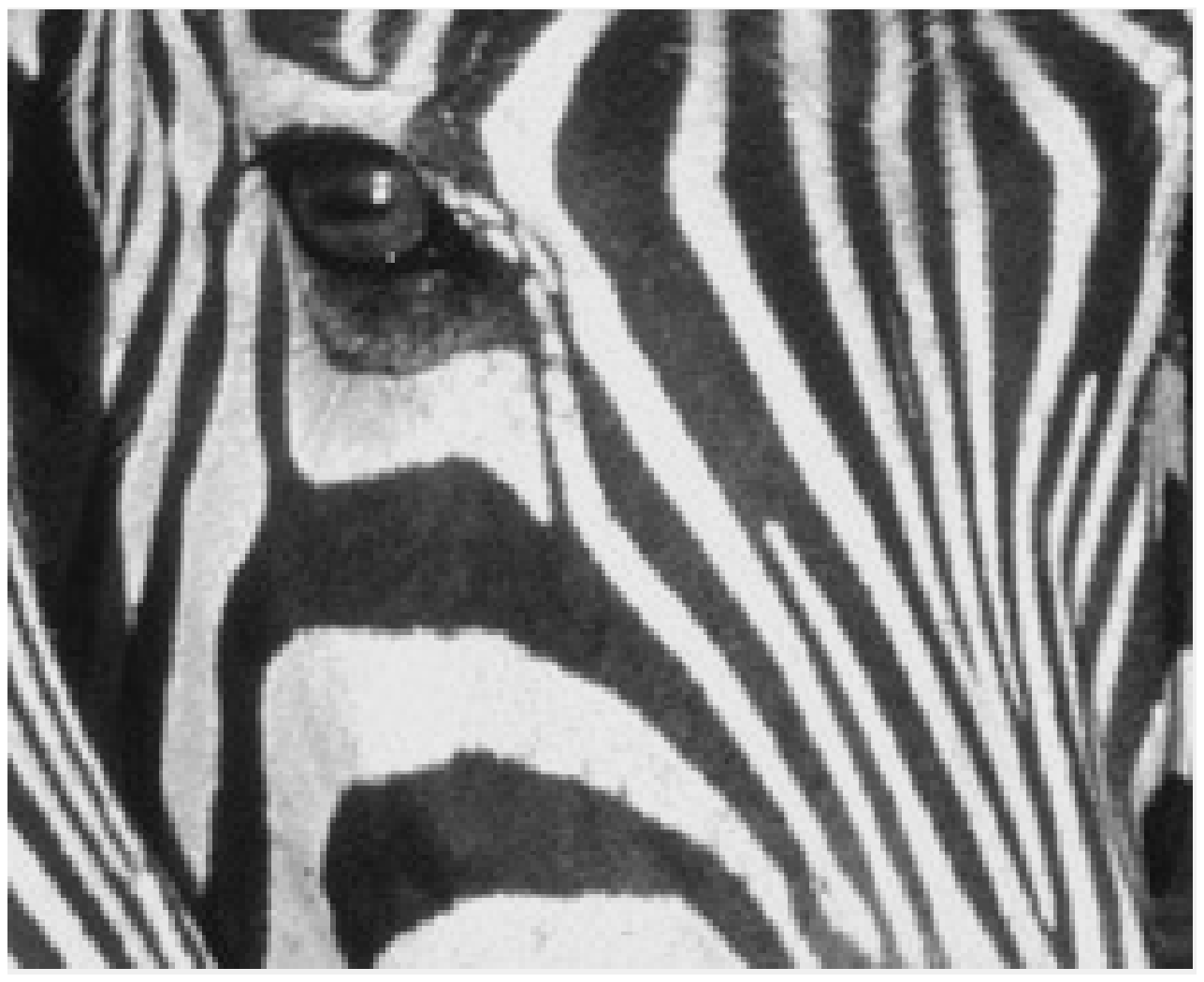}
\par\end{centering}
}
\subfloat{\begin{centering}
\includegraphics[scale=0.15]{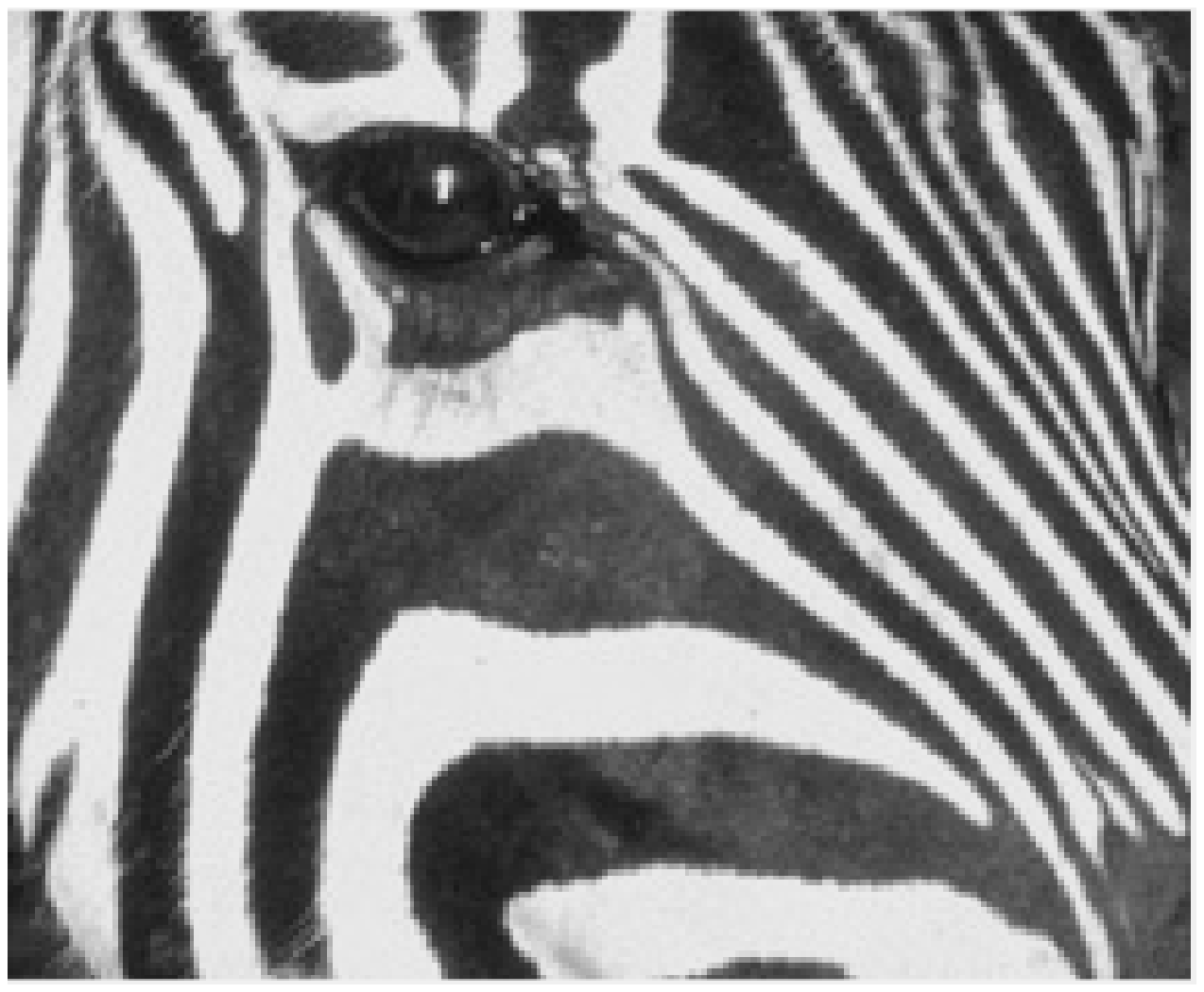}
\par\end{centering}
}
\par\end{centering}
\caption{\label{fig:4}Training images for natural image super-resolution.}
\end{figure}

\begin{figure}
    \centering
    \includegraphics[scale=0.23]{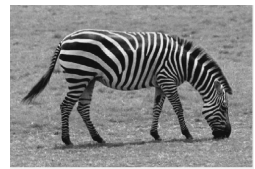}
    \caption{Testing image}
    \label{fig:5}
\end{figure}

According to the simulation results of MNIST, we only use Zeyde et. al.'s approach \cite{zeyde2010single} for the last simulation. The training samples are extracted from the images shown in Fig. \ref{fig:4} which are three images of zebra's head, and the testing image is the figure of a zebra shown in Fig. \ref{fig:5}. Here we set the scale-down ratio to $1/3$ for both training images and testing image. To acquire low resolution training samples, we first scale-down the original training images by $3$, and scale-up them to original size by bicubic interpolation. We then collect $9\times 9$ patches with 6 pixels overlap in either direction for adjacent patches. Instead of directly using raw patches as training samples, we use four 1-D filters to extract the edge information, which are
\begin{align}
    & \bm{f}_{1}=[1,\, 0,\, 0,\, -1],\; & \bm{f}_{2}=\bm{f}_{1}^{T},\nonumber\\
    & \bm{f}_{3}=[1,\, 0,\, 0,\, -2,\, 0,\, 0,\, 1]/2,\; & \bm{f}_{4}=\bm{f}_{3}^{T}. \nonumber
\end{align}
To further dig out the key features, we add PCA after the filtering. After the procedure of preprocessing, we obtain the training data $\bm{Y}$ of size $34\times 11775$. Here we set the dictionary size as $34\times 512$, and train low-resolution dictionary $\bm{D}_{L}$ and sparse coefficients $\bm{X}_{0}$ by different dictionary learning methods. For the high-resolution training images, we first subtract the interpolated images from them to remove low frequencies. Similar to the preprocessing procedure to the low-resolution images, we collect the high-resolution patches of the same size as the low-resolution ones, and we also use the filters to extract derivatives of high-resolution patches. While, instead of adding PCA procedure, we here directly use filtered high-resolution patches along with the the acquired sparse coefficients $\bm{X}_{0}$ to compute the high-resolution dictionary $\bm{D}_{H}$.

Given the interpolation of the training image, we collect the interpolated patches of size $9\times 9$ with 6 pixels overlap. After filtering the patches using the same filters, we also utilize PCA to extract key features. We then do sparse coding to the features by using the trained low-resolution dictionary $\bm{D}_{L}$, producing the sparse coefficient matrix $\bm{X}$. The high-resolution patches are eventually calculated by $\bm{P}_{H}=\bm{D}_{H}\bm{X}+\bm{P}_{M}$, where $\bm{P}_{M}$ is the matrix of interpolated patches. 

The results are presented in Table \ref{tab:2}, where the numbers under the images are the PSNR values comparing the estimated high-resolution images using different dictionary learning methods with the testing image. Table \ref{tab:2} reveals that ROAD outperforms other benchmark algorithms. Although we only use three images as training data, ROAD can still provide resolution promotion of the testing image in both PSNR and the visual effect.

\section{Conclusion}\label{sec:conclusion}
In this paper, we propose a novel dictionary learning algorithm
using rank-one atomic decomposition (ROAD), where the problem is cast as an optimization w.r.t. a single variable. Practically ROAD reduces
the number of tuning parameters required in other benchmark
algorithms. Two ADMM solvers including iADMM and eADMM are adopted to address the optimization problem. We prove the global convergence of iADMM and also show the better performance of eADMM in numerical tests. We compare ROAD with other benchmark algorithms by using both synthetic data and real data. The test results demonstrate that
ROAD outperforms other benchmark algorithms in all the scenarios.

\clearpage

\bibliography{ICASSP2019ref}

\end{document}